\numberwithin{equation}{section}
\newtheorem{theorem}{Theorem}
\newtheorem{lemma}{Lemma}[section]
\newtheorem{corollary}[lemma]{Corollary}
\newtheorem{proposition}[lemma]{Proposition}
\newtheorem{remark}{Remark}
\newcommand{\RN}[1]{\textup{\uppercase\expandafter{\romannumeral#1}}}
\renewcommand\labelenumi{(\arabic{enumi})}
\renewcommand\theenumi\labelenumi
\begin{document}
	\title{Intertwining operator for $AG_2$ Calogero--Moser--Sutherland system}
	\author{Misha Feigin, Martin Vrabec}
\affil{School of Mathematics and Statistics, University of Glasgow, UK \thanks{Emails: Misha.Feigin@glasgow.ac.uk, 2197055V@student.gla.ac.uk}}
%\email{misha.feigin@glasgow.ac.uk }
\date{}

	\maketitle

\begin{abstract}
	We consider generalised Calogero--Moser--Sutherland quantum Hamiltonian $H$ associated with a configuration of vectors $AG_2$ on the plane which is a union of $A_2$ and $G_2$ root systems. The Hamiltonian $H$ depends on one parameter. We find an intertwining operator between $H$ and the Calogero--Moser--Sutherland Hamiltonian for the root system $G_2$. This gives a quantum integral for $H$ of order 6 in an explicit form thus establishing integrability of $H$. 
%The Hamiltonian $H$ is algebraically integrable for integer $m$.
\end{abstract}

\section{Introduction}
    The study of Calogero--Moser--Sutherland (CMS) integrable systems goes back to the works \cite{C'71} -- \cite{M'75}. Olshanetsky and Perelomov introduced generalised CMS systems related to root systems of Weyl groups  \cite{OP1}, which includes the non-reduced root system $BC_n$. The corresponding Hamiltonians are closely related to radial parts of Laplace-Beltrami operators on symmetric spaces \cite{OP2}, \cite{BPF}. In the case of root system $G_2$ the rational version of the corresponding CMS system was considered earlier by Wolfes \cite{W}. A uniform proof of integrability for all root sysyems via trigonometric version of Dunkl operators was given by Heckman in \cite{H}. Another more involved proof was provided earlier by Opdam in \cite{Op}. In the case of integer values of coupling parameters these CMS systems admit additional quantum integrals and they are algebraically integrable as it was established by Chalykh, Styrkas and Veselov in \cite{CSV} (see also \cite{CV}).
    
    It was found by Chalykh, Veselov and one of the authors in \cite{CFV'96}, \cite{CFV'98} that there are integrable generalisations of CMS type quantum systems which correspond to special configurations of vectors generalising root systems. Examples of such configurations include deformations of the root systems $A_n$ and $C_n$. These examples are related to symmetric superspaces, \cite{S'01}--\cite{SV2}, and to special representations of Cherednik algebras \cite{F'12}. The corresponding configurations of vectors have to satisfy so-called locus conditions \cite{CFV'99}. It is expected that there are very few such configurations, but they are not classified yet. We refer to \cite{Ch07} for a survey of results on locus configurations and integrability of rational, trigonometric and elliptic generalised CMS systems (see also \cite{CEO} and reference therein for the elliptic case). 

%Furthermore, it was established by Chalykh in \cite{Ch07} that locus configura

    The work \cite{FF} of Fairley and one of the authors deals with a class of   trigonometric locus configurations on the plane. In the process of classification of such configurations a new locus configuration $AG_2$ was found in \cite{FF} (see also \cite{F} where this configuration appears as well in different  but related context of WDVV equations). This configuration of vectors with multiplicities depends on one integer parmeter $m$. Being a locus configuration, it follows from the general results of Chalykh \cite{Ch07} that the corresponding generalised CMS operator $H$ has an intertwining relation with the Laplacian. This implies integrability and, moreover, algebraic integrability of the Hamiltonian $H$ with $m\in \mathbb Z$. The latter means existence of some additional quantum integrals;  see \cite{CV}, \cite{CSV} for more details on algebraic integrability including precise definition and examples of intertwining operators.

%one may expect that the corresponding CMS system is quantum integrable, however, a proof is still required. 

%establish this result by showing that the corresponding Hamiltonian satisfies an intertwining relation with an (integrable) CMS Hamiltonian associated with the root system $G_2$. 
    
    Let us describe the configuration of vectors $AG_2$ in detail, following \cite{FF}. This is a non-reduced configuration consisting of the union of root systems $G_2$ and $A_2$. A positive half of this configuration is shown on the figure below. 
	
	\begin{center}
		\includegraphics[scale=0.6]{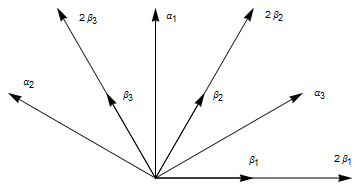}
	\end{center}
	
	The short vectors from the root system $G_2$ are denoted as $\pm \beta_1$, $\pm \beta_2$, $\pm \beta_3$, and the long vectors from the root system $G_2$ are denoted as $\pm \alpha_1$, $\pm \alpha_2$, $\pm \alpha_3$.
Let $\langle \cdot, \cdot \rangle$ be the standard Euclidean inner product on the plane. Then the ratio
$\frac{\langle \alpha_i, \alpha_i \rangle}{\langle \beta_i, \beta_i \rangle} = 3$ for any $i$.
 The vectors from the  additional root system $A_2$ are $\pm 2\beta_1$, $\pm 2\beta_2$, $\pm 2\beta_3$. Note that the numbering of $\alpha$'s is chosen in such a way that $ \langle \alpha_i, \beta_i \rangle = 0$ for all $i = 1, 2, 3$.
		
	Note that $$2 \beta_2 - \alpha_1 = \alpha_1 - 2 \beta_3 = \beta_1, \,\, 2 \beta_3 - \alpha_2 = \alpha_2 + 2 \beta_1 = \beta_2, \,\, 2 \beta_2 - \alpha_3 = \alpha_3 - 2 \beta_1 = \beta_3, $$ and $$\alpha_3 - \beta_2 = \beta_3 - \alpha_2 = \beta_2 - \beta_3 = \beta_1, \, \text{and } \alpha_1 - \beta_2 = \beta_3 \,.$$
	The configuration $AG_2$ is invariant under the $G_2$ Weyl group action and it belongs to a two-dimensional lattice spanned by $\beta_1$ and $\alpha_2$. However, it is not a root system since, for instance, $\beta_2 - \beta_1 = k \cdot 2 \beta_3 \,$, where $k = \frac12 \notin \mathbb{Z} \,$.
  %  \\[10pt]
	
	Let $H_0$ be the CMS Hamiltonian for the root system $G_2$
	with multiplicities for the long and short roots $m$ and $3m$, respectively.
	And let $H$ be the Hamiltonian of the generalised CMS system under study, associated to the above collection of vectors $AG_2$ where
	$\alpha_i $, $\beta_i$ and $2 \beta_i$ have multiplicities $m$, $3m$ and $1$, respectively, where we use conventions for non-reduced systems coming from theory of symmetric spaces (see e.g. \cite{SV1}). 
    More precisely,
		\begin{equation*}
			H_0 = -\Delta + \sum_{i=1}^{3} \big( v_i(x) + u_i(x) \big) \,,
		\end{equation*}
		
		\begin{equation} \label{eq: AG2 Hamiltonian}
			H = -\Delta + \sum_{i=1}^{3} \big( v_i(x) + \widetilde{u}_i(x)  \big) \,,
		\end{equation}
    where $\Delta = \frac{\partial^2}{\partial x_1^2} + \frac{\partial^2}{\partial x_2^2}$ is the Laplacian, and for $x = (x_1, x_2) \in \mathbb{C}^2$ we have
    	\begin{alignat}{3}
   			&v_i(x) &&= \frac{m (m + 1) \langle \alpha_i, \alpha_i \rangle }{\sinh^2\langle \alpha_i, x \rangle}, \nonumber\\
   			%&&= \frac{6 m (m + 1)}{\sinh^2\langle \alpha_i, x \rangle} \,, \\
   			&u_i(x) &&= \frac{3 m (3m + 1)\langle \beta_i, \beta_i \rangle }{\sinh^2\langle \beta_i, x \rangle} 
   			%&&= \frac{6 m (3 m + 1)}{\sinh^2\langle \beta_i, x \rangle} 
\label{eq: u's}\,,
    	\end{alignat}
    and 
    \begin{align*}
   		\quad \widetilde{u}_i(x)  & = \frac{9 m (m + 1) \langle \beta_i, \beta_i \rangle }{\sinh^2\langle \beta_i, x \rangle} + \frac{2 \langle 2\beta_i, 2\beta_i \rangle}{\sinh^2\langle 2\beta_i, x \rangle} \nonumber \\
   		& = \frac{(3 m + 1) (3m + 2) \langle \beta_i, \beta_i \rangle }{\sinh^2\langle \beta_i, x \rangle} - \frac{2 \langle \beta_i, \beta_i \rangle}{\cosh^2\langle \beta_i, x \rangle}. \nonumber 
   		%&= \frac{2 (3 m + 1) (3m + 2)}{\sinh^2\langle \beta_i, x \rangle} - \frac{4}{\cosh^2 \langle \beta_i, x \rangle} \,.
    \end{align*}
    	
    In addition, we introduce the following notation for the difference $\widetilde{u}_i(x) - u_i(x)$:
   		\begin{equation} \label{eq: uhat's}
    		\widehat{u}_i(x) := \widetilde{u}_i(x) - u_i(x) =  \frac{2 (3 m + 1) \langle \beta_i, \beta_i \rangle}{\sinh^2\langle \beta_i, x \rangle} - \frac{2 \langle \beta_i, \beta_i \rangle}{\cosh^2\langle \beta_i, x \rangle}. 
%= \frac{4(3 m + 1)}{\sinh^2\langle \beta_i, x \rangle} - \frac{4}{\cosh^2\langle \beta_i, x \rangle} \,.
     	\end{equation} 
 %    \\[10pt]

     Let $\partial_i$ denote the partial derivative $\frac{\partial}{\partial x_i}$.
     For any vector (or a vector field) $\gamma = \big(\gamma^{(1)}, \gamma^{(2)} \big) \in \mathbb{C}^2$, we will write $\partial_\gamma$ for the directional derivative operator $\gamma^{(1)} \partial_1 + \gamma^{(2)} \partial_2$.
     In particular, if $\phi$ is a scalar field on the plane and $\nabla(\phi) = \big(\partial_1(\phi), \partial_2(\phi) \big)$ is its gradient, then by $\partial_{\nabla(\phi)}$ we will mean $\partial_1(\phi) \partial_1 + \partial_2(\phi) \partial_2 $.

In this paper we establish an intertwining relation between the Hamiltonian $H$ and the integrable Hamiltonian $H_0$ of the CMS system associated with the root system $G_2$. This relation is valid for any value of the parameter $m$  which is allowed to be non-integer. This leads to integrability of $H$ for any $m$ thus generalizing integrability for integer $m$ known from \cite{FF}, \cite{Ch07}. We also find the intertwining operator $\mathcal D$ of order 3 in an explicit form. This, in turn, gives quantum integral of $H$ of order 6. We note that direct application of results of \cite{Ch07} in the case of integer $m$ leads to a higher order  intertwiner and a higher order integral of $H$. The degree 6 for the integral of $H$ is expected to be minimal possible. Indeed, it follows from \cite{KT} that for generic $m$ an independent integral for the rational version of $H$ with constant highest term has to be of degree at least 6 since such highest term should be $G_2$-invariant.

   %  We will show that the system described by the Hamiltonian \eqref{eq: AG2 Hamiltonian} is quantum integrable by finding a third-order differential operator $\mathcal{D}$ which intertwines $H$ and $H_0$. 

The intertwining operator $\mathcal{D}$ has the form
     	\begin{equation} \label{eq: D}
     	     \mathcal{D} = \partial_{\beta_1}\partial_{\beta_2}\partial_{\beta_3} + \sum_{\sigma \in A_3} f_{_{\sigma(1)}} \partial_{\beta_{\sigma(2)}} \partial_{\beta_{\sigma(3)}} + \sum_{i=1}^{3} g_i \partial_{\beta_i} + h  \,,
     	\end{equation}
     where $A_3 = \{id, \, (1, 2, 3), \,(1, 3, 2)\} $ is the alternating group on $3$ elements, 
     and $f_i$, $g_i$ ($i = 1,2,3$) and $h$ are some functions which we specify explicitly. We will use the notation $\sum_{\sigma}$ throughout the paper as a shorthand for the cyclic sum $\sum_{\sigma \in A_3}$.
     To be more precise, we will prove the following main theorem.
     \begin{theorem} \label{thm: intertwining relation}
     	There exists a third-order differential operator $\mathcal{D}$ of the form \eqref{eq: D} such that
     	\begin{equation}\label{eq: intertwining relation}
     	    H \mathcal{D} = \mathcal{D} H_0.
     	\end{equation}
     \end{theorem}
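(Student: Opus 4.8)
The plan is to treat \eqref{eq: intertwining relation} as a single identity between differential operators and to determine the coefficient functions $f_i,g_i,h$ by matching terms of equal order in $\partial_1,\partial_2$. Writing $H=-\Delta+V$ and $H_0=-\Delta+V_0$ with $V=\sum_{i=1}^3(v_i+\widetilde{u}_i)$, $V_0=\sum_{i=1}^3(v_i+u_i)$, and noting $V-V_0=\sum_{i=1}^3\widehat{u}_i=:W$, the relation \eqref{eq: intertwining relation} is equivalent to
\begin{equation*}
H\mathcal{D}-\mathcal{D}H_0=-[\Delta,\mathcal{D}]+V\mathcal{D}-\mathcal{D}V_0=0.
\end{equation*}
Since the principal part $\partial_{\beta_1}\partial_{\beta_2}\partial_{\beta_3}$ of $\mathcal{D}$ has constant coefficients we have $[\Delta,\partial_{\beta_1}\partial_{\beta_2}\partial_{\beta_3}]=0$, so the terms of order $5$ cancel and no terms of order $4$ arise; thus $H\mathcal{D}-\mathcal{D}H_0$ is a differential operator of order at most $3$. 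I would then use the elementary identity $[\Delta,a]=(\Delta a)+2\partial_{\nabla(a)}$ for a multiplication operator $a$ to expand everything and require the homogeneous components of orders $3,2,1,0$ to vanish separately; each component is a polynomial in the symbol variable $\xi$ whose vanishing yields equations for the unknowns.

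At order $3$ the only contributions are $W\,\partial_{\beta_1}\partial_{\beta_2}\partial_{\beta_3}$ from $V\mathcal{D}-\mathcal{D}V_0$ and $-2\sum_\sigma\partial_{\nabla(f_{\sigma(1)})}\partial_{\beta_{\sigma(2)}}\partial_{\beta_{\sigma(3)}}$ from $-[\Delta,\mathcal{D}]$, so the order-$3$ condition reads
\begin{equation*}
W\,\partial_{\beta_1}\partial_{\beta_2}\partial_{\beta_3}=2\sum_\sigma\partial_{\nabla(f_{\sigma(1)})}\partial_{\beta_{\sigma(2)}}\partial_{\beta_{\sigma(3)}}.
\end{equation*}
Passing to symbols and restricting to $\langle\beta_i,\xi\rangle=0$ forces $\nabla(f_i)$ to be parallel to $\beta_i$; writing $\nabla(f_i)=\tfrac12\,\widehat{u}_i\,\beta_i$ then solves the equation, since the three products $\partial_{\beta_{\sigma(1)}}\partial_{\beta_{\sigma(2)}}\partial_{\beta_{\sigma(3)}}$ all equal $\partial_{\beta_1}\partial_{\beta_2}\partial_{\beta_3}$ and $2\cdot\tfrac12\sum_{i=1}^3\widehat{u}_i=W$. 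As $\widehat{u}_i$ depends only on $\langle\beta_i,x\rangle$ this is integrable, and integrating \eqref{eq: uhat's} gives $f_i=-\langle\beta_i,\beta_i\rangle\bigl((3m+1)\coth\langle\beta_i,x\rangle+\tanh\langle\beta_i,x\rangle\bigr)$ up to an additive constant.

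With the $f_i$ fixed, the order-$2$ component becomes a quadratic form in $\xi$ involving the known quantities $W f_i$, $\Delta f_i$, $\partial_{\beta_i}V_0$ together with the unknown gradients $\nabla(g_i)$, which enter through $-2\sum_{i=1}^3\partial_{\nabla(g_i)}\partial_{\beta_i}$. Solving the resulting linear system expresses each $\nabla(g_i)$ as an explicit combination of the $\beta_j$ with coefficients built from the already-determined data, and the order-$1$ component does the same for $\nabla(h)$. At each of these two steps the vector field produced for $\nabla(g_i)$, respectively $\nabla(h)$, must be closed for $g_i$, respectively $h$, to exist; I expect these integrability conditions to reduce, after using the linear relation $\beta_1=\beta_2-\beta_3$ and the orthogonality $\langle\alpha_i,\beta_i\rangle=0$, to trigonometric identities that hold precisely because of the special $AG_2$ geometry. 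Carrying out these integrations yields closed forms for $g_i$ and $h$.

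Finally, once $f_i,g_i,h$ are all determined, the order-$0$ component of $H\mathcal{D}-\mathcal{D}H_0$ is a single scalar function which must vanish identically, with no free parameters remaining. Verifying this identity, together with the closedness conditions at orders $2$ and $1$, is the main obstacle: it amounts to a family of nontrivial trigonometric identities in the linear forms $\langle\beta_i,x\rangle$ and $\langle\alpha_i,x\rangle$ that encode the locus conditions for $AG_2$. I would handle these by expressing all potentials and their derivatives through $\coth$ and $\tanh$ of the forms $\langle\beta_i,x\rangle$, using the addition relations among the $\alpha_i,\beta_i$ recorded in the introduction (for instance $2\beta_2-\alpha_1=\beta_1$ and $\alpha_1-\beta_2=\beta_3$) to eliminate the $\alpha$-dependence, and reducing everything to a common set of elementary functions whose coefficients can then be checked to cancel. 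Establishing this cancellation completes the construction of $\mathcal{D}$ and hence proves Theorem \ref{thm: intertwining relation}.
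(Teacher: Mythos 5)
Your overall strategy is the same as the paper's: expand $H\mathcal{D}-\mathcal{D}H_0$ via $[\Delta,a]=(\Delta a)+2\partial_{\nabla(a)}$, note that orders $5$ and $4$ cancel automatically, and force the components of orders $3,2,1,0$ to vanish. Your order-$3$ analysis is correct and complete: the parallelism argument for $\nabla(f_i)$ and the integration giving $f_i=-\langle\beta_i,\beta_i\rangle\big((3m+1)\coth\langle\beta_i,x\rangle+\tanh\langle\beta_i,x\rangle\big)$ reproduce the paper's \eqref{def: fs}. The genuine gap is that from order $2$ downwards your text is a program, not a proof. Three things are asserted but never established: (i) that the order-$2$ system for the $\nabla(g_i)$ is solvable by closed (curl-free) vector fields --- note that pointwise this system is underdetermined (six unknown components, three equations from matching quadratic forms in two variables), so ``solving'' it really means exhibiting a decomposition such as the paper's $g_i=g_i^{(\RN{1})}+g_i^{(\RN{2})}+g_i^{(\RN{3})}$ and verifying Lemma \ref{lemma: properties of gs part 1}; (ii) that the order-$1$ system for $\nabla(h)$ is likewise integrable --- here the solution necessarily contains the term $h^{(\RN{4})}$ built from $X=\omega^2(\sinh\beta_1\sinh\beta_2\sinh\beta_3)^{-1}$ and $Y=\omega^2(\sinh 2\beta_1\sinh 2\beta_2\sinh 2\beta_3)^{-1}$, which no naive term-matching suggests and whose gradient structure is exactly Corollary \ref{cor: nabla of X}; and (iii) that the final order-$0$ scalar vanishes. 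You write that this scalar ``must vanish identically, with no free parameters remaining,'' but that is precisely what has to be proved: if it did not vanish, no operator of the form \eqref{eq: D} would exist and the theorem would be false, so its vanishing cannot be inferred from the absence of remaining freedom.

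These cancellations are not routine bookkeeping; they are the mathematical content of the theorem, and they hold only because of the special geometry of $AG_2$. In the paper they occupy all of Section \ref{prelim} (Lemmas \ref{lemma: sum of products of coths}--\ref{lemma: zeroth order identity 3}) and the second half of Section \ref{section: intertwining relation}: the order-$1$ cancellation rests on Lemma \ref{lemma: first and zeroth order} (which combines the four identities of Lemma \ref{lemma: first order identity}) together with Corollary \ref{cor: nabla of X}, and the order-$0$ identity is split into the pieces $A,B,C,D$ of Lemma \ref{lemma: zero order rearranged}, each reduced separately to expressions in $X$ and $Y$ (Lemmas \ref{lemma: A}--\ref{lemma: D1}) before the total collapses to zero. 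Your plan to express everything through $\coth$ and $\tanh$ of the forms $\langle\beta_i,x\rangle$ and check that the coefficients cancel is in effect a promise to rediscover this entire apparatus; until those identities are actually proved (or replaced by a structural argument --- none is available, since, as the paper notes, trigonometric Dunkl operators with good properties are not known for $AG_2$), you have reduced Theorem \ref{thm: intertwining relation} to an unproven family of trigonometric identities rather than proved it.
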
 
     
     We obtain quantum integrability of $H$ and a quantum integral of motion as a direct corollary by making use of a general statement from \cite{Ch}.
     \begin{theorem} \label{mainthint}
     	Let $\mathcal{D}$ satisfy \eqref{eq: intertwining relation} and let $\mathcal{D}^{*}$ be the formal adjoint of $\mathcal{D}$. 
     	Let $I$ be any differential operator such that the commutator $[I, H_0] = 0$.
     	Then $\mathcal{D} I \mathcal{D}^{*}$ commutes with~$H$. In particular, $[\mathcal{D}\mathcal{D}^{*}, H] = 0$.
     \end{theorem}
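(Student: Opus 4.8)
The plan is to derive the statement purely formally from the intertwining relation \eqref{eq: intertwining relation} together with its formal adjoint, without ever using the explicit form \eqref{eq: D} of $\mathcal{D}$. The crucial observation is that both $H$ and $H_0$ are formally self-adjoint. Indeed, each is of the form $-\Delta$ plus multiplication by a potential; since $\partial_i^{*} = -\partial_i$ we have $(-\Delta)^{*} = -\Delta$, while any multiplication operator equals its own formal adjoint. Hence $H^{*} = H$ and $H_0^{*} = H_0$, and the whole argument rests on this fact.

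First I would take the formal adjoint of both sides of \eqref{eq: intertwining relation}. Using that the adjoint reverses the order of composition together with the self-adjointness just noted, one gets
\begin{equation*}
    \mathcal{D}^{*} H = \mathcal{D}^{*} H^{*} = (H \mathcal{D})^{*} = (\mathcal{D} H_0)^{*} = H_0^{*} \mathcal{D}^{*} = H_0 \mathcal{D}^{*}.
\end{equation*}
Thus $\mathcal{D}$ and $\mathcal{D}^{*}$ intertwine $H$ and $H_0$ in opposite directions: $H\mathcal{D} = \mathcal{D}H_0$ and $\mathcal{D}^{*}H = H_0\mathcal{D}^{*}$.

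Next I would simply chain these two relations, inserting $I$ in between. For any differential operator $I$ with $[I, H_0] = 0$,
\begin{align*}
    H \mathcal{D} I \mathcal{D}^{*}
    &= \mathcal{D} H_0 I \mathcal{D}^{*}
    = \mathcal{D} I H_0 \mathcal{D}^{*}
    = \mathcal{D} I \mathcal{D}^{*} H,
\end{align*}
where the first equality uses \eqref{eq: intertwining relation}, the second uses $H_0 I = I H_0$, and the third uses $H_0 \mathcal{D}^{*} = \mathcal{D}^{*} H$. This says exactly that $[\mathcal{D} I \mathcal{D}^{*}, H] = 0$. Taking $I$ to be the identity operator, which trivially commutes with $H_0$, then yields the particular case $[\mathcal{D}\mathcal{D}^{*}, H] = 0$.

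There is essentially no obstacle here once the self-adjointness of $H$ and $H_0$ is recognised; the computation is a short formal manipulation. The only point warranting a line of justification is the passage to the formal adjoint, namely that $(AB)^{*} = B^{*}A^{*}$, that $-\Delta$ is fixed, and that the multiplication operators by the potentials are self-adjoint — all of which are standard for formal (transpose) adjoints of differential operators, and this is precisely the setting of the general statement from \cite{Ch} that we invoke.
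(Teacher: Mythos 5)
Your proposal is correct and follows essentially the same route as the paper: formal conjugation of $H\mathcal{D}=\mathcal{D}H_0$ (using self-adjointness of $H$ and $H_0$) to obtain $\mathcal{D}^{*}H=H_0\mathcal{D}^{*}$, followed by the chain $H\mathcal{D}I\mathcal{D}^{*}=\mathcal{D}H_0I\mathcal{D}^{*}=\mathcal{D}IH_0\mathcal{D}^{*}=\mathcal{D}I\mathcal{D}^{*}H$. The only difference is that you spell out explicitly why $H^{*}=H$ and $H_0^{*}=H_0$, which the paper leaves implicit in the phrase ``formal conjugation''.
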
 
 
 	Indeed, formal conjugation of the relation \eqref{eq: intertwining relation} gives $\mathcal{D}^{*} H = H_0 \mathcal{D}^{*}$. Hence
 	$$H \mathcal{D} I \mathcal{D}^{*} = \mathcal{D} H_0 I \mathcal{D}^{*}= \mathcal{D} I H_0 \mathcal{D}^{*} = \mathcal{D} I \mathcal{D}^{*} H.$$
 	Note that for integer $m$ the operator $H_0$ is algebraically integrable as the commutative ring of quantum integrals is larger than the ring of $G_2$--invariants \cite{CV}, \cite{CSV}. Therefore this gives a way to see algebraic integrability of the operator $H$ for integer $m$ (see also \cite{Ch07}). We also note that in the rational limit the operator $\mathcal{D} \mathcal{D}^{*}$ reduces to a quantum integral for the rational CMS system associated with the root system $G_2$ with multiplicities $m$ and $3m+1$ for the long and short roots, respectively. 

The structure of the paper is as follows. We collect some preliminary trigonometric identities associated with vectors from the configuration $AG_2$ in Section \ref{prelim}. We introduce all the coefficients of the intertwining operator \eqref{eq: D} in Section \ref{section: D}, where we also establish some preliminary results on these coefficients. In Section \ref{section: intertwining relation} we prove the main Theorem \ref{thm: intertwining relation} on the intertwining relation. We present results on the rational limit in Section \ref{rational_limit}. We outline some future directions in Section~\ref{concluderem}.
     
\vspace{5mm}

{\bf Acknowledgments.} M.F. is  very grateful to Oleg Chalykh for useful discussions and comments. We are grateful to the London Mathematical Society for the support through Undergraduate Research Bursary scheme which enabled us to carry out the main part of the work in summer 2018. M.V. also acknowledges matched funding from the School of Mathematics and Statistics, University of Glasgow.

\section{Preliminary trigonometric identities}
\label{prelim}

    In this section we collect some trigonometric identities involving vectors from the configuration $AG_2$. We will use these identities later in the proof of the intertwining relation \eqref{eq: intertwining relation} in Section~\ref{section: intertwining relation}. 

One can choose a coordinate system where the vectors take the form 
	$\beta_1 = \omega \big(\sqrt{2}, 0 \big)$, $\beta_2 = \omega \big(\frac{\sqrt{2}}{2}, \frac{\sqrt{6}}{2} \big)$, $\beta_3 = \omega\big({-} \frac{\sqrt{2}}{2}, \frac{\sqrt{6}}{2} \big)$,
	$\alpha_1 = \omega\big(0, \sqrt{6} \big)$, $\alpha_2 = \omega\big({-}\frac{3\sqrt{2}}{2}, \frac{\sqrt{6}}{2} \big)$ and $\alpha_3 = \omega \big(\frac{3\sqrt{2}}{2}, \frac{\sqrt{6}}{2} \big)$ for some non-zero $\omega \in {\mathbb C}$. We will use inner products between the vectors but not the specific coordinates  of the vectors.

\begin{remark} \label{rem: variants of identities}
	In most cases we state only one particular form of each identity, but other variants can be obtained by rotating or scaling the vectors.
	More precisely, the relevant transformations will be the replacement of $\beta_i$ with $2 \beta_i \,$, and the two rotations by $\frac{\pi}{3}$, clockwise and anti-clockwise. These rotations can alternatively be defined by the following replacement rules: 
	$ \beta_1 \rightarrow {-}\beta_3 \,, \, \beta_2 \rightarrow \beta_1 \,, \, \beta_3 \rightarrow \beta_2 \,, \, \alpha_1 \rightarrow \alpha_3 \,, \, 
	\alpha_2 \rightarrow \alpha_1 \,, \, \alpha_3 \rightarrow {-}\alpha_2 \,,$ and
	$ \beta_1 \rightarrow \beta_2 \,, \, \beta_2 \rightarrow \beta_3 \,, \, \beta_3 \rightarrow {-}\beta_1 \,, \, \alpha_1 \rightarrow \alpha_2 \,, \, 
	\alpha_2 \rightarrow {-}\alpha_3 \,, \, \alpha_3 \rightarrow \alpha_1 \,,$ respectively. 
\end{remark}

The vectors $\alpha_i$ and $\beta_i$ in the configuration $AG_2$ satisfy the following trigonometric identities, where we omit the argument $x = \big( x_1, x_2 \big)$. Thus we write $\coth \beta_i$ for $\coth \langle \beta_i, x \rangle$, etc.
\begin{lemma} \label{lemma: sum of products of coths}
	We have 
	\begin{equation}\label{eq: sum of products of coths}
	    \sum_{1 \leq j < k \leq 3} \langle \beta_j, \beta_k \rangle \coth \beta_j \coth \beta_k  = \omega^2 \,. 
	\end{equation}
\end{lemma}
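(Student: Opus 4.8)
The plan is to reduce the claimed identity to an elementary identity for the hyperbolic cotangent and then verify the latter. First I would compute the pairwise inner products directly from the coordinates given just before the lemma (or read them off from the relations among the $\beta_i$ recorded in the introduction). A short calculation gives $\langle \beta_1, \beta_2 \rangle = \langle \beta_2, \beta_3 \rangle = \omega^2$ and $\langle \beta_1, \beta_3 \rangle = -\omega^2$. Substituting these into the left-hand side of \eqref{eq: sum of products of coths} and dividing through by $\omega^2$, the assertion \eqref{eq: sum of products of coths} becomes equivalent to
\[
\coth\beta_1 \coth\beta_2 - \coth\beta_1 \coth\beta_3 + \coth\beta_2 \coth\beta_3 = 1 \,,
\]
where, as in the paper, $\coth\beta_i$ abbreviates $\coth\langle \beta_i, x\rangle$.

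The key structural input is the linear relation $\beta_1 = \beta_2 - \beta_3$ from the introduction, which gives $\langle \beta_1, x\rangle = \langle \beta_2, x\rangle - \langle \beta_3, x\rangle$. Writing $A = \langle \beta_2, x\rangle$ and $B = \langle \beta_3, x\rangle$, so that $\langle \beta_1, x\rangle = A - B$, the identity above takes the form
\[
\coth(A - B)\big(\coth A - \coth B\big) + \coth A \coth B = 1 \,.
\]
I would prove this by invoking the subtraction formula $\coth(A-B) = \frac{\coth A \coth B - 1}{\coth B - \coth A}$: the first term then collapses to $1 - \coth A \coth B$ (using $\coth A - \coth B = -(\coth B - \coth A)$), after which the $\coth A \coth B$ terms cancel and the identity is immediate. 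Alternatively, and more symmetrically, one can clear denominators by multiplying through by $\sinh\langle\beta_1,x\rangle\,\sinh\langle\beta_2,x\rangle\,\sinh\langle\beta_3,x\rangle$ and regrouping via the addition formulas for $\sinh$ and $\cosh$, together with $\sinh(A-B) = \sinh\langle\beta_1,x\rangle$ and $\cosh(A-B) = \cosh\langle\beta_1,x\rangle$.

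There is no serious obstacle here: once the inner products are inserted and the linear dependence $\beta_1 = \beta_2 - \beta_3$ is used, the statement is essentially a one-line hyperbolic identity. The only point requiring care is the bookkeeping of signs. The antisymmetric appearance of $\langle\beta_1,\beta_3\rangle = -\omega^2$ is exactly what matches the minus sign produced by the $\coth$ subtraction formula, and it is this cancellation that collapses the $x$-dependent terms and leaves the clean constant $\omega^2$ on the right-hand side rather than a function of $x$.
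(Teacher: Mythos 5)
Your proposal is correct and takes essentially the same route as the paper: both arguments group the two terms containing $\coth\beta_1$, exploit the linear relation $\beta_1 = \beta_2 - \beta_3$, and collapse that pair to $\omega^2\bigl(1 - \coth\beta_2\coth\beta_3\bigr)$ via a hyperbolic subtraction formula (you use the $\coth$ subtraction formula directly, the paper uses the difference-of-cotangents identity followed by the $\cosh$ addition formula, which is the same computation). Your explicit verification of the inner products $\langle\beta_1,\beta_2\rangle = \langle\beta_2,\beta_3\rangle = \omega^2$, $\langle\beta_1,\beta_3\rangle = -\omega^2$ is a detail the paper leaves implicit, but nothing more.
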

\begin{proof}
	By a difference of cotangents formula and the fact that $\beta_2 - \beta_3 = \beta_1$, two terms in~the~sum~become 
	\begin{equation} \label{eq: first two terms}
		\omega^2 \coth \beta_1 \big( \coth \beta_2 - \coth \beta_3 \big) = - \frac{\omega^2\cosh \beta_1}{\sinh \beta_2 \sinh \beta_3} \,.
	\end{equation}
	By expanding $\cosh \beta_1$ in terms of $\beta_2$ and $\beta_3\,$, we can rearrange the right-hand side of \eqref{eq: first two terms} further as
	\begin{equation*}
	    - \frac{\omega^2\cosh \beta_2 \cosh \beta_3 - \omega^2 \sinh \beta_2 \sinh \beta_3}{\sinh \beta_2 \sinh \beta_3} = -\omega^2 \coth \beta_2 \coth \beta_3 + \omega^2 \,,
	\end{equation*}  
	as required.
\end{proof}

It will be convenient to use the following notation throughout the paper:
	\begin{align}
        X &= \omega^2( \sinh \beta_1 \sinh \beta_2 \sinh \beta_3 )^{-1} \label{eq: X} \,, \\
        Y &= \omega^2 (\sinh 2\beta_1 \sinh 2\beta_2 \sinh 2\beta_3)^{-1} \label{eq: Y} \,.
    \end{align}
Here are some identities involving these functions.

\begin{lemma}  \label{lemma: alt expression for X}
	We have 
	\begin{equation}  \label{eq: alt expression for hIV}
	    	     \sum_{\sigma} \frac{\langle \beta_{\sigma(2)}, \beta_{\sigma(3)} \rangle \coth \beta_{\sigma(1)}}{\sinh^2 \beta_{\sigma(2)} \sinh^2 \beta_{\sigma(3)}}
	    	    = -2 X \,.
	\end{equation}
\end{lemma}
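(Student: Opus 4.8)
The plan is to expand the cyclic sum into its three explicit terms, clear denominators against a common denominator, and recognise the resulting numerator identity as a consequence of the linear relation $\beta_1 = \beta_2 - \beta_3$ among the short roots. First I would unfold $\sum_\sigma$ over $A_3 = \{id,(1,2,3),(1,3,2)\}$, recording the triple $(\sigma(1),\sigma(2),\sigma(3))$ for each element, so that the left-hand side of \eqref{eq: alt expression for hIV} becomes
\begin{equation*}
	\frac{\langle\beta_2,\beta_3\rangle\coth\beta_1}{\sinh^2\beta_2\sinh^2\beta_3} + \frac{\langle\beta_3,\beta_1\rangle\coth\beta_2}{\sinh^2\beta_3\sinh^2\beta_1} + \frac{\langle\beta_1,\beta_2\rangle\coth\beta_3}{\sinh^2\beta_1\sinh^2\beta_2} \,.
\end{equation*}
The inner products are then recorded from the data already available: using $\langle\beta_i,\beta_i\rangle = 2\omega^2$ (read off from the coordinate realisation) together with $\beta_1 = \beta_2 - \beta_3$, one finds $\langle\beta_2,\beta_3\rangle = \langle\beta_1,\beta_2\rangle = \omega^2$ and $\langle\beta_1,\beta_3\rangle = -\omega^2$. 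The lone negative sign here is the one feature to track carefully: it produces an alternating $+,-,+$ pattern across the three terms, and a sign slip at this point would alter the final answer.

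Next I would clear denominators by multiplying through by $\sinh^2\beta_1\sinh^2\beta_2\sinh^2\beta_3/\omega^2$. Writing $S = \sinh\beta_1\sinh\beta_2\sinh\beta_3$ so that $X = \omega^2 S^{-1}$, each term collapses via $\coth\beta_i\,\sinh^2\beta_i = \sinh\beta_i\cosh\beta_i$, and the claim \eqref{eq: alt expression for hIV} reduces to the purely trigonometric identity
\begin{equation*}
	\sinh\beta_1\cosh\beta_1 - \sinh\beta_2\cosh\beta_2 + \sinh\beta_3\cosh\beta_3 = -2\sinh\beta_1\sinh\beta_2\sinh\beta_3 \,.
\end{equation*}

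Finally I would prove this reduced identity. Applying $2\sinh\theta\cosh\theta = \sinh 2\theta$ rewrites the left-hand side as $\tfrac12\big(\sinh 2\beta_1 - \sinh 2\beta_2 + \sinh 2\beta_3\big)$; substituting $2\beta_2 = 2\beta_1 + 2\beta_3$ and expanding $\sinh 2\beta_2$ by the addition formula, the surviving combination regroups, using $1 - \cosh 2\theta = -2\sinh^2\theta$ and again $2\sinh\theta\cosh\theta = \sinh 2\theta$, into $-4\sinh\beta_1\sinh\beta_3\,\sinh(\beta_1+\beta_3) = -4\sinh\beta_1\sinh\beta_2\sinh\beta_3$, and halving yields $-2S$, as required. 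Since every manipulation is elementary, I expect no genuine obstacle; the only point demanding care is the bookkeeping of signs, both the negative inner product $\langle\beta_1,\beta_3\rangle = -\omega^2$ and the orientation of $\beta_1 = \beta_2 - \beta_3$, so that the middle term carries the correct sign throughout. An alternative in the style of Lemma \ref{lemma: sum of products of coths}, pairing two of the terms through a difference-of-cotangents manipulation before combining with the third, would reach the same identity, but the direct clearing of denominators seems cleanest here.
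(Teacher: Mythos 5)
Your proof is correct, but it takes a different route from the paper's. The paper derives the lemma as a quick corollary of Lemma \ref{lemma: sum of products of coths}: it multiplies \eqref{eq: sum of products of coths} by $2\omega^{-2}$, regroups the left-hand side into the three cotangent pairings $\coth\beta_1(\coth\beta_2-\coth\beta_3)+\coth\beta_2(\coth\beta_1+\coth\beta_3)+\coth\beta_3(\coth\beta_2-\coth\beta_1)$, collapses each pairing by the sum/difference-of-cotangents formulas (the same trick as in the proof of Lemma \ref{lemma: sum of products of coths}) to obtain \eqref{eq: alt expression for X rearranged}, and then simply divides by $\sinh\beta_1\sinh\beta_2\sinh\beta_3$. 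You never invoke Lemma \ref{lemma: sum of products of coths}; instead you expand the cyclic sum, substitute the inner products $\langle\beta_2,\beta_3\rangle=\langle\beta_1,\beta_2\rangle=\omega^2$, $\langle\beta_1,\beta_3\rangle=-\omega^2$, clear denominators, and verify the resulting product identity
\begin{equation*}
\sinh\beta_1\cosh\beta_1-\sinh\beta_2\cosh\beta_2+\sinh\beta_3\cosh\beta_3=-2\sinh\beta_1\sinh\beta_2\sinh\beta_3
\end{equation*}
from scratch via double-angle and addition formulas. Note that this identity is precisely what the paper's display \eqref{eq: alt expression for X rearranged} becomes after multiplication by $\sinh\beta_1\sinh\beta_2\sinh\beta_3$, so the two arguments meet at the same intermediate fact from opposite directions. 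What each buys: the paper's proof is two lines because it reuses already-established machinery, which fits the economy of the section; yours is self-contained and elementary, making the $+,-,+$ sign pattern and the inner-product values fully explicit, at the cost of reproving trigonometric content that Lemma \ref{lemma: sum of products of coths} already encapsulates. Both are complete and correct.
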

\begin{proof}
	Multiplying equality \eqref{eq: sum of products of coths} by $2 \omega^{-2}$ and regrouping terms as in the proof of Lemma \ref{lemma: sum of products of coths}, we get 
	\begin{equation} \label{eq: alt expression for X rearranged}
		\begin{aligned}
			2 &= \coth \beta_1 \big( \coth \beta_2 - \coth \beta_3 \big) + \coth \beta_2 \big( \coth \beta_1 + \coth \beta_3 \big) + \coth \beta_3 \big( \coth \beta_2 - \coth \beta_1 \big) \\
			&= - \frac{\cosh \beta_1}{\sinh \beta_2 \sinh \beta_3} + \frac{\cosh \beta_2}{\sinh \beta_1 \sinh \beta_3} - \frac{\cosh \beta_3}{\sinh \beta_1 \sinh \beta_2} \,.
		\end{aligned} 
	\end{equation}
    The statement follows by dividing \eqref{eq: alt expression for X rearranged} by $\sinh \beta_1 \sinh \beta_2 \sinh \beta_3$.
\end{proof} 	

There is also the following version of Lemma \ref{lemma: alt expression for X} involving the function $\tanh$ rather than $\coth$.
\begin{lemma}  \label{lemma: alt expression for X and Y}
	We have
	\begin{equation} \label{eq: alt expression for hIV 2}
		-\frac12 \sum_{\sigma} \frac{\langle \beta_{\sigma(2)}, \beta_{\sigma(3)} \rangle \tanh \beta_{\sigma(1)}}{\sinh^2 \beta_{\sigma(2)} \sinh^2 \beta_{\sigma(3)}} = X + 4Y \,.
	\end{equation}
\end{lemma}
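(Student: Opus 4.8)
The plan is to reduce this identity to the already-established Lemma \ref{lemma: alt expression for X} by expressing $\tanh$ in terms of $\coth$. The starting point is the elementary identity $\coth\theta - \tanh\theta = \frac{2}{\sinh 2\theta}$, which lets me write $\tanh\beta_{\sigma(1)} = \coth\beta_{\sigma(1)} - \frac{2}{\sinh 2\beta_{\sigma(1)}}$ inside the cyclic sum. Substituting this and separating the two contributions, the left-hand side becomes
\begin{equation*}
-\frac12\sum_\sigma\frac{\langle\beta_{\sigma(2)},\beta_{\sigma(3)}\rangle\coth\beta_{\sigma(1)}}{\sinh^2\beta_{\sigma(2)}\sinh^2\beta_{\sigma(3)}} + \sum_\sigma\frac{\langle\beta_{\sigma(2)},\beta_{\sigma(3)}\rangle}{\sinh 2\beta_{\sigma(1)}\sinh^2\beta_{\sigma(2)}\sinh^2\beta_{\sigma(3)}}.
\end{equation*}
By Lemma \ref{lemma: alt expression for X} the first sum equals $-2X$, so the first term contributes exactly $X$. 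It then remains to show that the second sum, call it $S$, equals $4Y$.

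To evaluate $S$ I would use $\frac{1}{\sinh 2\beta_{\sigma(1)}} = \frac{\tanh\beta_{\sigma(1)}}{2\sinh^2\beta_{\sigma(1)}}$, which turns every summand into a multiple of $(\sinh^2\beta_1\sinh^2\beta_2\sinh^2\beta_3)^{-1}$. Factoring this common denominator out gives
\begin{equation*}
S = \frac{1}{2\sinh^2\beta_1\sinh^2\beta_2\sinh^2\beta_3}\sum_\sigma\langle\beta_{\sigma(2)},\beta_{\sigma(3)}\rangle\tanh\beta_{\sigma(1)} = \frac{\omega^2\big(\tanh\beta_1 - \tanh\beta_2 + \tanh\beta_3\big)}{2\sinh^2\beta_1\sinh^2\beta_2\sinh^2\beta_3},
\end{equation*}
where I have used that the relevant inner products equal $\pm\omega^2$ with the signs dictated by the configuration.

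The main work --- the analogue of the difference-of-cotangents step in Lemma \ref{lemma: sum of products of coths} --- is to simplify $\tanh\beta_1 - \tanh\beta_2 + \tanh\beta_3$. Here I would group $\tanh\beta_1 + \tanh\beta_3 = \frac{\sinh(\beta_1+\beta_3)}{\cosh\beta_1\cosh\beta_3} = \frac{\sinh\beta_2}{\cosh\beta_1\cosh\beta_3}$, using $\beta_2 = \beta_1 + \beta_3$, and then subtract $\tanh\beta_2$; expanding $\cosh\beta_2 = \cosh(\beta_1+\beta_3)$ collapses the numerator to $\sinh\beta_1\sinh\beta_3$, yielding
\begin{equation*}
\tanh\beta_1 - \tanh\beta_2 + \tanh\beta_3 = \frac{\sinh\beta_1\sinh\beta_2\sinh\beta_3}{\cosh\beta_1\cosh\beta_2\cosh\beta_3}.
\end{equation*}
Substituting this back and using $\sinh\beta_i\cosh\beta_i = \tfrac12\sinh 2\beta_i$ for each $i$ turns $S$ into $\frac{4\omega^2}{\sinh 2\beta_1\sinh 2\beta_2\sinh 2\beta_3} = 4Y$, which combined with the first term $X$ completes the proof. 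I expect this last trigonometric simplification to be the only real obstacle; everything else is bookkeeping that mirrors the proofs of Lemmas \ref{lemma: sum of products of coths} and \ref{lemma: alt expression for X}.
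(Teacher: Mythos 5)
Your proof is correct and follows essentially the same route as the paper's: both split $\tanh$ via $\tanh z = \coth z - (\sinh z \cosh z)^{-1}$ and use Lemma \ref{lemma: alt expression for X} to produce the $X$ term. The only (cosmetic) difference is in the leftover sum: the paper puts it over the common denominator $\prod_j \sinh\beta_j\cosh\beta_j$ and invokes Lemma \ref{lemma: sum of products of coths}, whereas you evaluate $\tanh\beta_1-\tanh\beta_2+\tanh\beta_3$ directly by the addition formula --- an equivalent computation giving the same $4Y$.
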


\begin{proof}
	Note the relation $\tanh z = \coth z - (\sinh z \cosh z)^{-1}$ valid for all $z \in \mathbb{C}$. Hence by Lemma \ref{lemma: alt expression for X} we~get 
	\begin{align*}
		&\sum_{\sigma} \frac{\langle \beta_{\sigma(2)}, \beta_{\sigma(3)} \rangle \tanh \beta_{\sigma(1)}}{\sinh^2 \beta_{\sigma(2)} \sinh^2 \beta_{\sigma(3)}} = 
		\sum_{\sigma} \frac{\langle \beta_{\sigma(2)}, \beta_{\sigma(3)} \rangle \coth \beta_{\sigma(1)}}{\sinh^2 \beta_{\sigma(2)} \sinh^2 \beta_{\sigma(3)}}  
	    - \sum_{\sigma} \frac{\langle \beta_{\sigma(2)}, \beta_{\sigma(3)} \rangle }{\sinh \beta_{\sigma(1)} \cosh \beta_{\sigma(1)} \sinh^2 \beta_{\sigma(2)} \sinh^2 \beta_{\sigma(3)}} \\
		&\qquad = - 2X - \frac{\sum_{1 \leq j < k \leq 3} \langle \beta_j, \beta_k \rangle \coth \beta_j \coth \beta_k }{\sinh \beta_1 \cosh \beta_1 \sinh \beta_2 \cosh \beta_2 \sinh \beta_3 \cosh \beta_3} \,.
	\end{align*}
	The result follows by applying Lemma \ref{lemma: sum of products of coths}.
\end{proof}

%There also exists a version of Lemma \ref{lemma: alt expression for X} with $\beta$'s in the denominator replaced by $2 \beta$'s.
%\begin{lemma}
%	We have
%	\begin{equation}
%		\sum_{\sigma} \frac{\langle \beta_{\sigma(2)}, \beta_{\sigma(3)} \rangle \coth \beta_{\sigma(1)}}{\sinh^2 2\beta_{\sigma(2)} \sinh^2 2\beta_{\sigma(3)}}
%		= - 2Y - \frac12 Y Z \,.
% 	\end{equation}
%\end{lemma}
%
%\begin{proof}
%	Using the identity $\coth z = \coth 2z + (\sinh 2z)^{-1}$ valid for any $z \in \mathbb{C}$, and Lemma \ref{lemma: alt expression for X} with all the $\beta$'s replaced by $2 \beta$, we get
%	\begin{align*}
%		&\sum_{\sigma} \frac{\langle \beta_{\sigma(2)}, \beta_{\sigma(3)} \rangle \coth \beta_{\sigma(1)}}{\sinh^2 2\beta_{\sigma(2)} \sinh^2 2\beta_{\sigma(3)}} =
%		\sum_{\sigma} \frac{\langle \beta_{\sigma(2)}, \beta_{\sigma(3)} \rangle \coth 2\beta_{\sigma(1)}}{\sinh^2 2\beta_{\sigma(2)} \sinh^2 2\beta_{\sigma(3)}} + 
%		\sum_{\sigma} \frac{\langle \beta_{\sigma(2)}, \beta_{\sigma(3)} \rangle}{\sinh 2 \beta_{\sigma(1)} \sinh^2 2\beta_{\sigma(2)} \sinh^2 2\beta_{\sigma(3)}} = \\
%		&\qquad= \quad - 2Y + \frac14 Y Z \sum_{\sigma} \frac{\langle \beta_{\sigma(2)}, \beta_{\sigma(3)} \rangle \cosh \beta_{\sigma(1)}}{\sinh \beta_{\sigma(2)} \sinh \beta_{\sigma(3)}} \,,
%	\end{align*}
%	and the statement follows by relation \eqref{eq: alt expression for X rearranged}.
%\end{proof}

\begin{lemma} \label{lemma: derivative of X}
	We have 
	\begin{equation} \label{eq: derivative of X}
		 - X \sum_{i=1}^3 \coth \beta_i \partial_{\beta_i} = \sum_{\sigma} \frac{\langle \beta_{\sigma(2)}, \beta_{\sigma(3)} \rangle}{\sinh^2 \beta_{\sigma(2)} \sinh^2 \beta_{\sigma(3)}} \partial_{\beta_{\sigma(1)}} \,.
	\end{equation}
\end{lemma}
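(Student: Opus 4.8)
The plan is to read \eqref{eq: derivative of X} as an identity between two first-order differential operators, that is, between two vector fields on $\mathbb{C}^2$, and to prove it by showing their coefficient vectors agree. Since $\partial_{\beta_i}$ is the directional derivative along $\beta_i$, the operator on the left is $\partial_V$ with $V=-X\sum_{i=1}^3\coth\beta_i\,\beta_i$, while the one on the right is $\partial_W$ with $W=\sum_\sigma\langle\beta_{\sigma(2)},\beta_{\sigma(3)}\rangle(\sinh^2\beta_{\sigma(2)}\sinh^2\beta_{\sigma(3)})^{-1}\beta_{\sigma(1)}$. Thus everything reduces to the vector identity $V=W$, which, being an equality of vectors in the plane, can be checked componentwise in any basis.

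A clarifying preliminary step, which also explains the name of the lemma, is to compute $\nabla(X)$ directly. Using $\partial_\gamma\sinh\langle\beta_i,x\rangle=\langle\beta_i,\gamma\rangle\cosh\langle\beta_i,x\rangle$ and the product rule applied to $X=\omega^2(\sinh\beta_1\sinh\beta_2\sinh\beta_3)^{-1}$, one finds $\partial_\gamma X=-X\sum_{i=1}^3\coth\beta_i\,\langle\beta_i,\gamma\rangle$ for every $\gamma$, so that $V=\nabla(X)$ and the left-hand operator is precisely $\partial_{\nabla(X)}$. Hence proving the lemma amounts to re-expressing $\nabla(X)$ in the stated form.

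To establish $V=W$ I would clear the common factor $\omega^2(\sinh^2\beta_1\sinh^2\beta_2\sinh^2\beta_3)^{-1}$ and substitute the inner products $\langle\beta_i,\beta_i\rangle=2\omega^2$, $\langle\beta_1,\beta_2\rangle=\langle\beta_2,\beta_3\rangle=\omega^2$ and $\langle\beta_1,\beta_3\rangle=-\omega^2$. This turns $V=W$ into the single vector identity $\sum_{i}\cosh\beta_i\sinh\beta_j\sinh\beta_k\,\beta_i=-\sinh^2\beta_1\,\beta_1+\sinh^2\beta_2\,\beta_2-\sinh^2\beta_3\,\beta_3$, where $\{j,k\}=\{1,2,3\}\setminus\{i\}$. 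Using $\beta_1=\beta_2-\beta_3$ to work in the basis $\{\beta_2,\beta_3\}$, the $\beta_2$-component reads $\sinh\beta_3\,\sinh(\beta_1+\beta_2)=\sinh(\beta_2+\beta_1)\sinh(\beta_2-\beta_1)$ and the $\beta_3$-component reads $\sinh\beta_2\,\sinh(\beta_1-\beta_3)=\sinh(\beta_1+\beta_3)\sinh(\beta_1-\beta_3)$; both follow at once from the addition formulas $\sinh a\cosh b\pm\cosh a\sinh b=\sinh(a\pm b)$ and $\sinh^2 b-\sinh^2 a=\sinh(b+a)\sinh(b-a)$ together with $\beta_2-\beta_1=\beta_3$ and $\beta_1+\beta_3=\beta_2$.

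I expect the only delicate points to be the cyclic-sum bookkeeping defining $W$ and, above all, the lone negative pairing $\langle\beta_1,\beta_3\rangle=-\omega^2$: it is this sign that breaks the apparent symmetry of the three summands and produces the alternating signs on the right of the reduced identity, so it is where a slip is most likely. The hyperbolic verification itself is routine. Alternatively, one may avoid coordinates and instead pair the two vector fields with $\beta_1$ and $\beta_2$, reducing to two scalar trigonometric identities of the same flavour as Lemma~\ref{lemma: sum of products of coths}, but the componentwise check above seems shortest.
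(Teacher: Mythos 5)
Your proof is correct and takes essentially the same route as the paper's: checking the vector identity $V=W$ componentwise in the basis $\{\beta_2,\beta_3\}$ is precisely the paper's substitution $\partial_{\beta_1}=\partial_{\beta_2}-\partial_{\beta_3}$ followed by comparison of the coefficients of $\partial_{\beta_2}$ and $\partial_{\beta_3}$, using the same factorisation $\sinh^2\beta_2-\sinh^2\beta_1=\sinh(\beta_1+\beta_2)\sinh(\beta_2-\beta_1)$ and addition formulas. Your preliminary remark that the left-hand side is $\partial_{\nabla(X)}$ is a nice conceptual bonus, but it is exactly the content the paper defers to the corollary immediately following the lemma.
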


\begin{proof}
	Let us replace $\partial_{\beta_1} = \partial_{\beta_2} - \partial_{\beta_3} \,$ in \eqref{eq: derivative of X}. Then the coefficient of $\partial_{\beta_2}$ in the left-hand side equals
	\begin{equation*}
	    -X \big( \coth \beta_1 + \coth \beta_2 \big) = -\frac{\omega^2\sinh(\beta_1 + \beta_2)}{\sinh^2 \beta_1 \sinh^2 \beta_2 \sinh \beta_3} \,.
	\end{equation*} 
	Then on the right-hand side of equality \eqref{eq: derivative of X} the coefficient at $\partial_{\beta_2}$ is 
	\begin{equation*}
	    \frac{\omega^2}{\sinh^2 \beta_2 \sinh^2 \beta_3} -  \frac{\omega^2}{\sinh^2 \beta_1 \sinh^2 \beta_3} =
	    -\frac{\omega^2(\sinh^2 \beta_2 - \sinh^2 \beta_1)}{\sinh^2 \beta_1 \sinh^2 \beta_2 \sinh^2 \beta_3} =  -\frac{\omega^2 \sinh(\beta_1 + \beta_2)}{\sinh^2 \beta_1 \sinh^2 \beta_2 \sinh \beta_3}
	\end{equation*}
	as $\sinh^2 \beta_2 - \sinh^2 \beta_1 = \sinh(\beta_1 + \beta_2) \sinh(\beta_2 - \beta_1) \,$. Similarly, the coefficient at $\partial_{\beta_3}$ matches too.
\end{proof} 

As an immediate corollary of Lemma \ref{lemma: derivative of X} we get the following statement.
\begin{corollary} \label{cor: nabla of X}
    We have
	\begin{equation*}
		\partial_{\nabla(X)} = \sum_{\sigma} \frac{\langle \beta_{\sigma(2)}, \beta_{\sigma(3)} \rangle}{\sinh^2 \beta_{\sigma(2)} \sinh^2 \beta_{\sigma(3)}} \partial_{\beta_{\sigma(1)}} \,,
	\end{equation*} and
	\begin{equation*}
		\partial_{\nabla(Y)} = \sum_{\sigma} \frac{2 \langle \beta_{\sigma(2)}, \beta_{\sigma(3)} \rangle}{\sinh^2 2\beta_{\sigma(2)} \sinh^2 2\beta_{\sigma(3)}} \partial_{\beta_{\sigma(1)}} \,. 
	\end{equation*} \qed
\end{corollary}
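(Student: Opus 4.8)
The plan is to deduce both identities directly from Lemma~\ref{lemma: derivative of X} by first computing the gradients of $X$ and $Y$ through the chain rule and then recognising the resulting operators $\partial_{\nabla(X)}$ and $\partial_{\nabla(Y)}$ as the left-hand sides of suitably scaled versions of \eqref{eq: derivative of X}. Concretely, starting from the definition \eqref{eq: X} and taking a logarithmic derivative gives $\partial_j(X) = -X\sum_{i=1}^3 \coth\langle\beta_i,x\rangle\,\beta_i^{(j)}$, so that $\nabla(X) = -X\sum_{i=1}^3 \coth\beta_i\,\beta_i$ as a vector field. Forming the operator $\partial_{\nabla(X)}$ from this gradient produces
\begin{equation*}
	\partial_{\nabla(X)} = -X\sum_{i=1}^3 \coth\beta_i\,\partial_{\beta_i},
\end{equation*}
which is exactly the left-hand side of \eqref{eq: derivative of X}; the first identity is therefore immediate from Lemma~\ref{lemma: derivative of X}.

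For the second identity I would repeat the logarithmic differentiation on $Y$ as defined in \eqref{eq: Y}. The only new feature is the chain-rule factor coming from $\langle 2\beta_i, x\rangle = 2\langle\beta_i,x\rangle$, which yields
\begin{equation*}
	\partial_{\nabla(Y)} = -2Y\sum_{i=1}^3 \coth 2\beta_i\,\partial_{\beta_i}.
\end{equation*}
To finish I would invoke the variant of Lemma~\ref{lemma: derivative of X} obtained by the replacement $\beta_i \mapsto 2\beta_i$ sanctioned by Remark~\ref{rem: variants of identities}, and then simplify using $\partial_{2\beta_i} = 2\partial_{\beta_i}$ and $\langle 2\beta_{\sigma(2)}, 2\beta_{\sigma(3)} \rangle = 4\langle \beta_{\sigma(2)}, \beta_{\sigma(3)} \rangle$.

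The one step that needs genuine care — and the main place where a spurious numerical factor can creep in — is the bookkeeping of the constant $\omega$ under the substitution $\beta_i\mapsto 2\beta_i$. The symbol $\omega$ records the common length scale of the short roots in the fixed coordinates $\beta_1 = \omega(\sqrt2,0)$, etc., and this scale doubles when the vectors are replaced by $2\beta_i$; thus inside the definition of $X$ one must replace $\omega^2$ by $(2\omega)^2 = 4\omega^2$, in parallel with the scaling $\langle\beta_j,\beta_k\rangle = \pm\omega^2 \mapsto \pm 4\omega^2$ of the inner products. Consequently $X$ maps to $4Y$, not to $Y$, and the scaled form of \eqref{eq: derivative of X} becomes
\begin{equation*}
	-4Y\sum_{i=1}^3 \coth 2\beta_i\,\partial_{2\beta_i} = \sum_{\sigma} \frac{\langle 2\beta_{\sigma(2)}, 2\beta_{\sigma(3)} \rangle}{\sinh^2 2\beta_{\sigma(2)} \sinh^2 2\beta_{\sigma(3)}}\,\partial_{2\beta_{\sigma(1)}}.
\end{equation*}
Substituting $\partial_{2\beta_i} = 2\partial_{\beta_i}$ and $\langle 2\beta_{\sigma(2)}, 2\beta_{\sigma(3)} \rangle = 4\langle \beta_{\sigma(2)}, \beta_{\sigma(3)} \rangle$ and dividing throughout by $4$ turns the left-hand side into $\partial_{\nabla(Y)}$ and leaves the coefficient $2$ on the right, as required. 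As a safeguard against the $\omega$-scaling subtlety I would also verify this $Y$-identity directly by imitating the proof of Lemma~\ref{lemma: derivative of X}: substitute $\partial_{\beta_1} = \partial_{\beta_2} - \partial_{\beta_3}$, match the coefficient of $\partial_{\beta_2}$ on both sides, and use $\sinh^2 2\beta_1 - \sinh^2 2\beta_2 = \sinh(2\beta_1 + 2\beta_2)\sinh(2\beta_1 - 2\beta_2)$ together with $\beta_1 - \beta_2 = -\beta_3$; this independent check reproduces the coefficient $2$ and confirms the constant.
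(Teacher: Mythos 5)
Your proof is correct and is essentially the paper's own argument: the paper presents this corollary as an immediate consequence of Lemma~\ref{lemma: derivative of X}, exactly as you do by identifying $\partial_{\nabla(X)} = -X\sum_{i=1}^3 \coth\beta_i\,\partial_{\beta_i}$ via logarithmic differentiation and then invoking the lemma (and its rescaled variant for $Y$). Your careful bookkeeping showing that the role of $X$ is played by $4Y$ under the replacement $\beta_i \mapsto 2\beta_i$, which yields the coefficient $2$ in the second identity, correctly fills in the details the paper leaves implicit.
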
 		

The following lemma can be proven by a straightforward calculation with the help of Lemma \ref{lemma: sum of products of coths}.
\begin{lemma} \label{lemma: delta of X}
    The following two equalities hold:
	\begin{align}
	    \Delta (X) &= 
        4 \omega^2 \bigg(2 + \sum_{j=1}^3 \frac{1}{\sinh^2 \beta_j} \bigg)X \,, \label{eq: delta of inv prod of X} \\
	    \Delta (Y) &= 
       16 \omega^2\bigg(2 + \sum_{j=1}^3 \frac{1}{\sinh^2 2\beta_j} \bigg)Y \,. \label{eq: delta of inv prod of Y} 
	\end{align}
\end{lemma}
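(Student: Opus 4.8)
The plan is to exploit the product structure of $X$ via the logarithmic identity
\begin{equation*}
\Delta(X) = X\Big(\Delta(\log X) + \sum_{j=1}^2 \big(\partial_j(\log X)\big)^2\Big),
\end{equation*}
which holds for any nonvanishing $X$. Writing $\log X = \log\omega^2 - \sum_{i=1}^3 \log\sinh\langle\beta_i,x\rangle$ and using that each $\langle\beta_i,x\rangle$ is linear together with $\tfrac{d}{dt}\log\sinh t = \coth t$ and $\tfrac{d^2}{dt^2}\log\sinh t = -\sinh^{-2}t$, I would obtain
\begin{equation*}
\Delta(\log X) = \sum_{i=1}^3 \frac{\langle\beta_i,\beta_i\rangle}{\sinh^2\beta_i}, \qquad \sum_{j=1}^2\big(\partial_j(\log X)\big)^2 = \sum_{i,k=1}^3 \langle\beta_i,\beta_k\rangle\coth\beta_i\coth\beta_k.
\end{equation*}
Since $\langle\beta_i,\beta_i\rangle = 2\omega^2$ for every $i$, the first sum equals $2\omega^2\sum_i\sinh^{-2}\beta_i$.

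For the second sum I would separate diagonal and off-diagonal terms. The off-diagonal part is $2\sum_{1\le j<k\le 3}\langle\beta_j,\beta_k\rangle\coth\beta_j\coth\beta_k = 2\omega^2$ by Lemma \ref{lemma: sum of products of coths}, while the diagonal part, using $\coth^2\beta_i = 1 + \sinh^{-2}\beta_i$, equals $2\omega^2\sum_i(1 + \sinh^{-2}\beta_i) = 6\omega^2 + 2\omega^2\sum_i\sinh^{-2}\beta_i$. Adding everything gives
\begin{equation*}
\Delta(X)/X = 2\omega^2\sum_i\sinh^{-2}\beta_i + 8\omega^2 + 2\omega^2\sum_i\sinh^{-2}\beta_i = 4\omega^2\Big(2 + \sum_i\sinh^{-2}\beta_i\Big),
\end{equation*}
which is \eqref{eq: delta of inv prod of X}.

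For \eqref{eq: delta of inv prod of Y} the plan is to run the identical computation with $\beta_i$ replaced by $2\beta_i$, using $\langle 2\beta_i, 2\beta_i\rangle = 8\omega^2$. The single point requiring care is the off-diagonal sum $\sum_{j<k}\langle 2\beta_j, 2\beta_k\rangle\coth 2\beta_j\coth 2\beta_k$: the vectors $2\beta_i$ form a copy of the same configuration with $\omega$ rescaled to $2\omega$, so by the scaling invariance recorded in Remark \ref{rem: variants of identities} the analogue of Lemma \ref{lemma: sum of products of coths} evaluates to $(2\omega)^2 = 4\omega^2$ rather than $\omega^2$. Feeding this in yields $\Delta(Y)/Y = 32\omega^2 + 16\omega^2\sum_i\sinh^{-2}2\beta_i = 16\omega^2\big(2 + \sum_i\sinh^{-2}2\beta_i\big)$.

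The calculation is elementary throughout, so the only genuine obstacle is bookkeeping, and in particular getting the rescaled coth-product value $4\omega^2$ right in the $Y$ case: this constant is precisely what distinguishes the leading factor $16$ in \eqref{eq: delta of inv prod of Y} from the factor $4$ in \eqref{eq: delta of inv prod of X}, so I would verify it independently, for instance by rerunning the proof of Lemma \ref{lemma: sum of products of coths} with $2\beta_i$ in place of $\beta_i$ and the relation $2\beta_2 - 2\beta_3 = 2\beta_1$.
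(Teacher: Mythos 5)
Your computation is correct and is exactly the ``straightforward calculation with the help of Lemma \ref{lemma: sum of products of coths}'' that the paper indicates (the paper gives no further detail): the logarithmic identity $\Delta(X)=X\big(\Delta(\log X)+|\nabla\log X|^2\big)$, the norms $\langle\beta_i,\beta_i\rangle=2\omega^2$, and the coth-product sum $\omega^2$ (respectively $4\omega^2$ for the rescaled vectors $2\beta_i$, as sanctioned by Remark \ref{rem: variants of identities}) yield both constants $4$ and $16$ as you state. No gaps.
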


%\begin{lemma} \label{lemma: sum of sinhes squared}
%	We have
%	\begin{equation}
%	\sinh^2 \beta_1 + \sinh^2 \beta_2 + \sinh^2 \beta_3 = 2 \cosh \beta_1 \cosh \beta_2 \cosh \beta_3 - 2 \,.
%	\end{equation}
%\end{lemma}
%
%\begin{proof}
%	Using the identity $\sinh^2x + \cosh^2y = \cosh(x+y)\cosh(x-y)$ valid for any $x, y \in \mathbb{C}$, we~have
%	\begin{align*}
%	&\sinh^2 \beta_1 + \sinh^2 \beta_2 + \sinh^2 \beta_3 + 2 = \sinh^2 \beta_1 + \cosh^2 \beta_2 + \cosh^2 \beta_3 = 
%	\cosh(\beta_1 + \beta_2)\cosh \beta_3  + \cosh^2 \beta_3 = \\
%	&\quad = \big(\cosh(\beta_1 + \beta_2) + \cosh(\beta_2 - \beta_1) \big) \cosh \beta_3  = 2 \cosh \beta_1 \cosh \beta_2 \cosh \beta_3 \,,
%	\end{align*}
%	since $\beta_2 - \beta_1 = \beta_3$.
%\end{proof}

\begin{lemma} \label{lemma: first order identity}
	The following identities hold:
	\begin{align}
		- \bigg( \frac{1}{\cosh^2 \beta_2} + \frac{1}{\cosh^2 \beta_3} \bigg) \frac{1}{\sinh^2 \alpha_1} + 2 (\tanh \beta_2 + \tanh \beta_3) \frac{\coth \alpha_1}{\sinh^2 \alpha_1} &= \frac{1}{\cosh^2 \beta_2 \cosh^2 \beta_3}  \label{eq: first order identity 1} \,, \\
		\bigg( \frac{1}{\sinh^2 \beta_2} + \frac{1}{\sinh^2 \beta_3} \bigg) \frac{1}{\sinh^2 \alpha_1} + 2 (\coth \beta_2 + \coth \beta_3) \frac{\coth \alpha_1}{\sinh^2 \alpha_1} &= \frac{1}{\sinh^2 \beta_2 \sinh^2 \beta_3} \label{eq: first order identity 2} \,, \\
		- \bigg( \frac{1}{\cosh^2 \beta_2} + \frac{1}{\cosh^2 \beta_3} \bigg) \frac{1}{\sinh^2 \beta_1} + 2 (\tanh \beta_2 - \tanh \beta_3) \frac{\coth \beta_1}{\sinh^2 \beta_1} &= \frac{1}{\cosh^2 \beta_2 \cosh^2 \beta_3} \label{eq: first order identity 3} \,, \\
		\bigg( \frac{1}{\sinh^2 \beta_2} + \frac{1}{\sinh^2 \beta_3} \bigg) \frac{1}{\sinh^2 \beta_1} + 2 (\coth \beta_2 - \coth \beta_3) \frac{\coth \beta_1}{\sinh^2 \beta_1} &= \frac{1}{\sinh^2 \beta_2 \sinh^2 \beta_3} \label{eq: first order identity 4} \,.
	\end{align}
\end{lemma}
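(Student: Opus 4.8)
The plan is to reduce all four identities to functional identities in two independent complex variables. From the configuration relations listed above one reads off $\alpha_1 = \beta_2 + \beta_3$ and $\beta_1 = \beta_2 - \beta_3$ (equivalently $\langle\alpha_1,x\rangle = \langle\beta_2,x\rangle + \langle\beta_3,x\rangle$ and $\langle\beta_1,x\rangle = \langle\beta_2,x\rangle - \langle\beta_3,x\rangle$). Writing $p = \langle\beta_2,x\rangle$ and $q = \langle\beta_3,x\rangle$, so that $\langle\alpha_1,x\rangle = p+q$ and $\langle\beta_1,x\rangle = p-q$, each of \eqref{eq: first order identity 1}--\eqref{eq: first order identity 4} becomes an identity purely in the hyperbolic functions of $p$, $q$, $p+q$ and $p-q$. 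Since $\beta_2$ and $\beta_3$ are linearly independent, $p$ and $q$ are independent coordinates on $\mathbb{C}^2$, so it suffices to verify these as identities in two free variables.

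Next I would exploit a symmetry to halve the work. Under the substitution $q \mapsto -q$ one has $p+q \leftrightarrow p-q$ while even functions of $q$ are unchanged and $\tanh q$, $\coth q$ change sign; this sends the left-hand side of \eqref{eq: first order identity 1} to that of \eqref{eq: first order identity 3} and the left-hand side of \eqref{eq: first order identity 2} to that of \eqref{eq: first order identity 4}, with both right-hand sides fixed. This is the reflection already recorded in Remark \ref{rem: variants of identities}. Hence it is enough to prove \eqref{eq: first order identity 1} and \eqref{eq: first order identity 2}.

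For \eqref{eq: first order identity 2} I would use the addition formula $\coth p + \coth q = \sinh(p+q)/(\sinh p \sinh q)$ to simplify the middle term, and then collect everything over the common factor $\sinh^{-2}(p+q)$. After clearing denominators the identity is equivalent to the elementary relation
\begin{equation*}
\sinh^2 p + \sinh^2 q + 2\cosh(p+q)\,\sinh p\,\sinh q = \sinh^2(p+q),
\end{equation*}
which follows at once by expanding $\sinh(p+q)$ and $\cosh(p+q)$ and using $\cosh^2 - \sinh^2 = 1$. Identity \eqref{eq: first order identity 1} is handled in the same way via $\tanh p + \tanh q = \sinh(p+q)/(\cosh p \cosh q)$, where it reduces to
\begin{equation*}
2\cosh(p+q)\,\cosh p\,\cosh q - \cosh^2 p - \cosh^2 q = \sinh^2(p+q),
\end{equation*}
again verified by direct expansion. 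As a variant, \eqref{eq: first order identity 2} can alternatively be obtained from \eqref{eq: first order identity 1} by the analytic shift $p \mapsto p + i\pi/2$, $q \mapsto q + i\pi/2$, which interchanges $\cosh$ and $\sinh$ up to signs and fixes the functions of $p+q$.

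There is no conceptual obstacle here; the only thing to get right is the bookkeeping in combining the terms over a common denominator. The observation that makes the computation short is that the mixed terms $\tanh p + \tanh q$ and $\coth p + \coth q$ are precisely the combinations produced by the sum $p+q = \langle\alpha_1,x\rangle$, so the addition formulas collapse them and leave only degree-two hyperbolic polynomial identities to check.
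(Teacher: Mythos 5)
Your proof is correct, and at its core it travels the same elementary route as the paper: write $\alpha_1=\beta_2+\beta_3$, $\beta_1=\beta_2-\beta_3$, and collapse the mixed terms with hyperbolic addition formulas; the differences are organisational, but worth noting. The paper proves \eqref{eq: first order identity 1} via the product form $\tanh\beta_2+\tanh\beta_3=\tanh\alpha_1(1+\tanh\beta_2\tanh\beta_3)$, so that after multiplying by $\sinh^2\alpha_1$ the left-hand side becomes the perfect square $(\tanh\beta_2+\tanh\beta_3)^2=\sinh^2\alpha_1/(\cosh^2\beta_2\cosh^2\beta_3)$, and then treats \eqref{eq: first order identity 2}--\eqref{eq: first order identity 4} by ``a similar sequence of steps'' with the coth addition formula. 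You instead use the quotient forms $\tanh p+\tanh q=\sinh(p+q)/(\cosh p\cosh q)$ and $\coth p+\coth q=\sinh(p+q)/(\sinh p\sinh q)$, clear all denominators, and verify the resulting quadratic hyperbolic identities by direct expansion --- the same amount of computation, minus the perfect-square shortcut. What your organisation buys is economy of cases: deducing \eqref{eq: first order identity 3} and \eqref{eq: first order identity 4} from \eqref{eq: first order identity 1} and \eqref{eq: first order identity 2} by the substitution $q\mapsto -q$ (and optionally \eqref{eq: first order identity 2} from \eqref{eq: first order identity 1} by the shift $p\mapsto p+i\pi/2$, $q\mapsto q+i\pi/2$) reduces four verifications to essentially one, where the paper in effect redoes the computation for each identity. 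One small correction: the substitution $q\mapsto-q$ is \emph{not} one of the transformations recorded in Remark \ref{rem: variants of identities}, which lists only the scaling $\beta_i\to 2\beta_i$ and the two rotations by $\pm\frac{\pi}{3}$; moreover, the genuine Weyl reflection negating $\beta_3$ would also interchange $\beta_1$ and $\beta_2$, which is not what you use. No appeal to the remark is needed, though: since $\beta_2$ and $\beta_3$ are linearly independent, $p$ and $q$ are free coordinates on $\mathbb{C}^2$, and the substitution $(p,q)\mapsto(p,-q)$ is legitimate exactly as justified in your first paragraph.
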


\begin{proof}
    Since $\beta_2 + \beta_3 = \alpha_1$, we have
	\[
		\tanh \beta_2 + \tanh \beta_3 = \tanh \alpha_1 (1 + \tanh \beta_2 \tanh \beta_3) \,.
	\]
	Therefore the left-hand side of the relation \eqref{eq: first order identity 1} multiplied by $\sinh^2 \alpha_1$ takes the form
%	\begin{align*}
$$
		-\frac{1}{\cosh^2 \beta_2} - \frac{1}{\cosh^2 \beta_3} + 2(1 + \tanh \beta_2 \tanh \beta_3)  
		= (\tanh \beta_2 + \tanh \beta_3)^2 = \frac{\sinh^2(\beta_2 + \beta_3)}{\cosh^2 \beta_2 \cosh^2 \beta_3} \,,
$$
%	\end{align*}
	which implies the relation \eqref{eq: first order identity 1}.
	The equalities \eqref{eq: first order identity 2} -- \eqref{eq: first order identity 4} can be proved by following a similar sequence of steps, using in addition that $\coth x + \coth y = \tanh(x+y)(1+\coth x \coth y )$ for $x, y \in \mathbb{C}$.
\end{proof}

Several other identities can be derived from Lemma \ref{lemma: first order identity}, which we put in Lemmas \ref{lemma: zeroth order identity 1} and
\ref{lemma: zeroth order identity 2} below.
\begin{lemma} \label{lemma: zeroth order identity 1}
	The following relation is satisfied:
	\begin{equation} \label{eq: zeroth order identity 1}
        \bigg( \frac{\coth \beta_3}{\sinh^2 \beta_3} - \frac{\coth \beta_2}{\sinh^2 \beta_2} \bigg) \frac{1}{\sinh^2 \alpha_1} + \bigg( \frac{1}{\sinh^2 \beta_3} - \frac{1}{\sinh^2 \beta_2} \bigg)\frac{\coth \alpha_1}{\sinh^2 \alpha_1}  =
	    \frac{\coth \beta_3 - \coth \beta_2}{\sinh^2 \beta_2 \sinh^2 \beta_3} \,. 
	\end{equation}
\end{lemma}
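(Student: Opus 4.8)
The plan is to derive \eqref{eq: zeroth order identity 1} by differentiating the first-order identity \eqref{eq: first order identity 2} in a carefully chosen direction, rather than by combining the identities of Lemma \ref{lemma: first order identity} algebraically. Two observations dictate the choice. First, the right-hand side of \eqref{eq: zeroth order identity 1}, namely $(\coth \beta_3 - \coth \beta_2)/(\sinh^2 \beta_2 \sinh^2 \beta_3)$, is up to a scalar a directional derivative of the right-hand side $1/(\sinh^2 \beta_2 \sinh^2 \beta_3)$ of \eqref{eq: first order identity 2}; this singles out \eqref{eq: first order identity 2} as the identity to differentiate. Second, the two terms on the left-hand side of \eqref{eq: first order identity 2} that carry the factors $1/\sinh^2 \alpha_1$ and $\coth \alpha_1/\sinh^2 \alpha_1$ depend on $x$ only through $\langle \alpha_1, x \rangle$, so their gradients are proportional to $\alpha_1$; hence a directional derivative $\partial_\gamma$ with $\gamma \perp \alpha_1$ leaves these $\alpha_1$-dependent factors untouched. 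Since $\langle \alpha_1, \beta_1 \rangle = 0$ by the choice of numbering, the operator $\partial_{\beta_1}$ is precisely such a derivative, and these $\alpha_1$-factors reappear intact on the left-hand side of \eqref{eq: zeroth order identity 1}.

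Concretely, I would apply $\partial_{\beta_1}$ to both sides of \eqref{eq: first order identity 2}, using the elementary rules $\partial_{\beta_1}\big(1/\sinh^2 \beta_i\big) = -2 \langle \beta_1, \beta_i \rangle \coth \beta_i / \sinh^2 \beta_i$ and $\partial_{\beta_1} \coth \beta_i = -\langle \beta_1, \beta_i \rangle / \sinh^2 \beta_i$, together with $\partial_{\beta_1}\big(1/\sinh^2 \alpha_1\big) = \partial_{\beta_1}\big(\coth \alpha_1 / \sinh^2 \alpha_1\big) = 0$. Because $\beta_2 - \beta_3 = \beta_1$ and $\beta_2 + \beta_3 = \alpha_1$ with $\langle \alpha_1, \beta_1 \rangle = 0$, one gets $\langle \beta_1, \beta_2 \rangle = -\langle \beta_1, \beta_3 \rangle = \omega^2$, so every term produced by $\partial_{\beta_1}$ carries the common factor $2\omega^2$. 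Thus differentiating the left-hand side of \eqref{eq: first order identity 2} yields $2\omega^2$ times the left-hand side of \eqref{eq: zeroth order identity 1}: the derivative of $1/\sinh^2 \beta_2 + 1/\sinh^2 \beta_3$ produces the coefficient of $1/\sinh^2 \alpha_1$, while the derivative of $2(\coth \beta_2 + \coth \beta_3)$ produces the coefficient of $\coth \alpha_1 / \sinh^2 \alpha_1$. Differentiating the right-hand side likewise gives $2\omega^2 (\coth \beta_3 - \coth \beta_2)/(\sinh^2 \beta_2 \sinh^2 \beta_3)$.

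Dividing the differentiated identity through by $2\omega^2$ then reproduces \eqref{eq: zeroth order identity 1} term by term. The calculation is entirely routine once the direction $\beta_1$ and the source identity \eqref{eq: first order identity 2} are fixed, so I expect no genuine obstacle; the only real content is the pair of structural observations above, in particular the use of $\langle \alpha_1, \beta_1 \rangle = 0$ to annihilate the $\alpha_1$-dependent factors. I would expect the companion relation of Lemma \ref{lemma: zeroth order identity 2} to follow from the very same scheme, applied to the appropriate $\cosh$-type member of Lemma \ref{lemma: first order identity} and rescaled in the same way.
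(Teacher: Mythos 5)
Your proof is correct, and it takes a genuinely different route from the paper's. Both arguments start from identity \eqref{eq: first order identity 2}, but the paper proceeds algebraically: it multiplies \eqref{eq: first order identity 2} by $\coth\beta_3 - \coth\beta_2$, uses $\coth^2\beta_3 - \coth^2\beta_2 = \sinh^{-2}\beta_3 - \sinh^{-2}\beta_2$ to absorb that factor into the second term, and is then left with a residual cancellation, equation \eqref{eq: eq: zeroth order identity 1 equivalent}, which it verifies separately via the difference-of-cotangents formula and $\alpha_1 = \beta_2 + \beta_3$. You instead apply the directional derivative $\partial_{\beta_1}$ to \eqref{eq: first order identity 2}: the orthogonality $\langle \beta_1, \alpha_1 \rangle = 0$ annihilates the $\alpha_1$-dependent factors, and since $\langle \beta_1, \beta_2 \rangle = -\langle \beta_1, \beta_3 \rangle = \omega^2$, every surviving term carries a factor $\pm 2\omega^2$; I checked the three differentiations (the two left-hand terms and the right-hand side) and they reproduce \eqref{eq: zeroth order identity 1} term by term after division by $2\omega^2$. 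Your route buys uniformity and a conceptual explanation of where the identity comes from --- it is literally the derivative of \eqref{eq: first order identity 2} in the direction orthogonal to $\alpha_1$, with no residual identity left to check; the paper's route stays entirely within algebraic manipulation of the functions already present. One caveat on your closing remark: the analogous differentiation scheme does not obviously yield Lemma \ref{lemma: zeroth order identity 2}. Differentiating the identities of Lemma \ref{lemma: first order identity} in an orthogonal direction produces further two-term identities of the same shape as \eqref{eq: zeroth order identity 1}, whereas \eqref{eq: zeroth order identity 2} is a cyclic-sum identity whose proof in the paper multiplies \eqref{eq: first order identity 4} by $\coth\beta_1$ and invokes Lemma \ref{lemma: alt expression for X}. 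This does not affect the correctness of your proof of the present lemma.
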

\begin{proof}
	By multiplying the relation \eqref{eq: first order identity 2} by $\coth \beta_3 - \coth \beta_2$, and then using that \\ $\coth^2 \beta_3 - \coth^2 \beta_2 = \sinh^{-2} \beta_3 - \sinh^{-2} \beta_2$, we get
	\begin{equation} \label{eq: zeroth order identity 1 rearranged}
%		\begin{aligned}
		   \bigg( \frac{\coth \beta_3 - \coth \beta_2}{\sinh^2 \beta_2} + \frac{\coth \beta_3 - \coth \beta_2}{\sinh^2 \beta_3} \bigg)\frac{1}{\sinh^2 \alpha_1} + 2 \bigg( \frac{1}{\sinh^2 \beta_3} - \frac{1}{\sinh^2 \beta_2} \bigg) \frac{\coth \alpha_1}{\sinh^2 \alpha_1} 
		= \frac{\coth \beta_3 - \coth \beta_2}{\sinh^2 \beta_2 \sinh^2 \beta_3} \,.
%		\end{aligned}
	\end{equation}
	Comparing relations \eqref{eq: zeroth order identity 1 rearranged} and \eqref{eq: zeroth order identity 1}, it remains to prove that 
	\begin{equation} \label{eq: eq: zeroth order identity 1 equivalent}
	   \bigg(\frac{\coth \beta_3}{\sinh^2 \beta_2} - \frac{\coth \beta_2}{\sinh^2 \beta_3} \bigg)\frac{1}{\sinh^2 \alpha_1}  + \bigg( \frac{1}{\sinh^2 \beta_3} - \frac{1}{\sinh^2 \beta_2} \bigg)\frac{\coth \alpha_1}{\sinh^2 \alpha_1}  = 0 \,.
	\end{equation}
	Note that since $\alpha_1 = \beta_2 + \beta_3$ we get
	\begin{equation*}
	 	\frac{\coth \alpha_1 - \coth \beta_2}{\sinh^2 \beta_3}  - \frac{\coth \alpha_1 - \coth \beta_3}{\sinh^2 \beta_2} = - \frac{\sinh \beta_3}{\sinh^2 \beta_3 \sinh \alpha_1 \sinh \beta_2} + \frac{\sinh \beta_2}{\sinh \beta_3 \sinh \alpha_1 \sinh^2 \beta_2} = 0 \,,
	\end{equation*}
	which implies that relation \eqref{eq: eq: zeroth order identity 1 equivalent} holds as required.
\end{proof}
%We note that similar identities to the identity \eqref{eq: zeroth order identity 1} can be obtained by multiplying the relation~\eqref{eq: first order identity 1} by $\tanh \beta_3 - \tanh \beta_2$, respectively the relation \eqref{eq: first order identity 3} by $\tanh \beta_3 + \tanh \beta_2$, and respectively the relation~\eqref{eq: first order identity 4} by $\coth \beta_3 + \coth \beta_2$, but we do not use these extra identities.
\begin{lemma}\label{lemma: zeroth order identity 2}
	The following identity holds:
	\begin{equation} \label{eq: zeroth order identity 2}
	    \sum_{\sigma} \langle \beta_{\sigma(2)}, \beta_{\sigma(3)} \rangle \bigg( \frac{1}{\sinh^2 \beta_{\sigma(2)}} + \frac{1}{\sinh^2 \beta_{\sigma(3)}} \bigg) \frac{\coth\beta_{\sigma(1)}}{\sinh^2 \beta_{\sigma(1)}} 
	    = 2 \bigg( 2 + \sum_{j=1}^3 \frac{1}{\sinh^2 \beta_j} \bigg) X \,.
	\end{equation}
\end{lemma}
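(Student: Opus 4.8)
The plan is to expand the cyclic sum on the left-hand side of \eqref{eq: zeroth order identity 2} and then collapse it with the help of Lemma~\ref{lemma: sum of products of coths}. The cyclic sum consists of three terms, indexed by the choice of the singled-out index $\sigma(1) \in \{1,2,3\}$; inserting the inner product values $\langle \beta_2, \beta_3 \rangle = \langle \beta_1, \beta_2 \rangle = \omega^2$ and $\langle \beta_1, \beta_3 \rangle = -\omega^2$ and regrouping the six resulting summands according to their common factor $\sinh^{-2}\beta_i \, \sinh^{-2}\beta_j$, I would rewrite the left-hand side as
\[
	\omega^2 \left( \frac{\coth \beta_1 - \coth \beta_2}{\sinh^2 \beta_1 \sinh^2 \beta_2} + \frac{\coth \beta_1 + \coth \beta_3}{\sinh^2 \beta_1 \sinh^2 \beta_3} + \frac{\coth \beta_3 - \coth \beta_2}{\sinh^2 \beta_2 \sinh^2 \beta_3} \right).
\]

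Next I would simplify each numerator using the sum and difference of cotangents formulas $\coth u \pm \coth v = \sinh(v \pm u)(\sinh u \sinh v)^{-1}$, exactly as in the proof of Lemma~\ref{lemma: sum of products of coths}, together with the linear relations $\beta_2 - \beta_1 = \beta_3$, $\beta_1 + \beta_3 = \beta_2$ and $\beta_2 - \beta_3 = \beta_1$. These give $\coth \beta_1 - \coth \beta_2 = \sinh \beta_3 (\sinh \beta_1 \sinh \beta_2)^{-1}$ and the two analogous identities. The key observation is that, after this substitution, each of the three summands equals $X$ times a perfect square: for instance $\omega^2 (\coth \beta_1 - \coth \beta_2)(\sinh^2 \beta_1 \sinh^2 \beta_2)^{-1} = X (\coth \beta_1 - \coth \beta_2)^2$, using $X = \omega^2 (\sinh \beta_1 \sinh \beta_2 \sinh \beta_3)^{-1}$. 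Hence the left-hand side becomes
\[
	X \left[ (\coth \beta_1 - \coth \beta_2)^2 + (\coth \beta_1 + \coth \beta_3)^2 + (\coth \beta_3 - \coth \beta_2)^2 \right].
\]

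Finally, expanding this sum of squares yields $2(\coth^2 \beta_1 + \coth^2 \beta_2 + \coth^2 \beta_3) - 2(\coth \beta_1 \coth \beta_2 - \coth \beta_1 \coth \beta_3 + \coth \beta_2 \coth \beta_3)$. The cross-term bracket is precisely $\omega^{-2}$ times the left-hand side of \eqref{eq: sum of products of coths}, so by Lemma~\ref{lemma: sum of products of coths} it equals $1$; and $\coth^2 \beta_i = 1 + \sinh^{-2}\beta_i$. Substituting both, the bracket multiplying $X$ becomes $2\big(3 + \sum_{j=1}^3 \sinh^{-2}\beta_j\big) - 2 = 2\big(2 + \sum_{j=1}^3 \sinh^{-2}\beta_j\big)$, which is exactly the right-hand side of \eqref{eq: zeroth order identity 2}. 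The only step requiring any ingenuity is the regrouping that exposes the perfect-square structure; once the summands are matched with $(\coth \beta_i \pm \coth \beta_j)^2$, the collapse via Lemma~\ref{lemma: sum of products of coths} is immediate, and the rest is routine trigonometric bookkeeping.
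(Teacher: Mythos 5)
Your proof is correct, but it takes a genuinely different route from the paper's. The paper multiplies identity \eqref{eq: first order identity 4} of Lemma~\ref{lemma: first order identity} by $\coth\beta_1$, rewrites the term $-2(\coth\beta_2-\coth\beta_3)\coth^2\beta_1/\sinh^2\beta_1$ as $2\omega^{-2}\big(1+\sinh^{-2}\beta_1\big)X$, adds the two rotated versions of the resulting equality, and then invokes Lemma~\ref{lemma: alt expression for X} to evaluate the leftover sum $\sum_{\sigma}\langle\beta_{\sigma(2)},\beta_{\sigma(3)}\rangle\coth\beta_{\sigma(1)}\,\sinh^{-2}\beta_{\sigma(2)}\sinh^{-2}\beta_{\sigma(3)}=-2X$. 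You bypass both Lemma~\ref{lemma: first order identity} and Lemma~\ref{lemma: alt expression for X} entirely: after regrouping the six summands by the common factors $\sinh^{-2}\beta_i\sinh^{-2}\beta_j$, you observe that each resulting term equals $X$ times a perfect square $(\coth\beta_i\pm\coth\beta_j)^2$ --- the same addition-formula computation that underlies Lemma~\ref{lemma: sum of products of coths} --- so that expanding the squares reduces everything to $\coth^2\beta_i=1+\sinh^{-2}\beta_i$ plus cross terms, and the cross terms are exactly $\omega^{-2}$ times the left-hand side of \eqref{eq: sum of products of coths}, hence equal to $1$. I checked the regrouping, the sign pattern forced by $\langle\beta_1,\beta_3\rangle=-\omega^2$, the perfect-square identification, and the final bookkeeping, and all are sound. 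What your approach buys is self-containedness and transparency: only Lemma~\ref{lemma: sum of products of coths} and the $\coth$ addition formulas are needed, and the perfect-square structure makes it visible why the answer has the form $2\big(2+\sum_j\sinh^{-2}\beta_j\big)X$. What the paper's route buys is economy within the larger argument: it reuses \eqref{eq: first order identity 4} and Lemma~\ref{lemma: alt expression for X}, both of which are needed elsewhere for the first- and zero-order terms of the intertwining relation, so no new trigonometric work is introduced.
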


\begin{proof}
	Let us multiply the identity \eqref{eq: first order identity 4} in Lemma \ref{lemma: first order identity} by $\coth \beta_1$. It follows that 
	\begin{equation} \label{eq: zeroth order identity 2 LHS term}
	\begin{aligned}
	    &\bigg( \frac{1}{\sinh^2 \beta_2} + \frac{1}{\sinh^2 \beta_3} \bigg)\frac{\coth \beta_1}{\sinh^2 \beta_1}  = 
	    \frac{\coth \beta_1}{\sinh^2 \beta_2 \sinh^2 \beta_3} - 2 (\coth \beta_2 - \coth \beta_3) \frac{\coth^2 \beta_1}{\sinh^2 \beta_1} \\
	    &= \frac{\coth \beta_1}{\sinh^2 \beta_2 \sinh^2 \beta_3} + \frac{2\coth^2 \beta_1}{\sinh \beta_1 \sinh \beta_2 \sinh \beta_3} =
	    \frac{\coth \beta_1}{\sinh^2 \beta_2 \sinh^2 \beta_3} + 2 \omega^{-2} \bigg(1 + \frac{1}{\sinh^{2} \beta_1} \bigg)X \,.
	    \end{aligned}
	\end{equation} 
	We obtain two more variants of the relation \eqref{eq: zeroth order identity 2 LHS term} by applying $\pm \frac{\pi}{3}$ rotations and interchanging the $\beta$'s accordingly (see Remark \ref{rem: variants of identities}). By adding together the resulting three equalities, we get, with use of Lemma~\ref{lemma: alt expression for X}, that the left-hand side of the identity \eqref{eq: zeroth order identity 2} equals
	\begin{equation*}
	    \sum_{\sigma} \frac{\langle \beta_{\sigma(2)}, \beta_{\sigma(3)} \rangle \coth \beta_{\sigma(1)}}{\sinh^2 \beta_{\sigma(2)} \sinh^2 \beta_{\sigma(3)}} + 2 \bigg(3 + \sum_{j=1}^3 \frac{1}{\sinh^{2} \beta_j} \bigg) X = 
	    4 X + \bigg(\sum_{j=1}^3 \frac{2}{\sinh^2 \beta_j} \bigg) X \,.
	\end{equation*}
\end{proof}

\begin{lemma} \label{lemma: zeroth order identity 3}
	The following identity holds:
	\begin{equation} \label{eq: zeroth order id for gs term I}
		\begin{aligned} 
		    \sum_{\sigma} \langle \alpha_{\sigma(2)}, \alpha_{\sigma(3)} \rangle \bigg( \frac{1}{\sinh^2 \alpha_{\sigma(2)}} - \frac{1}{\sinh^2 \alpha_{\sigma(3)}} \bigg)\frac{\coth \alpha_{\sigma(1)}}{\sinh^2 \alpha_{\sigma(1)}}  = 0 \,.
		\end{aligned}
	\end{equation}
\end{lemma}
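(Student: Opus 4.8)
The plan is to collapse the cyclic sum to a single scalar hyperbolic identity controlled by one linear relation among the long roots, and then to reuse Lemma~\ref{lemma: zeroth order identity 1}.

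First I would record the relevant geometry. The long roots form an $A_2$ subsystem and in the paper's numbering satisfy $\alpha_1 = \alpha_2 + \alpha_3$; indeed $\alpha_1 = \beta_2 + \beta_3$, $\alpha_2 = 2\beta_3 - \beta_2$ and $\alpha_3 = 2\beta_2 - \beta_3$, so $\alpha_2 + \alpha_3 = \beta_2 + \beta_3 = \alpha_1$. Since $\langle\alpha_i, \alpha_i\rangle = 3\langle\beta_i,\beta_i\rangle = 6\omega^2$ for every $i$, expanding $\langle\alpha_1,\alpha_1\rangle$ by means of $\alpha_1 = \alpha_2 + \alpha_3$ yields $\langle\alpha_2,\alpha_3\rangle = -3\omega^2$ and $\langle\alpha_1,\alpha_2\rangle = \langle\alpha_1,\alpha_3\rangle = 3\omega^2$. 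Substituting these three numbers into the cyclic sum \eqref{eq: zeroth order id for gs term I} and extracting the common factor $3\omega^2$, the claim reduces, with $z_i := \langle\alpha_i, x\rangle$ and $z_1 = z_2 + z_3$, to the scalar identity
\[
-\left(\frac{1}{\sinh^2 z_2} - \frac{1}{\sinh^2 z_3}\right)\frac{\coth z_1}{\sinh^2 z_1} + \left(\frac{1}{\sinh^2 z_3} - \frac{1}{\sinh^2 z_1}\right)\frac{\coth z_2}{\sinh^2 z_2} + \left(\frac{1}{\sinh^2 z_1} - \frac{1}{\sinh^2 z_2}\right)\frac{\coth z_3}{\sinh^2 z_3} = 0.
\]
The essential feature is that the first term carries the opposite sign to the other two, which is exactly the sign of $\langle\alpha_2,\alpha_3\rangle$ relative to $\langle\alpha_1,\alpha_2\rangle$ and $\langle\alpha_1,\alpha_3\rangle$.

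The key observation is that the proof of Lemma~\ref{lemma: zeroth order identity 1} uses nothing about the vectors beyond the relation $\alpha_1 = \beta_2 + \beta_3$ between their scalar arguments; hence that lemma stays valid after the relabelling $\beta_2 \mapsto \alpha_2$, $\beta_3 \mapsto \alpha_3$, precisely because $\alpha_1 = \alpha_2 + \alpha_3$. Rewriting the first term of the reduced identity as $\left(\frac{1}{\sinh^2 z_3} - \frac{1}{\sinh^2 z_2}\right)\frac{\coth z_1}{\sinh^2 z_1}$, I recognise it as the second summand on the left-hand side of this relabelled Lemma~\ref{lemma: zeroth order identity 1}. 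Replacing it by the right-hand side of that lemma, namely $\frac{\coth z_3 - \coth z_2}{\sinh^2 z_2 \sinh^2 z_3}$ minus its first summand $\left(\frac{\coth z_3}{\sinh^2 z_3} - \frac{\coth z_2}{\sinh^2 z_2}\right)\frac{1}{\sinh^2 z_1}$, the two contributions proportional to $\frac{1}{\sinh^2 z_1}$ cancel identically, and the remaining terms, all of the shape $\coth z_j\,(\sinh^2 z_2 \sinh^2 z_3)^{-1}$ with $j \in \{2,3\}$, cancel in pairs.

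I expect the only real obstacle to be the sign bookkeeping, since it is the negativity of $\langle\alpha_2,\alpha_3\rangle$ against the positivity of the other two inner products that aligns the first term with Lemma~\ref{lemma: zeroth order identity 1}; once the signs are set correctly the cancellation is automatic. Should one prefer to avoid citing the earlier lemma, an equivalent route is to set $c_i = \coth z_i$, use $\frac{1}{\sinh^2 z} = c^2 - 1$ and $\frac{\coth z}{\sinh^2 z} = c^3 - c$, and verify the resulting polynomial identity modulo the addition-formula constraint $c_1(c_2 + c_3) = c_2 c_3 + 1$; this works but is more laborious than the reduction above.
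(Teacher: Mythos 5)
Your proof is correct, and it takes a genuinely different route from the paper's. Both arguments start the same way: using $\alpha_1=\alpha_2+\alpha_3$ and the inner products $\langle\alpha_2,\alpha_3\rangle=-3\omega^2$, $\langle\alpha_1,\alpha_2\rangle=\langle\alpha_1,\alpha_3\rangle=3\omega^2$ to strip the cyclic sum down to a scalar identity in $z_i=\langle\alpha_i,x\rangle$ with $z_1=z_2+z_3$ (the paper does this implicitly by multiplying through by $-\tfrac13\sinh^2\alpha_1\sinh^2\alpha_2\sinh^2\alpha_3$). From there the paper argues directly: it factors $\sinh^2\alpha_3-\sinh^2\alpha_2=\sinh(\alpha_3-\alpha_2)\sinh(\alpha_3+\alpha_2)$, uses a product-to-sum formula to write each cleared term as $\tfrac12\sinh(2\alpha_j)-\tfrac12\sinh(2\alpha_k)$, and lets the three contributions telescope to zero, with Remark \ref{rem: variants of identities} supplying the second and third terms by rotation. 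You instead recycle Lemma \ref{lemma: zeroth order identity 1}, and your key observation is accurate: the proof of that lemma uses nothing about the vectors beyond the additivity $\alpha_1=\beta_2+\beta_3$ of the scalar arguments (it rests on \eqref{eq: first order identity 2} and two cancellations, all of which are identities in $b$, $c$ and $b+c$ alone), so it transports verbatim to the triple $(\alpha_1,\alpha_2,\alpha_3)$; substituting its conclusion for the first term of the reduced identity, the terms proportional to $\sinh^{-2}z_1$ and the terms of the form $\coth z_j(\sinh^2 z_2\sinh^2 z_3)^{-1}$ do cancel in pairs, as I have checked. Your route buys economy of means — no new trigonometric manipulation, just sign bookkeeping plus an earlier lemma recognised at its natural level of generality — while the paper's computation is self-contained and arguably faster to verify because the telescoping of the $\sinh(2\alpha_j)$ terms is immediate. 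Your closing alternative (a polynomial identity in $c_i=\coth z_i$ modulo the constraint $c_1(c_2+c_3)=1+c_2c_3$) is also sound but, as you note, more laborious than either of the above.
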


\begin{proof}
	Let us multiply both sides of \eqref{eq: zeroth order id for gs term I} by $-\frac13 \sinh^2 \alpha_1 \sinh^2 \alpha_2 \sinh^2 \alpha_3$. We need to prove that \\
	$(\sinh^2 \alpha_3 - \sinh^2 \alpha_2)\coth \alpha_1  - (\sinh^2 \alpha_1 - \sinh^2 \alpha_3)\coth \alpha_2  - (\sinh^2 \alpha_2 - \sinh^2 \alpha_1)\coth \alpha_3  = 0 $. We have
	\begin{equation} \label{eq: a term in zeroth order id for gs term I}
			(\sinh^2 \alpha_3 - \sinh^2 \alpha_2)\coth \alpha_1  = \sinh (\alpha_3 - \alpha_2)\cosh (\alpha_2 + \alpha_3) = \frac12 \sinh(2 \alpha_3) - \frac12 \sinh(2 \alpha_2) \,,
	\end{equation}
	since $\alpha_1 = \alpha_2 + \alpha_3$. By applying the rotations by $\pm \frac{\pi}{3}$ (see Remark \ref{rem: variants of identities}), we obtain from \eqref{eq: a term in zeroth order id for gs term I} expressions for $(\sinh^2 \alpha_1 - \sinh^2 \alpha_3)\coth \alpha_2 $ and $(\sinh^2 \alpha_2 - \sinh^2 \alpha_1)\coth \alpha_3$, which imply~the~statement.
\end{proof}

\section{The intertwining operator} \label{section: D}
     In this section we define the intertwining operator $\mathcal{D}$ given by formula \eqref{eq: D}, that is we define the corresponding functions $f_i$, $g_i$ and $h$. We also find the gradient and Laplacian  of these functions in a few lemmas in this section. This information is then used in Section \ref{section: intertwining relation} to prove the intertwining relation~\eqref{eq: intertwining relation}.
 
     We start with the following general lemma.
     \begin{lemma} \label{lemma: delta}
     	For any single-variable function $F$ and vectors $\alpha$, $\beta$, $\gamma$ such that $\langle \gamma, \gamma \rangle \neq 0$ we have 
     	\[
     		\partial_\alpha\partial_\beta \big(F \big(\langle \gamma, x \rangle \big) \big) = \frac{\langle \gamma, \alpha \rangle \langle \gamma, \beta \rangle }{\langle \gamma, \gamma \rangle} \Delta\big(F\big(\langle \gamma, x \rangle\big)\big) \,.
     	\] 
     \end{lemma}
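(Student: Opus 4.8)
The plan is to reduce both sides of the claimed identity to a common scalar multiple of $F''(\langle \gamma, x \rangle)$ by repeated application of the chain rule, and then compare coefficients. The essential observation is that a function of the form $F(\langle \gamma, x \rangle)$ depends on $x$ only through the single scalar $\langle \gamma, x \rangle$, so that every differentiation of it produces a derivative of $F$ multiplied by an inner product of $\gamma$ with the direction of differentiation.

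First I would record the elementary fact that $\partial_\delta \langle \gamma, x \rangle = \langle \gamma, \delta \rangle$ for any vector $\delta$, which follows immediately from the definitions $\partial_\delta = \delta^{(1)} \partial_1 + \delta^{(2)} \partial_2$ and $\langle \gamma, x \rangle = \gamma^{(1)} x_1 + \gamma^{(2)} x_2$. Applying the chain rule once then gives $\partial_\beta \big( F(\langle \gamma, x \rangle) \big) = \langle \gamma, \beta \rangle \, F'(\langle \gamma, x \rangle)$, and applying $\partial_\alpha$ to this, noting that the factor $\langle \gamma, \beta \rangle$ is a constant independent of $x$, yields $\partial_\alpha \partial_\beta \big( F(\langle \gamma, x \rangle) \big) = \langle \gamma, \alpha \rangle \langle \gamma, \beta \rangle \, F''(\langle \gamma, x \rangle)$. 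This evaluates the left-hand side.

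Next I would compute the Laplacian in exactly the same fashion: for each coordinate one has $\partial_j^2 \big( F(\langle \gamma, x \rangle) \big) = \big( \gamma^{(j)} \big)^2 F''(\langle \gamma, x \rangle)$, so that summing over $j = 1, 2$ gives $\Delta \big( F(\langle \gamma, x \rangle) \big) = \langle \gamma, \gamma \rangle \, F''(\langle \gamma, x \rangle)$. Since $\langle \gamma, \gamma \rangle \neq 0$ by hypothesis, I can solve this for $F''(\langle \gamma, x \rangle)$ and substitute the result into the expression for the left-hand side obtained above, which produces the claimed identity at once.

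There is no genuine obstacle here: the statement is a direct consequence of the chain rule, and the assumption $\langle \gamma, \gamma \rangle \neq 0$ enters only to permit the final division. The one point deserving minor care is the bookkeeping of the directional-derivative notation $\partial_\delta$, together with the fact that the inner-product factors such as $\langle \gamma, \beta \rangle$ are truly constant in $x$ and hence commute freely with the outer derivative $\partial_\alpha$.
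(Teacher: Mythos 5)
Your proposal is correct and follows essentially the same route as the paper's proof: both compute $\partial_\alpha\partial_\beta\big(F(\langle\gamma,x\rangle)\big) = \langle\gamma,\alpha\rangle\langle\gamma,\beta\rangle F''(\langle\gamma,x\rangle)$ and $\Delta\big(F(\langle\gamma,x\rangle)\big) = \langle\gamma,\gamma\rangle F''(\langle\gamma,x\rangle)$ by the chain rule and then compare, with the hypothesis $\langle\gamma,\gamma\rangle\neq 0$ used only for the final division. Your write-up simply spells out the intermediate bookkeeping in more detail.
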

     \begin{proof}
     	By the chain rule of differentiation, $\Delta(F(\langle \gamma, x \rangle)) = \langle \gamma, \gamma \rangle F''(\langle \gamma, x \rangle) $, while 	$\partial_\alpha\partial_\beta(F(\langle \gamma, x \rangle)) =$ \\ $\langle \gamma, \alpha \rangle \langle \gamma, \beta \rangle F''(\langle \gamma, x \rangle)$, where $F''$ denotes the second derivative of the function $F$.
     \end{proof}

   	In the expression \eqref{eq: D} for the operator $\mathcal{D} \,,$ let 
   	\begin{equation}
   		\begin{aligned} \label{def: fs}
   			f_j &= - (3m + 1) \langle \beta_j, \beta_j \rangle \coth\beta_j - \langle \beta_j, \beta_j \rangle \tanh\beta_j, % \\
   		%	&= -2 \big((3m + 1) \coth\beta_j + \tanh\beta_j \big) \,,
   		\end{aligned}
   	\end{equation}
	where $j = 1,2,3$.     
    
    In the next lemma we calculate the gradient and Laplacian of the functions $f_j$.  
     \begin{lemma} \label{lemma: properties of fs}
     	The functions $f_j$ defined by expression $\eqref{def: fs}$, $j = 1,2,3$, satisfy the following relations:
     	\begin{enumerate}
     		\item \label{eq: nabla f} $\nabla(f_j) = \frac12 \widehat{u}_j\beta_j \,,$ (or, equivalently, $\partial_{\nabla(f_j)} = \frac12 \widehat{u}_j \partial_{\beta_j}$),
     		\item \label{eq: delta f} $-\Delta(f_j) + \widehat{u}_j f_j = \partial_{\beta_j}(u_j)$.
     	\end{enumerate}
     \end{lemma}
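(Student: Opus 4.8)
The plan is to exploit the fact that each of $f_j$, $u_j$ and $\widehat{u}_j$ depends on $x$ only through the single linear form $s := \langle \beta_j, x \rangle$, so that both parts reduce to one-variable calculus. Throughout I would write $b = \langle \beta_j, \beta_j \rangle$ and regard $f_j$, $u_j$, $\widehat{u}_j$ as functions of $s$, using the elementary derivatives $\frac{d}{ds}\coth s = -\sinh^{-2} s$ and $\frac{d}{ds}\tanh s = \cosh^{-2} s$.

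For part \eqref{eq: nabla f} I would use that the gradient of any function $F\big(\langle\beta_j, x\rangle\big)$ equals $F'\big(\langle\beta_j, x\rangle\big)\,\beta_j$, so that $\nabla(f_j) = f_j'(s)\,\beta_j$ where $f_j'$ denotes differentiation in $s$. Differentiating the defining expression \eqref{def: fs} gives
\[
	f_j'(s) = \frac{(3m+1)b}{\sinh^2 s} - \frac{b}{\cosh^2 s},
\]
which by \eqref{eq: uhat's} is exactly $\tfrac12 \widehat{u}_j$; this proves \eqref{eq: nabla f}. The equivalent statement for $\partial_{\nabla(f_j)}$ is then immediate, since $\partial_{\beta_j}$ is the directional derivative along $\beta_j$ and $\nabla(f_j)$ is proportional to $\beta_j$.

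For part \eqref{eq: delta f} I would first compute $\Delta(f_j)$. Applying Lemma \ref{lemma: delta} with $\alpha = \beta = \gamma = \beta_j$ (equivalently, using $\Delta\big(F(\langle\beta_j,x\rangle)\big) = \langle\beta_j,\beta_j\rangle F''(\langle\beta_j,x\rangle)$ as established inside its proof) gives $\Delta(f_j) = b\,f_j''(s)$, and differentiating $f_j'(s)$ once more yields
\[
	\Delta(f_j) = -2(3m+1)b^2\frac{\cosh s}{\sinh^3 s} + 2b^2\frac{\sinh s}{\cosh^3 s}.
\]
Next I would expand the product $\widehat{u}_j f_j$ from the explicit forms of the two factors. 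The crucial point is that the two mixed terms, each proportional to $(\sinh s\,\cosh s)^{-1}$, cancel, leaving
\[
	\widehat{u}_j f_j = -2(3m+1)^2 b^2\frac{\cosh s}{\sinh^3 s} + 2b^2\frac{\sinh s}{\cosh^3 s}.
\]
Adding $-\Delta(f_j)$ to this, the $\cosh^{-3} s$ contributions cancel and the coefficients of $\cosh s\,\sinh^{-3} s$ combine to $2(3m+1)\bigl(1-(3m+1)\bigr)b^2 = -6m(3m+1)b^2$, so that
\[
	-\Delta(f_j) + \widehat{u}_j f_j = -6m(3m+1)b^2\,\frac{\cosh s}{\sinh^3 s}.
\]
Finally I would check that $\partial_{\beta_j}(u_j) = b\,u_j'(s)$ gives precisely the same expression, which follows at once from differentiating $u_j = 3m(3m+1)b\,\sinh^{-2} s$ in \eqref{eq: u's}.

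There is no genuine obstacle here: the argument is a direct one-variable computation once the reduction to the single form $\langle\beta_j, x\rangle$ is made. The only points requiring care are the bookkeeping of the $(3m+1)$-factors and, above all, verifying the cancellation of the $(\sinh s\,\cosh s)^{-1}$ cross terms in $\widehat{u}_j f_j$, without which the two sides of \eqref{eq: delta f} would fail to match.
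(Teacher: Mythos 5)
Your proposal is correct and takes essentially the same approach as the paper: both reduce everything to one-variable calculus in $s = \langle \beta_j, x\rangle$, verify part (1) by differentiating \eqref{def: fs}, and prove part (2) by computing $\Delta(f_j)$ and $\widehat{u}_j f_j$, observing the cancellation of the $(\sinh s \cosh s)^{-1}$ cross terms, and matching the result against $\partial_{\beta_j}(u_j)$. All of your intermediate expressions agree with the paper's.
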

 
 	\begin{proof}
 	    Part \ref{eq: nabla f} follows from the equality  
 	    \[ 
 	    	\partial_i(f_j) = \bigg(\frac{(3m+1)\langle \beta_j, \beta_j\rangle}{\sinh^2 \beta_j} - \frac{\langle \beta_j, \beta_j\rangle}{\cosh^2 \beta_j} \bigg) \beta_j^{(i)} = \frac12 \widehat{u}_j \beta_j^{(i)}\,, 
 	 	\]
 	 	$i = 1, 2$, where $\beta_j = (\beta_j^{(1)}, \beta_j^{(2)})$ and we used the definition \eqref{eq: uhat's}.
 	 	
 	 	To establish property \ref{eq: delta f} we note that 
 		\[	
 			\Delta(f_j) = -\frac{2(3m+1)\langle \beta_j, \beta_j\rangle^2\coth \beta_j}{\sinh^2 \beta_j}  +  \frac{2\langle \beta_j, \beta_j\rangle^2\tanh \beta_j}{\cosh^2 \beta_j} \,. 
 		\]
 		Expanding and simplifying the product $\widehat{u}_j f_j$ yields
 		\[
 			\widehat{u}_j f_j = -\frac{2(3m+1)^2 \langle \beta_j, \beta_j\rangle^2 \coth \beta_j}{\sinh^2 \beta_j}  +  \frac{2\langle \beta_j, \beta_j\rangle^2\tanh \beta_j}{\cosh^2 \beta_j} \,.
 		\]
 		Therefore,
 		\[
 			-\Delta(f_j) + \widehat{u}_j f_j = -\frac{6m(3m+1)\langle \beta_j, \beta_j\rangle^2\coth \beta_j}{\sinh^2 \beta_j} =  \partial_{\beta_j}(u_j) \,,  
 		\] 
 		by relation \eqref{eq: u's}, as required.	
 	\end{proof}

   	For each $j = 1,2,3$, let $g_j$ in the operator \eqref{eq: D} be defined by
   	\begin{equation} \label{def: gs}
      	g_j = g_j^{(\RN{1})} + g_j^{(\RN{2})} + g_j^{(\RN{3})} \,, 
   	\end{equation}
   	where 
   	\begin{alignat}{2}
   	    &g_j^{(\RN{1})} &&= \prod_{k \neq j} f_k \,, \label{def: gI} \\
   	    &g_j^{(\RN{2})} &&= - \frac{\prod_{k \neq j} \langle \alpha_j, \beta_k \rangle}{\langle \alpha_j, \alpha_j \rangle} v_j \,, \label{def: gII} \\
   	    &g_j^{(\RN{3})} &&= - \frac{\prod_{k \neq j} \langle \beta_j, \beta_k \rangle}{\langle \beta_j, \beta_j \rangle} u_j \,,  \label{def: gIII} 
   	\end{alignat}
   	or, more explicitly,
   	\begin{alignat*}{2} 
   	    &g_1 &&= f_2 f_3 - \frac{9m(m+1)\omega^4}{\sinh^2 \alpha_1} + \frac{3m(3m+1)\omega^4}{\sinh^2 \beta_1} \,,  \\
   	    &g_2 &&= f_1 f_3 + \frac{9m(m+1)\omega^4}{\sinh^2 \alpha_2} - \frac{3m(3m+1)\omega^4}{\sinh^2 \beta_2} \,,  \\
   	    &g_3 &&= f_1 f_2 - \frac{9m(m+1)\omega^4}{\sinh^2 \alpha_3} + \frac{3m(3m+1)\omega^4}{\sinh^2 \beta_3} \,. 
   	\end{alignat*}
    
    In the next lemma we find gradients of the functions $g_j^{(\RN{2})}$, $g_j^{(\RN{3})}$.
 	\begin{lemma} \label{lemma: properties of gs part 0}
 		The functions $g_j^{(\RN{2})}$, $g_j^{(\RN{3})}$ defined by formulas \eqref{def: gII} and \eqref{def: gIII} satisfy the following relations for all $\sigma \in A_3$:
 		\begin{enumerate}
 			\item \label{eq: nabla gII no sum} $ \displaystyle 2\partial_{\nabla (g_{\sigma(1)}^{(\RN{2})} )} = 
 			-\partial_{\beta_{\sigma(2)}} \big(v_{\sigma(1)} \big)\partial_{\beta_{\sigma(3)}} -
 			\partial_{\beta_{\sigma(3)}} \big( v_{\sigma(1)} \big) \partial_{\beta_{\sigma(2)}} \,, $
 			\item \label{eq: nabla gIII no sum} $ \displaystyle 2\partial_{\nabla (g_{\sigma(1)}^{(\RN{3})})} = 
 			-\partial_{\beta_{\sigma(2)}} \big(u_{\sigma(1)} \big)\partial_{\beta_{\sigma(3)}} - 
 			\partial_{\beta_{\sigma(3)}} \big( u_{\sigma(1)} \big) \partial_{\beta_{\sigma(2)}} \,$.
 		\end{enumerate} 		
 	\end{lemma}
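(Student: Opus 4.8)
The plan is to reduce each of the two stated operator identities to an elementary vector identity. Consider part~(1). By \eqref{def: gII} the function $g^{(\RN{2})}_{\sigma(1)}$ is a constant multiple of $v_{\sigma(1)}$, namely $g^{(\RN{2})}_{\sigma(1)} = -\langle\alpha_{\sigma(1)},\alpha_{\sigma(1)}\rangle^{-1}\langle\alpha_{\sigma(1)},\beta_{\sigma(2)}\rangle\langle\alpha_{\sigma(1)},\beta_{\sigma(3)}\rangle\, v_{\sigma(1)}$, and by \eqref{eq: u's} the potential $v_{\sigma(1)}$ depends on $x$ only through $\langle\alpha_{\sigma(1)}, x\rangle$. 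Writing $v_{\sigma(1)} = F(\langle\alpha_{\sigma(1)}, x\rangle)$, the chain rule gives $\nabla(v_{\sigma(1)}) = F'(\langle\alpha_{\sigma(1)}, x\rangle)\,\alpha_{\sigma(1)}$, so that $\nabla(g^{(\RN{2})}_{\sigma(1)})$ is a scalar multiple of $\alpha_{\sigma(1)}$, while $\partial_{\beta_{\sigma(2)}}(v_{\sigma(1)}) = F'\langle\alpha_{\sigma(1)}, \beta_{\sigma(2)}\rangle$ and likewise for $\beta_{\sigma(3)}$. Both sides of~(1) therefore carry the common factor $F'$; cancelling it and using the linearity of $\gamma\mapsto\partial_\gamma$, the identity~(1) becomes the vector identity
\[
\frac{2\langle\alpha_{\sigma(1)},\beta_{\sigma(2)}\rangle\langle\alpha_{\sigma(1)},\beta_{\sigma(3)}\rangle}{\langle\alpha_{\sigma(1)},\alpha_{\sigma(1)}\rangle}\,\alpha_{\sigma(1)} = \langle\alpha_{\sigma(1)},\beta_{\sigma(3)}\rangle\,\beta_{\sigma(2)} + \langle\alpha_{\sigma(1)},\beta_{\sigma(2)}\rangle\,\beta_{\sigma(3)}.
\]

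To prove this, denote by $w$ its right-hand side. Pairing with $\alpha_{\sigma(1)}$ immediately gives $\langle w, \alpha_{\sigma(1)}\rangle = 2\langle\alpha_{\sigma(1)},\beta_{\sigma(2)}\rangle\langle\alpha_{\sigma(1)},\beta_{\sigma(3)}\rangle$, so once we know that $w$ is proportional to $\alpha_{\sigma(1)}$ the projection formula $w = \langle\alpha_{\sigma(1)},\alpha_{\sigma(1)}\rangle^{-1}\langle w, \alpha_{\sigma(1)}\rangle\,\alpha_{\sigma(1)}$ yields precisely the left-hand side. The proportionality is where the geometry of $AG_2$ enters: for each $\sigma\in A_3$ the vector $\alpha_{\sigma(1)}$ is a $\pm1$-combination of the two short roots $\beta_{\sigma(2)}, \beta_{\sigma(3)}$, which are linearly independent and of equal length $\langle\beta_{\sigma(2)},\beta_{\sigma(2)}\rangle = \langle\beta_{\sigma(3)},\beta_{\sigma(3)}\rangle$ (for $\sigma = \mathrm{id}$ this is the relation $\alpha_1 = \beta_2 + \beta_3$). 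Expanding both $w$ and $\alpha_{\sigma(1)}$ in the basis $\{\beta_{\sigma(2)}, \beta_{\sigma(3)}\}$ and using the equal-length condition, the coefficient ratios of $\beta_{\sigma(2)}$ and $\beta_{\sigma(3)}$ agree, so $w$ is proportional to $\alpha_{\sigma(1)}$; crucially the signs of the combination drop out, so all three $\sigma$ are handled simultaneously.

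Part~(2) is entirely analogous. By \eqref{def: gIII} the function $g^{(\RN{3})}_{\sigma(1)}$ equals $-\langle\beta_{\sigma(1)},\beta_{\sigma(1)}\rangle^{-1}\langle\beta_{\sigma(1)},\beta_{\sigma(2)}\rangle\langle\beta_{\sigma(1)},\beta_{\sigma(3)}\rangle\, u_{\sigma(1)}$, and $u_{\sigma(1)}$ depends on $x$ only through $\langle\beta_{\sigma(1)}, x\rangle$. Repeating the reduction with $\alpha_{\sigma(1)}$ replaced by $\beta_{\sigma(1)}$ and $v_{\sigma(1)}$ by $u_{\sigma(1)}$ turns~(2) into the same vector identity with $\beta_{\sigma(1)}$ in place of $\alpha_{\sigma(1)}$, which holds by the same argument since $\beta_{\sigma(1)}$ is again a $\pm1$-combination of the equal-length pair $\beta_{\sigma(2)}, \beta_{\sigma(3)}$ (for $\sigma = \mathrm{id}$, $\beta_1 = \beta_2 - \beta_3$).

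I expect the only non-routine point to be the proportionality $w \parallel \alpha_{\sigma(1)}$ (respectively $w \parallel \beta_{\sigma(1)}$) together with the verification that the argument is genuinely uniform in $\sigma$. I would isolate this as a one-line auxiliary fact: for linearly independent equal-length vectors $\delta, \eta$ and any $\gamma = \pm\delta\pm\eta$, the vector $\langle\gamma,\eta\rangle\delta + \langle\gamma,\delta\rangle\eta$ is proportional to $\gamma$. Granting this, both parts follow from the projection formula, with the chain rule for $\nabla$ and the linearity of $\gamma\mapsto\partial_\gamma$ being the only remaining, purely routine, ingredients.
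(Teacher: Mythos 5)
Your proof is correct, and its skeleton coincides with the paper's: both arguments use the chain rule to write each side as the scalar derivative $v_{\sigma(1)}'$ (resp.\ $u_{\sigma(1)}'$) times a first-order operator with constant vector coefficients, so everything reduces to the additive relations among the roots. The difference is in how that constant-vector identity is then verified. The paper treats $\sigma=\mathrm{id}$, substitutes the explicit values $\langle\alpha_1,\beta_2\rangle=\langle\alpha_1,\beta_3\rangle=3\omega^2$ and $\langle\alpha_1,\alpha_1\rangle=6\omega^2$, and collapses the right-hand side via $\partial_{\beta_2}+\partial_{\beta_3}=\partial_{\alpha_1}$, finding that both sides equal $-3\omega^2 v_1'\partial_{\alpha_1}$; the remaining cases are declared analogous. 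You instead prove the coordinate-free statement that
\[
w=\langle\gamma,\eta\rangle\,\delta+\langle\gamma,\delta\rangle\,\eta
\]
is proportional to $\gamma$ whenever $\gamma=\epsilon_1\delta+\epsilon_2\eta$ with $\epsilon_1,\epsilon_2\in\{\pm1\}$ and $\langle\delta,\delta\rangle=\langle\eta,\eta\rangle$, and then recover the exact coefficient from the projection formula. Your flagged auxiliary fact does hold, with explicit constant: writing $a=\langle\delta,\delta\rangle=\langle\eta,\eta\rangle$ and $b=\langle\delta,\eta\rangle$, one checks directly that $w=(b+\epsilon_1\epsilon_2 a)\,\gamma$, the signs cancelling because $\epsilon_i^2=1$. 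What your route buys is uniformity and transparency: all three $\sigma$ in each part (and both parts) are dispatched at once, no coordinates or numerical inner products are needed, and the only geometric inputs are the equal lengths of the $\beta_i$ and the relations $\alpha_{\sigma(1)}=\pm\beta_{\sigma(2)}\pm\beta_{\sigma(3)}$, $\beta_{\sigma(1)}=\pm\beta_{\sigma(2)}\pm\beta_{\sigma(3)}$, so the argument would survive any deformation of the configuration preserving these. The paper's computation is shorter for a single $\sigma$ but leaves the other cases, and the structural reason the coefficients match, implicit.
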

 
    \begin{proof}
    	We give proof for $\sigma = id$, the other cases are analogous. In the right-hand side of part \ref{eq: nabla gII no sum} we~have
    	\begin{equation*}
      	    - \partial_{\beta_2}(v_1) \partial_{\beta_3} - \partial_{\beta_3}(v_1) \partial_{\beta_2}   = 
    	    - \langle \alpha_1, \beta_2 \rangle v_1' \partial_{\beta_3} - \langle \alpha_1, \beta_3 \rangle v_1' \partial_{\beta_2} =
    	    - 3 v_1'\omega^2(\partial_{\beta_3} + \partial_{\beta_2}) = - 3 v_1' \omega^2\partial_{\alpha_1} \,,
    	\end{equation*} 
    	where $v_1'(x) = \frac{dV}{dz}|_{z=\langle \alpha_1, x \rangle}$, $ V(z) = m(m+1)\langle \alpha_1, \alpha_1\rangle \sinh^{-2}z $.
        And in the left-hand side of relation \ref{eq: nabla gII no sum} we get 
         \begin{equation*}
         	 2\partial_{\nabla(g_1^{(\RN{2})})} = -2 \frac{\langle \alpha_1, \beta_2 \rangle \langle \alpha_1, \beta_3 \rangle}{\langle \alpha_1, \alpha_1 \rangle} \partial_{\nabla(v_1)} = -2 \frac{\langle \alpha_1, \beta_2 \rangle \langle \alpha_1, \beta_3 \rangle}{\langle \alpha_1, \alpha_1 \rangle} v_1' \partial_{\alpha_1} = -3 \omega^2 v_1' \partial_{\alpha_1} \,,
         \end{equation*}
         so the two sides are equal.
        The proof of part \ref{eq: nabla gIII no sum} is similar.
    \end{proof}
 	
 	It will be useful to combine gradients of functions $g_j^{(\RN{1})}$, $g_j^{(\RN{2})}$, $g_j^{(\RN{3})}$ as in the following lemma.
 	\begin{lemma} \label{lemma: properties of gs part 1}
 		Functions $g_j^{(\RN{1})}$, $g_j^{(\RN{2})}$, $g_j^{(\RN{3})}$ defined by formulas \eqref{def: gI} -- \eqref{def: gIII} satisfy the following relations:
 		\begin{enumerate}
 			\item \label{eq: nabla gI} $ \displaystyle 2 \sum_{i=1}^{3} \partial_{\nabla(g_i^{(\RN{1})})} \partial_{\beta_i} = \sum_{\sigma} \bigg( \sum_{j  \neq \sigma(1)} \widehat{u}_j \bigg) f_{\sigma(1)} \partial_{\beta_{\sigma(2)}}\partial_{\beta_{\sigma(3)}} \,,$ 
 			\item \label{eq: nabla gII} $\displaystyle 2 \sum_{i=1}^{3} \partial_{\nabla(g_i^{(\RN{2})})} \partial_{\beta_i} = - \sum_{\sigma} \bigg(\partial_{\beta_{\sigma(2)}} \big(v_{\sigma(1)} \big)\partial_{\beta_{\sigma(3)}} + 
 			\partial_{\beta_{\sigma(3)}} \big( v_{\sigma(1)} \big) \partial_{\beta_{\sigma(2)}} \bigg) \partial_{\beta_{\sigma(1)}}
 			\\ \null \qquad\qquad\qquad\quad = - \sum_{\sigma} \partial_{\beta_{\sigma(1)}} \bigg(\sum_{j \neq \sigma(1)} v_j \bigg) \partial_{\beta_{\sigma(2)}}\partial_{\beta_{\sigma(3)}}
 			= - \sum_{\sigma} \partial_{\beta_{\sigma(1)}} \bigg(\sum_{j=1}^3 v_j \bigg) \partial_{\beta_{\sigma(2)}}\partial_{\beta_{\sigma(3)}} \,,$
 			\item \label{eq: nabla gIII}$\displaystyle 2 \sum_{i=1}^{3} \partial_{\nabla(g_i^{(\RN{3})})} \partial_{\beta_i} = - \sum_{\sigma} \bigg(\partial_{\beta_{\sigma(2)}} \big( u_{\sigma(1)} \big) \partial_{\beta_{\sigma(3)}} + 
 			\partial_{\beta_{\sigma(3)}} \big(u_{\sigma(1)} \big)\partial_{\beta_{\sigma(2)}} \bigg) \partial_{\beta_{\sigma(1)}}
 			\\ \null \qquad\qquad\qquad\quad = - \sum_{\sigma} \partial_{\beta_{\sigma(1)}} \bigg(\sum_{j \neq \sigma(1)} u_j \bigg) \partial_{\beta_{\sigma(2)}}\partial_{\beta_{\sigma(3)}} \,.$
 		\end{enumerate}
 	\end{lemma}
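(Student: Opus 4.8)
The plan is to reduce everything to the two preceding lemmas, since all three parts amount to bookkeeping on top of gradient computations already in hand. The observation that makes the sums line up is that the map $\sigma\mapsto\sigma(1)$ is a bijection from $A_3$ onto $\{1,2,3\}$, so that $\sum_{i=1}^3\partial_{\nabla(g_i^{(\RN{1})})}\partial_{\beta_i} = \sum_\sigma\partial_{\nabla(g_{\sigma(1)}^{(\RN{1})})}\partial_{\beta_{\sigma(1)}}$, and likewise with $\RN{1}$ replaced by $\RN{2}$ or $\RN{3}$. Throughout I would use freely that the operators $\partial_{\beta_1},\partial_{\beta_2},\partial_{\beta_3}$ have constant coefficients and hence commute pairwise, so that every term on either side can be read as a scalar function times a second-order monomial $\partial_{\beta_j}\partial_{\beta_k}$, and matching the two sides reduces to matching these scalar coefficients.

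For part \ref{eq: nabla gI} I would compute directly. Since $g_i^{(\RN{1})} = \prod_{k\neq i} f_k$ is a product of two factors, the Leibniz rule gives $\nabla(g_i^{(\RN{1})}) = \sum_{k\neq i}\big(\prod_{l\neq i,k} f_l\big)\nabla(f_k)$, and substituting $\nabla(f_k) = \tfrac12\widehat{u}_k\beta_k$ from part \ref{eq: nabla f} of Lemma \ref{lemma: properties of fs} turns $2\partial_{\nabla(g_i^{(\RN{1})})}$ into $\sum_{k\neq i}\widehat{u}_k\big(\prod_{l\neq i,k} f_l\big)\partial_{\beta_k}$. Summing $2\partial_{\nabla(g_i^{(\RN{1})})}\partial_{\beta_i}$ over $i$ and collecting the coefficient of each monomial $\partial_{\beta_j}\partial_{\beta_k}$ ($j<k$) yields $(\widehat{u}_j + \widehat{u}_k)f_l$, where $l$ is the remaining index; this is exactly the cyclic-sum right-hand side, the term with $\sigma(1)=l$.

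For parts \ref{eq: nabla gII} and \ref{eq: nabla gIII} I would apply Lemma \ref{lemma: properties of gs part 0} termwise: the bijection above together with parts \ref{eq: nabla gII no sum} and \ref{eq: nabla gIII no sum} of that lemma gives immediately the first equalities. To pass to the second equality I would, for each second-order monomial, collect the two contributions; after commuting the derivatives, the coefficient of $\partial_{\beta_{\sigma(2)}}\partial_{\beta_{\sigma(3)}}$ comes out as $\partial_{\beta_{\sigma(1)}}\big(\sum_{j\neq\sigma(1)} v_j\big)$ (respectively with $u$), which is the stated regrouping. Finally, for the last equality in part \ref{eq: nabla gII}, replacing $\sum_{j\neq\sigma(1)} v_j$ by $\sum_{j=1}^3 v_j$ only adds the term $\partial_{\beta_{\sigma(1)}}(v_{\sigma(1)}) = \langle\alpha_{\sigma(1)},\beta_{\sigma(1)}\rangle\, v_{\sigma(1)}' = 0$, since $\langle\alpha_i,\beta_i\rangle = 0$ by the choice of labelling; the same replacement is unavailable in part \ref{eq: nabla gIII} precisely because $\partial_{\beta_j}(u_j)$ involves $\langle\beta_j,\beta_j\rangle\neq 0$.

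I do not expect a genuine obstacle here: the content is the commutativity of the $\partial_{\beta_i}$ and careful tracking of the cyclic index $\sigma$, the single structural input being the orthogonality $\langle\alpha_i,\beta_i\rangle = 0$ that collapses $\partial_{\beta_i}(v_i)$ to zero in the final identity of part \ref{eq: nabla gII}. The only place demanding care is the index bookkeeping when expanding the cyclic sums, so I would fix the three explicit values $\sigma\in\{\mathrm{id},(123),(132)\}$ once and reuse them.
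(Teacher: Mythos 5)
Your proposal is correct and follows essentially the same route as the paper: part \ref{eq: nabla gI} via the Leibniz rule and Lemma \ref{lemma: properties of fs} part \ref{eq: nabla f}, and parts \ref{eq: nabla gII} and \ref{eq: nabla gIII} by applying Lemma \ref{lemma: properties of gs part 0} termwise and regrouping the cyclic sums. You are in fact slightly more explicit than the paper, which leaves the regrouping and the use of the orthogonality $\langle \alpha_i, \beta_i \rangle = 0$ (for the last equality in part \ref{eq: nabla gII}) implicit.
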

 
    \begin{proof}
    	In order to prove part \ref{eq: nabla gI}, we note that by the definition \eqref{def: gI} we have
    	\begin{equation}
    		\begin{aligned} \label{eq: nabla gI step 1}
	    		2 \sum_{i=1}^{3} \partial_{\nabla(g_i^{(\RN{1})})} \partial_{\beta_i} &= 2 \sum_{\sigma} \partial_{\nabla(f_{\sigma(2)} f_{\sigma(3)})} 
	    		\partial_{\beta_{\sigma(1)}}  \\
	    		&= 2 \sum_{\sigma} f_{\sigma(3)} \partial_{\nabla(f_{\sigma(2)})}  \partial_{\beta_{\sigma(1)}} +
	    		2 \sum_{\sigma} f_{\sigma(2)} \partial_{\nabla(f_{\sigma(3)})}  \partial_{\beta_{\sigma(1)}} \,.
    		\end{aligned}
    	\end{equation}
        Substituting the result of Lemma~\ref{lemma: properties of fs}~part~\ref{eq: nabla f} for $\partial_{\nabla(f_j)}$ into the expression \eqref{eq: nabla gI step 1} we obtain
    	\[
    		 \sum_{\sigma} f_{\sigma(3)} \widehat{u}_{\sigma(2)} \partial_{\beta_{\sigma(2)}}  \partial_{\beta_{\sigma(1)}} +
    		 \sum_{\sigma} f_{\sigma(2)} \widehat{u}_{\sigma(3)} \partial_{\beta_{\sigma(3)}}  \partial_{\beta_{\sigma(1)}} \,,
    	\]
    	which equals the right-hand side of \ref{eq: nabla gI}.
        
       The equalities \ref{eq: nabla gII} and \ref{eq: nabla gIII} follow from Lemma \ref{lemma: properties of gs part 0}.
	\end{proof}
    
    In the next lemma we deal with combining gradients of functions $f_j$ and $g_j^{(\RN{1})}$, $g_j^{(\RN{2})}$, $g_j^{(\RN{3})}$.
    \begin{lemma} \label{lemma: properties of gs part 2}
        The functions $g_j^{(\RN{1})}$, $g_j^{(\RN{2})}$, $g_j^{(\RN{3})}$ defined by formulas $\eqref{def: gI}$ -- $\eqref{def: gIII}$ satisfy also the following relations:
    	\begin{enumerate}
    		\item \label{eq: nabla gI dot nabla f} $\displaystyle \sum_{i=1}^{3} \langle \nabla(f_i), \nabla(g_i^{(\RN{1})}) \rangle = 
    		\frac12 \sum_{\sigma} \langle \beta_{\sigma(2)}, \beta_{\sigma(3)} \rangle \widehat{u}_{\sigma(2)} \widehat{u}_{\sigma(3)} f_{\sigma(1)} \,,$
    		\item \label{eq: nabla gII dot nabla f} $\displaystyle \sum_{i=1}^{3} \langle \nabla(f_i), \nabla(g_i^{(\RN{2})}) \rangle = 0 \,,$
    		\item \label{eq: nabla gIII dot nabla f} $\displaystyle 2 \sum_{i=1}^{3} \langle \nabla(f_i), \nabla(g_i^{(\RN{3})}) \rangle = 
    		\sum_{i=1}^3 \widehat{u}_i \partial_{\beta_i} (g_i^{(\RN{3})}) \,.$
    	\end{enumerate}
    \end{lemma}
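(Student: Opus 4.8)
The plan is to reduce all three parts to a single observation: the gradient of $f_i$ points along $\beta_i$. Concretely, Lemma~\ref{lemma: properties of fs}, part~\ref{eq: nabla f}, gives $\nabla(f_i) = \tfrac12 \widehat{u}_i \beta_i$, so that for any differentiable function $g$ on the plane one has
\begin{equation*}
\langle \nabla(f_i), \nabla(g) \rangle = \tfrac12 \widehat{u}_i \langle \beta_i, \nabla(g) \rangle = \tfrac12 \widehat{u}_i \partial_{\beta_i}(g),
\end{equation*}
where the last step uses $\partial_{\beta_i}(g) = \langle \beta_i, \nabla(g) \rangle$ by the very definition of the directional derivative. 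I would state this master identity first, as it is the engine behind the whole lemma.

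With it in hand, part~\ref{eq: nabla gIII dot nabla f} is immediate: applying the identity to $g = g_i^{(\RN{3})}$, multiplying by $2$, and summing over $i$ reproduces the right-hand side verbatim, with no further computation. Part~\ref{eq: nabla gII dot nabla f} is likewise quick: since $g_i^{(\RN{2})}$ is, up to a constant, the function $v_i$, which depends only on $\langle \alpha_i, x \rangle$, its gradient is a multiple of $\alpha_i$; hence $\partial_{\beta_i}(g_i^{(\RN{2})})$ is proportional to $\langle \beta_i, \alpha_i \rangle = 0$, and every summand vanishes. The only input here beyond the master identity is the orthogonality relation $\langle \alpha_i, \beta_i \rangle = 0$ built into the configuration $AG_2$.

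For part~\ref{eq: nabla gI dot nabla f} I would again apply the master identity, reducing the left-hand side to $\tfrac12 \sum_{i} \widehat{u}_i \partial_{\beta_i}(g_i^{(\RN{1})})$, and then expand $g_i^{(\RN{1})} = \prod_{k \neq i} f_k$ by the product rule. Using $\partial_{\beta_i}(f_j) = \langle \beta_i, \nabla(f_j) \rangle = \tfrac12 \widehat{u}_j \langle \beta_i, \beta_j \rangle$, each $\partial_{\beta_i}(g_i^{(\RN{1})})$ becomes a sum of two terms of the shape $\tfrac12 \widehat{u}_j \langle \beta_i, \beta_j \rangle f_k$ with $\{i,j,k\} = \{1,2,3\}$. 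Summing over $i$ then produces six terms, and the symmetry $\langle \beta_i, \beta_j \rangle = \langle \beta_j, \beta_i \rangle$ pairs them so that, for each fixed factor $f_c$, the two contributions carrying $\widehat{u}_a \widehat{u}_b \langle \beta_a, \beta_b \rangle$ combine, absorbing the factor $\tfrac12$ and recovering exactly the cyclic sum $\tfrac12 \sum_\sigma \langle \beta_{\sigma(2)}, \beta_{\sigma(3)} \rangle \widehat{u}_{\sigma(2)} \widehat{u}_{\sigma(3)} f_{\sigma(1)}$.

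I expect no conceptual obstacle; the single place that demands care is the index bookkeeping in part~\ref{eq: nabla gI dot nabla f}, where one must check that the six expanded terms regroup correctly into the three cyclic summands. Everything else follows mechanically from the one gradient identity for $f_i$ together with $\langle \alpha_i, \beta_i \rangle = 0$.
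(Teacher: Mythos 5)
Your proof is correct and follows essentially the same route as the paper: all three parts rest on Lemma \ref{lemma: properties of fs} part \ref{eq: nabla f} (i.e.\ $\nabla(f_i)=\tfrac12\widehat{u}_i\beta_i$), the orthogonality $\langle\alpha_i,\beta_i\rangle=0$ for part \ref{eq: nabla gII dot nabla f}, and the product rule for part \ref{eq: nabla gI dot nabla f}. The only (immaterial) difference is the order of operations in part \ref{eq: nabla gI dot nabla f}: the paper expands $\nabla(f_{\sigma(2)}f_{\sigma(3)})$ by the product rule before substituting the gradient formula, while you first convert $\langle\nabla(f_i),\cdot\rangle$ into $\tfrac12\widehat{u}_i\partial_{\beta_i}(\cdot)$ and then expand, and your index bookkeeping checks out.
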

    
    \begin{proof}
    	The left-hand side of \ref{eq: nabla gI dot nabla f} can be expanded using the product rule as 
    	\begin{align*}
    		\sum_{i=1}^{3} \langle \nabla(f_i), \nabla(g_i^{(\RN{1})}) \rangle &= \sum_{\sigma} \langle \nabla{f_{\sigma(1)}}, \nabla(f_{\sigma(2)} f_{\sigma(3)}) \rangle \\
    		&= \sum_{\sigma} \langle \nabla(f_{\sigma(1)}), \nabla(f_{\sigma(2)}) \rangle f_{\sigma(3)} 
    		+ \sum_{\sigma} \langle \nabla(f_{\sigma(1)}), \nabla(f_{\sigma(3)}) \rangle f_{\sigma(2)}  \\
    		&= 2 \sum_{\sigma} \langle \nabla(f_{\sigma(2)}), \nabla(f_{\sigma(3)}) \rangle f_{\sigma(1)} \,,
    	\end{align*} 
        and the result follows by an application of Lemma \ref{lemma: properties of fs} part \ref{eq: nabla f}.
    	 
    	The relation \ref{eq: nabla gII dot nabla f} holds because $\nabla(f_i)$ is proportional to $\beta_i$, while $\nabla(g_i^{(\RN{2})})$ is proportional to $\alpha_i \,$, and $\langle \alpha_i, \beta_i \rangle = 0 \,$, for all  $i=1,2,3$.
    
	    Further, we have by Lemma \ref{lemma: properties of fs} part \ref{eq: nabla f} that
	    \begin{align*}
	    	2\sum_{i=1}^{3} \langle \nabla(f_i), \nabla(g_i^{(\RN{3})}) \rangle = 
	    	\sum_{i=1}^{3} \widehat{u}_i \langle \beta_i, \nabla(g_i^{(\RN{3})}) \rangle =
	    	\sum_{i=1}^3 \widehat{u}_i \partial_{\beta_i} (g_i^{(\RN{3})}) \,,
	    \end{align*}
	    which proves identity \ref{eq: nabla gIII dot nabla f}.
	\end{proof} 	
    
    In the next Lemmas \ref{lemma: properties of gs part 3-} and \ref{lemma: properties of gs part 3} we calculate and rearrange Laplacians of functions $g_j^{(\RN{1})}$, $g_j^{(\RN{2})}$, $g_j^{(\RN{3})}$. 
    \begin{lemma}\label{lemma: properties of gs part 3-}
    	The functions $g_j^{(\RN{2})}$ and $g_j^{(\RN{3})}$  defined by formulas \eqref{def: gII} and \eqref{def: gIII} satisfy the following relations for all $i = 1,2,3$:
    	\begin{enumerate}
    		\item \label{eq: delta gII no sum} $\displaystyle \Delta(g_i^{(\RN{2})}) = - \prod_{k \neq i}
    		\partial_{\beta_k} v_i = - \prod_{k \neq i}
    		\partial_{\beta_k} \sum_{j=1}^3 v_j  ,$
    		\item \label{eq: delta gIII no sum} $\displaystyle \Delta(g_i^{(\RN{3})}) = - \prod_{k \neq i}
    		\partial_{\beta_k} u_i  \,$. 
    	\end{enumerate}
    \end{lemma}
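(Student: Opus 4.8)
The plan is to exploit the fact that each potential $v_i$ and $u_i$ depends on $x$ only through a single linear form: $v_i = F(\langle \alpha_i, x \rangle)$ and $u_i = G(\langle \beta_i, x \rangle)$ for suitable one-variable functions $F$ and $G$, as is clear from \eqref{eq: u's}. This is exactly the setting of the general Lemma \ref{lemma: delta}, which converts a mixed directional second derivative into a scalar multiple of a Laplacian. Since the coefficients appearing in the definitions \eqref{def: gII} and \eqref{def: gIII} are precisely of the form produced by that lemma, both identities should reduce to bookkeeping with inner products and the orthogonality relations $\langle \alpha_i, \beta_i \rangle = 0$.

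For part \ref{eq: delta gII no sum}, I would first note that the prefactor $-\prod_{k \neq i} \langle \alpha_i, \beta_k \rangle / \langle \alpha_i, \alpha_i \rangle$ in \eqref{def: gII} is constant, so $\Delta(g_i^{(\RN{2})}) = - \frac{\prod_{k \neq i} \langle \alpha_i, \beta_k \rangle}{\langle \alpha_i, \alpha_i \rangle} \Delta(v_i)$. On the other hand, applying Lemma \ref{lemma: delta} with $\gamma = \alpha_i$ and the two directions being the short vectors $\beta_k$ with $k \neq i$ gives $\prod_{k \neq i} \partial_{\beta_k}(v_i) = \frac{\prod_{k \neq i} \langle \alpha_i, \beta_k \rangle}{\langle \alpha_i, \alpha_i \rangle} \Delta(v_i)$. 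Comparing the two expressions yields the first equality $\Delta(g_i^{(\RN{2})}) = -\prod_{k \neq i} \partial_{\beta_k} v_i$.

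To upgrade $v_i$ to $\sum_{j=1}^3 v_j$ in the second equality of part \ref{eq: delta gII no sum}, I would show that the extra terms $\prod_{k \neq i} \partial_{\beta_k}(v_j)$ with $j \neq i$ all vanish. The point is that for $j \neq i$ the index $j$ is itself one of the two indices $k \neq i$, so $\partial_{\beta_j}$ appears in the product $\prod_{k \neq i} \partial_{\beta_k}$; applying Lemma \ref{lemma: delta} to $v_j = F(\langle \alpha_j, x \rangle)$ then produces the factor $\langle \alpha_j, \beta_j \rangle = 0$ coming from the orthogonality of $\alpha_j$ and $\beta_j$, which annihilates the whole term. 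Hence $\prod_{k \neq i} \partial_{\beta_k} v_i = \prod_{k \neq i} \partial_{\beta_k} \sum_{j=1}^3 v_j$.

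Part \ref{eq: delta gIII no sum} is entirely analogous, with $\beta_i$ playing the role of $\alpha_i$: the constant prefactor in \eqref{def: gIII} is exactly $-\prod_{k \neq i} \langle \beta_i, \beta_k \rangle / \langle \beta_i, \beta_i \rangle$, and Lemma \ref{lemma: delta} applied with $\gamma = \beta_i$ gives $\prod_{k \neq i} \partial_{\beta_k}(u_i) = \frac{\prod_{k \neq i} \langle \beta_i, \beta_k \rangle}{\langle \beta_i, \beta_i \rangle} \Delta(u_i)$, so that $\Delta(g_i^{(\RN{3})}) = -\prod_{k \neq i} \partial_{\beta_k} u_i$. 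There is no substantial obstacle in this lemma: the only inputs are Lemma \ref{lemma: delta} together with the orthogonality relations, and the single mildly delicate point is recognizing that the cross terms in part \ref{eq: delta gII no sum} vanish precisely because each $\alpha_j$ is orthogonal to its partner $\beta_j$.
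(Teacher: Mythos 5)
Your proposal is correct and follows essentially the same route as the paper: the paper's own proof simply cites Lemma \ref{lemma: delta} together with formulas \eqref{def: gII} and \eqref{def: gIII}, which is exactly your computation with $\gamma = \alpha_i$ (resp.\ $\gamma = \beta_i$). Your explicit verification that the cross terms $\prod_{k\neq i}\partial_{\beta_k}(v_j)$, $j \neq i$, vanish via $\langle \alpha_j, \beta_j\rangle = 0$ is the detail the paper leaves implicit, and you supply it correctly.
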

    \begin{proof}
Statement \ref{eq: delta gII no sum} follows by formula \eqref{def: gII} and Lemma \ref{lemma: delta}.
   % 	\begin{equation*}
    %		\Delta(g_i^{(\RN{2})}) = - \prod_{k \neq i} \partial_{\beta_k} v_i = - \prod_{k \neq i} \partial_{\beta_k} \sum_{j=1}^3 v_j \,.
%    	\end{equation*}
    	Similarly, property \ref{eq: delta gIII no sum} follows directly from Lemma \ref{lemma: delta} and formula \eqref{def: gIII}. 
    \end{proof}
    
	\begin{lemma} \label{lemma: properties of gs part 3}
		Functions $g_j^{(\RN{1})}$, $g_j^{(\RN{2})}$, $g_j^{(\RN{3})}$ defined by formulas \eqref{def: gI} -- \eqref{def: gIII} satisfy the following relations:
		\begin{enumerate}
			\item \label{eq: delta gI} $\displaystyle \sum_{i=1}^3 \Delta(g_i^{(\RN{1})}) \partial_{\beta_i} = 
			\sum_{i=1}^3 \bigg( \sum_{j  \neq i} \widehat{u}_j \bigg) g_i^{(\RN{1})} \partial_{\beta_i}
			+ \frac12 \sum_{\sigma} \langle \beta_{\sigma(2)}, \beta_{\sigma(3)} \rangle \widehat{u}_{\sigma(2)} \widehat{u}_{\sigma(3)} \partial_{\beta_{\sigma(1)}}
			\\ \null \qquad\qquad\qquad\quad- \sum_{\sigma} f_{\sigma(1)} \bigg(\partial_{\beta_{\sigma(2)}}(u_{\sigma(2)})\partial_{\beta_{\sigma(3)}} + 
			\partial_{\beta_{\sigma(3)}}(u_{\sigma(3)})\partial_{\beta_{\sigma(2)}} \bigg) \,,$ 
			\item \label{eq: delta gII} $\displaystyle \sum_{i=1}^3 \Delta(g_i^{(\RN{2})}) \partial_{\beta_i} = -\sum_{\sigma}
			\partial_{\beta_{\sigma(2)}} \partial_{\beta_{\sigma(3)}} \bigg(\sum_{j=1}^3 v_j \bigg) \partial_{\beta_{\sigma(1)}} \,,$ 
			\item \label{eq: delta gIII} $\displaystyle \sum_{i=1}^3 \Delta(g_i^{(\RN{3})}) \partial_{\beta_i} = -\sum_{\sigma}
			\partial_{\beta_{\sigma(2)}} \partial_{\beta_{\sigma(3)}} \big(u_{\sigma(1)} \big) \partial_{\beta_{\sigma(1)}} \,.$ 
		\end{enumerate}
	\end{lemma}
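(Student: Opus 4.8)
The plan is to handle the three parts separately, obtaining parts \ref{eq: delta gII} and \ref{eq: delta gIII} almost for free from the preceding Lemma \ref{lemma: properties of gs part 3-}, and concentrating the actual work on part \ref{eq: delta gI}. For part \ref{eq: delta gIII} I would start from $\Delta(g_i^{(\RN{3})}) = -\prod_{k\neq i}\partial_{\beta_k}(u_i)$, multiply by $\partial_{\beta_i}$, and sum over $i=1,2,3$. Unpacking the cyclic sum over $A_3 = \{id, (1,2,3), (1,3,2)\}$, the triple $(\sigma(1),\sigma(2),\sigma(3))$ runs through $(1,2,3)$, $(2,3,1)$, $(3,1,2)$, so that $\sum_\sigma \partial_{\beta_{\sigma(2)}}\partial_{\beta_{\sigma(3)}}(u_{\sigma(1)})\partial_{\beta_{\sigma(1)}}$ coincides termwise with $\sum_{i=1}^3\big(\prod_{k\neq i}\partial_{\beta_k}\big)(u_i)\partial_{\beta_i}$, the two products of directional derivatives agreeing since the $\partial_{\beta_k}$ commute. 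Part \ref{eq: delta gII} is proved identically, using instead $\Delta(g_i^{(\RN{2})}) = -\prod_{k\neq i}\partial_{\beta_k}\big(\sum_{j=1}^3 v_j\big)$ from the same lemma.

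The substantive case is part \ref{eq: delta gI}. Writing $g_{\sigma(1)}^{(\RN{1})} = f_{\sigma(2)}f_{\sigma(3)}$, I would apply the Leibniz rule for the Laplacian of a product, $\Delta(FG) = \Delta(F)\,G + 2\langle\nabla(F),\nabla(G)\rangle + F\,\Delta(G)$, to each $\Delta(f_{\sigma(2)}f_{\sigma(3)})$. Into this I then feed the two outputs of Lemma \ref{lemma: properties of fs}: the gradient relation $\nabla(f_j) = \frac12\widehat{u}_j\beta_j$, which yields $2\langle\nabla(f_{\sigma(2)}),\nabla(f_{\sigma(3)})\rangle = \frac12\langle\beta_{\sigma(2)},\beta_{\sigma(3)}\rangle\widehat{u}_{\sigma(2)}\widehat{u}_{\sigma(3)}$, and the Laplacian relation rewritten as $\Delta(f_j) = \widehat{u}_j f_j - \partial_{\beta_j}(u_j)$. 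This decomposes $\Delta(f_{\sigma(2)}f_{\sigma(3)})$ into three groups: a diagonal term $(\widehat{u}_{\sigma(2)} + \widehat{u}_{\sigma(3)})f_{\sigma(2)}f_{\sigma(3)}$, the cross term $\frac12\langle\beta_{\sigma(2)},\beta_{\sigma(3)}\rangle\widehat{u}_{\sigma(2)}\widehat{u}_{\sigma(3)}$, and the first-order remainder $-\partial_{\beta_{\sigma(2)}}(u_{\sigma(2)})f_{\sigma(3)} - \partial_{\beta_{\sigma(3)}}(u_{\sigma(3)})f_{\sigma(2)}$.

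It remains to multiply by $\partial_{\beta_{\sigma(1)}}$ and sum over $\sigma$. The diagonal group becomes $\sum_{i=1}^3\big(\sum_{j\neq i}\widehat{u}_j\big)g_i^{(\RN{1})}\partial_{\beta_i}$, upon noting $\widehat{u}_{\sigma(2)} + \widehat{u}_{\sigma(3)} = \sum_{j\neq\sigma(1)}\widehat{u}_j$ and $f_{\sigma(2)}f_{\sigma(3)} = g_{\sigma(1)}^{(\RN{1})}$, reproducing the first term on the right-hand side; the cross group is already in the stated cyclic-sum form, giving the second term. The one point requiring care is the first-order group: the computation produces $-\sum_\sigma\big(\partial_{\beta_{\sigma(2)}}(u_{\sigma(2)})f_{\sigma(3)} + \partial_{\beta_{\sigma(3)}}(u_{\sigma(3)})f_{\sigma(2)}\big)\partial_{\beta_{\sigma(1)}}$, whereas the target right-hand side carries the factor $f_{\sigma(1)}$ in front. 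I expect the main (if still routine) obstacle to be checking that these two cyclic sums agree. I would settle this by expanding both over the three elements of $A_3$ and collecting the coefficient of each $\partial_{\beta_i}$: both reduce to the same six monomials of the shape $f_j\,\partial_{\beta_k}(u_k)\,\partial_{\beta_i}$, so the sums coincide and part \ref{eq: delta gI} follows, completing the lemma.
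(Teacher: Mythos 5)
Your proposal is correct and follows essentially the same route as the paper: parts (2) and (3) are obtained from Lemma \ref{lemma: properties of gs part 3-} by multiplying by $\partial_{\beta_i}$ and summing, and part (1) uses the Leibniz rule for $\Delta(f_{\sigma(2)}f_{\sigma(3)})$ together with both parts of Lemma \ref{lemma: properties of fs}. The relabeling check you flag at the end (that the cyclic sum with $f_{\sigma(2)},f_{\sigma(3)}$ inside equals the stated sum with $f_{\sigma(1)}$ in front) is exactly what the paper implicitly does when it adds the three expressions $\Delta(g_i^{(\RN{1})})\partial_{\beta_i}$, and your expansion into the six matching monomials settles it correctly.
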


    \begin{proof}
    	Let us first consider $\Delta(g_1^{(\RN{1})})$. By Lemma \ref{lemma: properties of fs}, we have
    	\begin{equation} \label{eq: delta g1I}
    		\begin{aligned}
    		\Delta(g_1^{(\RN{1})}) &= \Delta(f_2 f_3) = \Delta(f_2) f_3 + 2 \langle \nabla(f_2), \nabla(f_3) \rangle + \Delta(f_3) f_2  \\
    		&= \bigg( \widehat{u}_2 f_2 - \partial_{\beta_2}(u_2) \bigg) f_3 + \frac12 \widehat{u}_2 \widehat{u}_3 \langle \beta_2, \beta_3 \rangle 
    		+ \bigg( \widehat{u}_3 f_3 - \partial_{\beta_3}(u_3) \bigg) f_2 \\
    		&= \bigg( \sum_{j  \neq 1} \widehat{u}_j \bigg) g_1^{(\RN{1})} + \frac12 \widehat{u}_2 \widehat{u}_3 \langle \beta_2, \beta_3 \rangle - \big(\partial_{\beta_2}(u_2) f_3 + \partial_{\beta_3}(u_3) f_2 \big) \,.
    		\end{aligned}
    	\end{equation}
    	By multiplying \eqref{eq: delta g1I} by $\partial_{\beta_1}$, and adding it with similar expressions for $\Delta(g_2^{(\RN{1})}) \partial_{\beta_2}$ and
    	$\Delta(g_3^{(\RN{1})}) \partial_{\beta_3}$, we obtain property \ref{eq: delta gI}.
        
        Properties \ref{eq: delta gII} and \ref{eq: delta gIII} follow from Lemma \ref{lemma: properties of gs part 3-} parts \ref{eq: delta gII no sum} and \ref{eq: delta gIII no sum}, respectively, by multiplying these equalities by $\partial_{\beta_i}$ and summing them  up over $i = 1,2,3$.
    \end{proof}

  	 Let $h$ in the operator \eqref{eq: D} be defined by
   	\begin{equation}
   		h = h^{(\RN{1})} + h^{(\RN{2})} + h^{(\RN{3})} + h^{(\RN{4})} \,,
   	\end{equation}
   	where 
   	\begin{alignat}{2}
    	&h^{(\RN{1})} &&= f_1 f_2 f_3 \,, \label{def: hI} \\
    	&h^{(\RN{2})} &&= \sum_{i=1}^3 f_i \big( g_i^{(\RN{2})} + g_i^{(\RN{3})} \big) \,, \label{def: hII} \\
    	&h^{(\RN{3})} &&= \sum_{i=1}^3 \partial_{\beta_i}(g_i^{(\RN{3})}) = - \sum_{i=1}^3 \frac{\prod_{k \neq i} \langle \beta_i, \beta_k \rangle}{\langle \beta_i, \beta_i \rangle}  \partial_{\beta_i}(u_i) \label{def: hIII}  \\
    	& && = - \frac{12m(3m+1)\omega^6}{\sinh^2 \beta_1} \coth \beta_1 + \frac{12m(3m+1)\omega^6}{\sinh^2 \beta_2} \coth \beta_2 - \frac{12m(3m+1)\omega^6}{\sinh^2 \beta_3} \coth \beta_3 \,, \nonumber \\
    	&h^{(\RN{4})} &&= -3m(3m+1)\omega^{-2}\prod_{i=1}^3 \langle \beta_i, \beta_i\rangle X - 4(3m+1)\omega^{-2} \prod_{i=1}^3 \langle \beta_i, \beta_i\rangle Y. \label{def: hIV} 
%- 24m(3m+1)X -32(3m+1)Y \nonumber \\
    %	& &&= - \frac{24m(3m+1)}{\sinh \beta_1 \sinh \beta_2 \sinh \beta_3} - \frac{32(3m+1)}{\sinh 2\beta_1 \sinh 2\beta_2 \sinh 2\beta_3} \,. \label{def: hIV}
   	\end{alignat}
   
    In the next Lemmas \ref{lemma: properties of h part 1}, \ref{lemma: properties of h part 2} we calculate gradients and Laplacians of the functions $h^{(\RN{1})}$, $h^{(\RN{2})}$, $h^{(\RN{3})}$.
   	\begin{lemma} \label{lemma: properties of h part 1}
   		The functions $h^{(\RN{1})}$, $h^{(\RN{2})}$, $h^{(\RN{3})}$ defined by formulas \eqref{def: hI} -- \eqref{def: hIV} satisfy the following relations:
   		\begin{enumerate}
   			\item \label{eq: nabla of hI & II} $\displaystyle 2 \partial_{\nabla(h^{(\RN{1})} + h^{(\RN{2})})} = 
   			\sum_{i=1}^3 \widehat{u}_i g_i \partial_{\beta_i} - \sum_{\sigma} f_{\sigma(1)} \bigg(\partial_{\beta_{\sigma(2)}}\big(v_{\sigma(1)} + u_{\sigma(1)} \big)\partial_{\beta_{\sigma(3)}} + \partial_{\beta_{\sigma(3)}} \big(v_{\sigma(1)} + u_{\sigma(1)} \big)\partial_{\beta_{\sigma(2)}} \bigg)$, 
   			\item \label{eq: nabla of hIII} $\displaystyle 2 \partial_{\nabla(h^{(\RN{3})})} = 
   			- \sum_{\sigma}	\partial_{\beta_{\sigma(2)}}\partial_{\beta_{\sigma(3)}} \bigg(\sum_{j \neq \sigma(1)} u_j \bigg)  \partial_{\beta_{\sigma(1)}} $.
   		\end{enumerate}
   	\end{lemma}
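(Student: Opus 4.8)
The plan is to prove the two parts separately, reducing each to the gradient formula $\partial_{\nabla(f_j)}=\tfrac12\widehat{u}_j\partial_{\beta_j}$ of part~\ref{eq: nabla f} of Lemma~\ref{lemma: properties of fs} together with the first-order identities in Lemma~\ref{lemma: properties of gs part 0}.

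For part~\ref{eq: nabla of hI & II} I would expand the gradients by the product rule. Since $h^{(\RN{1})}=f_1f_2f_3$, one gets $\partial_{\nabla(h^{(\RN{1})})}=\sum_{i=1}^3 \big(\prod_{k\neq i}f_k\big)\partial_{\nabla(f_i)}$, and substituting $\partial_{\nabla(f_i)}=\tfrac12\widehat{u}_i\partial_{\beta_i}$ yields $2\partial_{\nabla(h^{(\RN{1})})}=\sum_i\widehat{u}_i g_i^{(\RN{1})}\partial_{\beta_i}$. For $h^{(\RN{2})}=\sum_i f_i(g_i^{(\RN{2})}+g_i^{(\RN{3})})$ the product rule produces a term in which the gradient falls on $f_i$, giving another diagonal contribution $\sum_i\widehat{u}_i(g_i^{(\RN{2})}+g_i^{(\RN{3})})\partial_{\beta_i}$ after multiplying by $2$, and a term in which it falls on $g_i^{(\RN{2})}+g_i^{(\RN{3})}$, which I evaluate using Lemma~\ref{lemma: properties of gs part 0} with $\sigma(1)=i$. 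Adding the $h^{(\RN{1})}$ and $h^{(\RN{2})}$ contributions, the diagonal pieces combine via $g_i=g_i^{(\RN{1})}+g_i^{(\RN{2})}+g_i^{(\RN{3})}$ into $\sum_i\widehat{u}_i g_i\partial_{\beta_i}$, while the remaining pieces assemble into the stated cyclic sum over $v_{\sigma(1)}+u_{\sigma(1)}$. This part is routine bookkeeping.

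For part~\ref{eq: nabla of hIII} the key point is that each summand $q_i:=\partial_{\beta_i}(g_i^{(\RN{3})})$ of $h^{(\RN{3})}$ depends on $x$ only through $\langle\beta_i,x\rangle$, since $g_i^{(\RN{3})}$ is proportional to $u_i$. Hence $\nabla(q_i)$ is proportional to $\beta_i$ and $\partial_{\nabla(q_i)}=\big(\partial_{\beta_i}^2(g_i^{(\RN{3})})/\langle\beta_i,\beta_i\rangle\big)\partial_{\beta_i}$; by Lemma~\ref{lemma: delta} the scalar here is exactly $\Delta(g_i^{(\RN{3})})$, so $2\partial_{\nabla(h^{(\RN{3})})}=2\sum_i\Delta(g_i^{(\RN{3})})\partial_{\beta_i}$. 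Part~\ref{eq: delta gIII} of Lemma~\ref{lemma: properties of gs part 3} then rewrites the right-hand side as $-2\sum_\sigma\partial_{\beta_{\sigma(2)}}\partial_{\beta_{\sigma(3)}}(u_{\sigma(1)})\partial_{\beta_{\sigma(1)}}$.

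The main obstacle is the final step: showing that this diagonal operator equals the target $-\sum_\sigma\partial_{\beta_{\sigma(2)}}\partial_{\beta_{\sigma(3)}}\big(\sum_{j\neq\sigma(1)}u_j\big)\partial_{\beta_{\sigma(1)}}$, which it does not do coefficient-by-coefficient. The resolution exploits the linear dependence $\beta_1=\beta_2-\beta_3$, hence $\partial_{\beta_1}=\partial_{\beta_2}-\partial_{\beta_3}$, which makes the representation of a first-order operator in the $\partial_{\beta_i}$ non-unique. I would group the difference of the two operators according to which function $u_j$ appears. The $u_1$-part is $\partial_{\beta_3}\partial_{\beta_1}(u_1)\partial_{\beta_2}+\partial_{\beta_1}\partial_{\beta_2}(u_1)\partial_{\beta_3}-2\partial_{\beta_2}\partial_{\beta_3}(u_1)\partial_{\beta_1}$; applying Lemma~\ref{lemma: delta} with $\gamma=\beta_1$ converts each coefficient into a multiple of $\Delta(u_1)$, and the operator collapses to $\omega^2\Delta(u_1)\big(\partial_{\beta_1}-\partial_{\beta_2}+\partial_{\beta_3}\big)$, which vanishes because $\partial_{\beta_1}=\partial_{\beta_2}-\partial_{\beta_3}$. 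The $u_2$- and $u_3$-parts vanish by the same argument after applying the $\pm\frac{\pi}{3}$ rotations of Remark~\ref{rem: variants of identities}, which establishes part~\ref{eq: nabla of hIII}.
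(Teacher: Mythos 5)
Your proof is correct, and while part~\ref{eq: nabla of hI & II} follows the paper's own argument essentially verbatim (product rule, $\partial_{\nabla(f_i)}=\tfrac12\widehat{u}_i\partial_{\beta_i}$, and Lemma~\ref{lemma: properties of gs part 0}, with the diagonal pieces recombining via $g_i=g_i^{(\RN{1})}+g_i^{(\RN{2})}+g_i^{(\RN{3})}$), your treatment of part~\ref{eq: nabla of hIII} is genuinely different. The paper computes both sides explicitly as trigonometric expressions --- the right-hand coefficient $-\partial_{\beta_2}\partial_{\beta_3}(u_2+u_3)$ and the left-hand side from the closed formula for $h^{(\RN{3})}$ in \eqref{def: hIII} --- and then matches coefficients after substituting $\partial_{\beta_1}=\partial_{\beta_2}-\partial_{\beta_3}$, $\partial_{\beta_2}=\partial_{\beta_1}+\partial_{\beta_3}$, $\partial_{\beta_3}=\partial_{\beta_2}-\partial_{\beta_1}$. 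You instead exploit the structural fact that each summand $\partial_{\beta_i}(g_i^{(\RN{3})})$ of $h^{(\RN{3})}$ depends on $x$ only through $\langle\beta_i,x\rangle$, which by Lemma~\ref{lemma: delta} gives $2\partial_{\nabla(h^{(\RN{3})})}=2\sum_i\Delta(g_i^{(\RN{3})})\partial_{\beta_i}$; you then reuse part~\ref{eq: delta gIII} of Lemma~\ref{lemma: properties of gs part 3} and resolve the resulting mismatch of representations again by Lemma~\ref{lemma: delta} together with $\partial_{\beta_1}=\partial_{\beta_2}-\partial_{\beta_3}$. I checked the collapse you claim: with $\langle\beta_1,\beta_2\rangle=\langle\beta_2,\beta_3\rangle=\omega^2$, $\langle\beta_1,\beta_3\rangle=-\omega^2$, $\langle\beta_i,\beta_i\rangle=2\omega^2$, the $u_1$-part indeed equals $\omega^2\Delta(u_1)\big(\partial_{\beta_1}-\partial_{\beta_2}+\partial_{\beta_3}\big)=0$, and the rotated versions handle $u_2$, $u_3$ (the rotation produces the needed identities up to an overall sign, which is immaterial for vanishing; a direct check with the same inner products is equally short). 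Both arguments ultimately rest on the same fact --- the non-uniqueness of writing a first-order operator in $\partial_{\beta_1},\partial_{\beta_2},\partial_{\beta_3}$ --- but yours trades the paper's explicit differentiation of $\coth\beta_i/\sinh^2\beta_i$ for pure inner-product bookkeeping and reuses an already-proven lemma, making the cancellation mechanism more transparent; the paper's computation is self-contained for this step and exhibits the cancellation concretely at the level of the trigonometric functions.
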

    
    \begin{proof}
    	We have that 
    	\[
    		\partial_j(h^{(\RN{1})}) = \partial_j(f_1) f_2 f_3 + \partial_j(f_2) f_1 f_3 + \partial_j(f_3) f_1 f_2 = \sum_{i=1}^3 \partial_j(f_i) g_i^{(\RN{1})} \,,
    	\]
    	therefore by Lemma \ref{lemma: properties of fs} part \ref{eq: nabla f},
    	\begin{equation} \label{eq: nabla hI}
    		2 \partial_{\nabla(h^{(\RN{1})})} = \sum_{i=1}^3 2 g_i^{(\RN{1})} \partial_{\nabla(f_i)} = \sum_{i=1}^3 \widehat{u}_i g_i^{(\RN{1})} \partial_{\beta_i} \,.
    	\end{equation}
    	On the other hand,
    	\begin{alignat}{2} \label{eq: nabla hII}
    		2 \partial_{\nabla(h^{(\RN{2})})} = 2 \sum_{i=1}^3 \partial_{\nabla\big( f_i (g_i^{(\RN{2})} + g_i^{(\RN{3})}) \big)} 
    		= 2 \sum_{i=1}^3 \big(g_i^{(\RN{2})} + g_i^{(\RN{3})} \big) \partial_{\nabla(f_i)} + 2 \sum_{i=1}^3 f_i \bigg(\partial_{\nabla(g_i^{(\RN{2})})} +  \partial_{\nabla(g_i^{(\RN{3})})} \bigg)  \,.
    	\end{alignat} 
    	By Lemma \ref{lemma: properties of fs} part \ref{eq: nabla f} and Lemma \ref{lemma: properties of gs part 0} we can rearrange the expression \eqref{eq: nabla hII} as
    	\begin{equation} \label{eq: nabla hII rearranged}
	    	 \sum_{i=1}^3 \widehat{u}_i \big(g_i^{(\RN{2})} + g_i^{(\RN{3})} \big) \partial_{\beta_i} 
	    	 - \sum_{\sigma} f_{\sigma(1)} \bigg( \partial_{\beta_{\sigma(2)}} \big( v_{\sigma(1)} + u_{\sigma(1)} \big) \partial_{\beta_{\sigma(3)}}  +  \partial_{\beta_{\sigma(3)}} \big( v_{\sigma(1)} + u_{\sigma(1)} \big) \partial_{\beta_{\sigma(2)} } \bigg) \,.
    	\end{equation}
    	The statement \ref{eq: nabla of hI & II} follows by adding up equalities \eqref{eq: nabla hI} and \eqref{eq: nabla hII rearranged}.
    	
    	In the right-hand side of statement \ref{eq: nabla of hIII}, the coefficient at $\partial_{\beta_{1}}$ is equal to
    	\begin{equation} \label{eq: nabla of hIII RHS}
    		- \partial_{\beta_2} \partial_{\beta_3}(u_2 + u_3) = -24m(3m+1) \omega^6 \bigg( \frac{2 \coth^2 \beta_2}{\sinh^2 \beta_2} + \frac{1}{\sinh^4 \beta_2} +  \frac{2 \coth^2 \beta_3}{\sinh^2 \beta_3} + \frac{1}{\sinh^4 \beta_3} \bigg) \,.
    	\end{equation}
    	In the left-hand side of statement \ref{eq: nabla of hIII} one can check that $2 \partial_{\nabla(h^{(\RN{3})})}$ is equal to
    	\begin{equation} \label{eq: nabla of hIII LHS}
	    	\begin{aligned}
		    	24m(3m+1) \omega^6\times\\
 \bigg( \bigg( \frac{2\coth^2 \beta_1}{\sinh^2 \beta_1} + \frac{1}{\sinh^4 \beta_1} \bigg) \partial_{\beta_1} - 
		    	\bigg( \frac{2\coth^2 \beta_2}{\sinh^2 \beta_2} + \frac{1}{\sinh^4 \beta_2} \bigg) \partial_{\beta_2} 
		    	+ \bigg( \frac{2\coth^2 \beta_3}{\sinh^2 \beta_3} + \frac{1}{\sinh^4 \beta_3} \bigg) \partial_{\beta_3} \bigg) \,.
	    	\end{aligned}
    	\end{equation}
    	Let us substitute in expression \eqref{eq: nabla of hIII LHS} $\partial_{\beta_1} = \partial_{\beta_2} - \partial_{\beta_3}$, $\partial_{\beta_2} = \partial_{\beta_1} + \partial_{\beta_3}$, and $\partial_{\beta_3} = \partial_{\beta_2} - \partial_{\beta_1}$. Then one can see that the coefficient at $\partial_{\beta_1}$ equals expression \eqref{eq: nabla of hIII RHS}. Similarly, the coefficients at $\partial_{\beta_2}$ and $\partial_{\beta_3}$ also match on both sides of equality \ref{eq: nabla of hIII}.
    \end{proof}

	\begin{lemma} \label{lemma: properties of h part 2}
		Functions $h^{(\RN{1})}$, $h^{(\RN{2})}$, $h^{(\RN{3})}$ given by formulas \eqref{def: hI} -- \eqref{def: hIII} satisfy the following relations:
		\begin{enumerate}
			\item \label{eq: delta of hI && II} $\displaystyle \Delta(h^{(\RN{1})} + h^{(\RN{2})}) = 
			\sum_{i=1}^3 \widehat{u}_i f_i g_i  - \sum_{i=1}^3 \partial_{\beta_i}(u_i) g_i
			+ \frac12 \sum_{\sigma} \langle \beta_{\sigma(2)}, \beta_{\sigma(3)} \rangle \widehat{u}_{\sigma(2)} \widehat{u}_{\sigma(3)} f_{\sigma(1)}
			\\ \null \quad\qquad\qquad\qquad + \sum_{i=1}^3 \widehat{u}_i \partial_{\beta_i} \big( g_i^{(\RN{3})} \big) -\sum_{\sigma} f_{\sigma(1)}
			\partial_{\beta_{\sigma(2)}} \partial_{\beta_{\sigma(3)}} \big( v_{\sigma(1)} + u_{\sigma(1)} \big) $, 
			
			\item \label{eq: delta of hIII} $\displaystyle \Delta(h^{(\RN{3})}) = 
			- \partial_{\beta_{1}} \partial_{\beta_{2}} \partial_{\beta_{3}} \bigg(\sum_{j=1}^3 u_j\bigg) $. 
		\end{enumerate}
	\end{lemma}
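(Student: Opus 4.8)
The plan is to establish both identities purely by expanding the Laplacians with the Leibniz rule and substituting the gradient and Laplacian formulas for $f_j$, $g_j^{(\RN{1})}$, $g_j^{(\RN{2})}$, $g_j^{(\RN{3})}$ recorded in the preceding lemmas; no new trigonometric identity is required, so the work is a reorganisation of known pieces. I would dispose of part \ref{eq: delta of hIII} first. Since $\Delta$ has constant coefficients it commutes with each $\partial_{\beta_i}$, so $\Delta(h^{(\RN{3})}) = \sum_{i=1}^3 \partial_{\beta_i}\big(\Delta(g_i^{(\RN{3})})\big)$. Applying Lemma \ref{lemma: properties of gs part 3-} part \ref{eq: delta gIII no sum} turns each summand into $-\partial_{\beta_i}\prod_{k\neq i}\partial_{\beta_k}(u_i) = -\partial_{\beta_1}\partial_{\beta_2}\partial_{\beta_3}(u_i)$, and summing over $i$ gives the claim at once.

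For part \ref{eq: delta of hI && II} I would treat $h^{(\RN{1})}$ and $h^{(\RN{2})}$ separately. For $h^{(\RN{1})} = f_1 f_2 f_3$ the three-factor Leibniz rule gives $\Delta(h^{(\RN{1})}) = \sum_i \Delta(f_i)\, g_i^{(\RN{1})} + 2\sum_\sigma \langle\nabla(f_{\sigma(2)}),\nabla(f_{\sigma(3)})\rangle f_{\sigma(1)}$, where $g_i^{(\RN{1})}=\prod_{k\neq i}f_k$. Substituting Lemma \ref{lemma: properties of fs} part \ref{eq: delta f} in the form $\Delta(f_i)=\widehat{u}_i f_i - \partial_{\beta_i}(u_i)$ into the first sum, and Lemma \ref{lemma: properties of fs} part \ref{eq: nabla f} into the cross terms (so that $\langle\nabla(f_{\sigma(2)}),\nabla(f_{\sigma(3)})\rangle = \tfrac14\widehat{u}_{\sigma(2)}\widehat{u}_{\sigma(3)}\langle\beta_{\sigma(2)},\beta_{\sigma(3)}\rangle$), produces exactly the terms $\sum_i \widehat{u}_i f_i g_i^{(\RN{1})}$, $-\sum_i \partial_{\beta_i}(u_i)g_i^{(\RN{1})}$, and the $\tfrac12\sum_\sigma \langle\beta_{\sigma(2)},\beta_{\sigma(3)}\rangle \widehat{u}_{\sigma(2)}\widehat{u}_{\sigma(3)} f_{\sigma(1)}$ cross term.

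For $h^{(\RN{2})} = \sum_i f_i(g_i^{(\RN{2})}+g_i^{(\RN{3})})$ I would apply the two-factor Leibniz rule to each summand, obtaining three contributions: $\sum_i \Delta(f_i)(g_i^{(\RN{2})}+g_i^{(\RN{3})})$, the cross term $2\sum_i \langle\nabla(f_i),\nabla(g_i^{(\RN{2})}+g_i^{(\RN{3})})\rangle$, and $\sum_i f_i \Delta(g_i^{(\RN{2})}+g_i^{(\RN{3})})$. The first is again handled by Lemma \ref{lemma: properties of fs} part \ref{eq: delta f}. The cross term collapses via Lemma \ref{lemma: properties of gs part 2} part \ref{eq: nabla gII dot nabla f} (the $g^{(\RN{2})}$ piece vanishes since $\nabla(f_i)\parallel\beta_i$, $\nabla(g_i^{(\RN{2})})\parallel\alpha_i$ and $\langle\alpha_i,\beta_i\rangle=0$) together with part \ref{eq: nabla gIII dot nabla f} (the $g^{(\RN{3})}$ piece gives $\sum_i \widehat{u}_i \partial_{\beta_i}(g_i^{(\RN{3})})$). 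The last contribution is rewritten with Lemma \ref{lemma: properties of gs part 3-} parts \ref{eq: delta gII no sum} and \ref{eq: delta gIII no sum} as $-\sum_\sigma f_{\sigma(1)}\partial_{\beta_{\sigma(2)}}\partial_{\beta_{\sigma(3)}}(v_{\sigma(1)}+u_{\sigma(1)})$, using that $\prod_{k\neq\sigma(1)}\partial_{\beta_k}=\partial_{\beta_{\sigma(2)}}\partial_{\beta_{\sigma(3)}}$. Adding $\Delta(h^{(\RN{1})})$ and $\Delta(h^{(\RN{2})})$ and using $g_i = g_i^{(\RN{1})}+g_i^{(\RN{2})}+g_i^{(\RN{3})}$ to merge the $\widehat{u}_i f_i$ and $\partial_{\beta_i}(u_i)$ coefficients into $g_i$ then yields the stated right-hand side.

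The computations are routine once the lemmas are in place; the only real care needed is bookkeeping — matching the two-element index sets $\{k:k\neq i\}$ against the pairs $\{\sigma(2),\sigma(3)\}$ in the cyclic sums, and verifying that the $g^{(\RN{1})}$, $g^{(\RN{2})}$, $g^{(\RN{3})}$ contributions recombine cleanly into $g_i$. This recombination is the single place an error could creep in, so I would check it term by term. Since no cancellation between genuinely different functional types is involved, I do not anticipate any analytic obstacle.
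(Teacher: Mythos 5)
Your proposal is correct and follows essentially the same route as the paper: expand $\Delta(h^{(\RN{1})})$ and $\Delta(h^{(\RN{2})})$ by the Leibniz rule, substitute Lemma \ref{lemma: properties of fs} part \ref{eq: delta f}, Lemma \ref{lemma: properties of gs part 2} parts \ref{eq: nabla gII dot nabla f}--\ref{eq: nabla gIII dot nabla f} and Lemma \ref{lemma: properties of gs part 3-}, then recombine into $g_i = g_i^{(\RN{1})}+g_i^{(\RN{2})}+g_i^{(\RN{3})}$. The only cosmetic differences are that for part \ref{eq: delta of hIII} you commute $\Delta$ past $\partial_{\beta_i}$ and invoke Lemma \ref{lemma: properties of gs part 3-} part \ref{eq: delta gIII no sum} where the paper applies Lemma \ref{lemma: delta} directly to $h^{(\RN{3})}$ (equivalent, since that lemma is itself a consequence of Lemma \ref{lemma: delta}), and you compute the $h^{(\RN{1})}$ cross terms directly from $\nabla(f_j)=\tfrac12\widehat{u}_j\beta_j$ rather than citing Lemma \ref{lemma: properties of gs part 2} part \ref{eq: nabla gI dot nabla f}.
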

    
    \begin{proof}
    	Firstly, by Lemma \ref{lemma: properties of fs} part \ref{eq: delta f} and Lemma \ref{lemma: properties of gs part 2} part \ref{eq: nabla gI dot nabla f} we have
    	\begin{equation} \label{eq: delta hI}
	    	\begin{aligned}
	    	\Delta(h^{(\RN{1})}) &= \Delta(f_1 f_2 f_3) = \sum_{i=1}^3 \Delta(f_i) g_i^{(\RN{1})} + \sum_{i=1}^3 \langle \nabla(f_i), \nabla(g_i^{(\RN{1})}) \rangle \\
	    	&= \sum_{i=1}^3 \widehat{u}_i f_i g_i^{(\RN{1})} - \sum_{i=1}^3 \partial_{\beta_i}(u_i) g_i^{(\RN{1})}
	    	+ \frac12 \sum_{\sigma} \widehat{u}_{\sigma(2)} \widehat{u}_{\sigma(3)} \langle \beta_{\sigma(2)}, \beta_{\sigma(3)} \rangle f_{\sigma(1)} \,.
	    	\end{aligned}
    	\end{equation}
    	Secondly, by Lemma \ref{lemma: properties of fs} part \ref{eq: delta f}, by Lemma \ref{lemma: properties of gs part 2} parts \ref{eq: nabla gII dot nabla f} and \ref{eq: nabla gIII dot nabla f}, and Lemma \ref{lemma: properties of gs part 3-} we have
    	\begin{equation} \label{eq: delta hII}
    		\begin{aligned}
    		\Delta(h^{(\RN{2})}) &= \Delta \bigg( \sum_{i=1}^3 f_i \big( g_i^{(\RN{2})} + g_i^{(\RN{3})} \big) \bigg) \\
    		&= \sum_{i=1}^3 \Delta(f_i) \big( g_i^{(\RN{2})} + g_i^{(\RN{3})} \big) + 2 \sum_{i=1}^3 \langle \nabla(f_i), \nabla \big( g_i^{(\RN{2})} + g_i^{(\RN{3})} \big) \rangle + \sum_{i=1}^3 f_i \Delta \big( g_i^{(\RN{2})} + g_i^{(\RN{3})} \big) \\
    		&= \sum_{i=1}^3 \widehat{u}_i f_i \big( g_i^{(\RN{2})} + g_i^{(\RN{3})} \big) - \sum_{i=1}^3 \partial_{\beta_i}(u_i) \big( g_i^{(\RN{2})} + g_i^{(\RN{3})} \big) \\
    		&\quad + \sum_{i=1}^3 \widehat{u}_i \partial_{\beta_i} \big( g_i^{(\RN{3})} \big) - \sum_{\sigma} f_{\sigma(1)}
    		\partial_{\beta_{\sigma(2)}} \partial_{\beta_{\sigma(3)}} \big( v_{\sigma(1)} + u_{\sigma(1)} \big) \,.
    		\end{aligned}
    	\end{equation}
    	The statement \ref{eq: delta of hI && II} follows by adding the equalities \eqref{eq: delta hI} and \eqref{eq: delta hII}.
    	
    	By Lemma \ref{lemma: delta} for any $j$ we have
    	\begin{equation} \label{eq: delta of hjIII}
    		\Delta\bigg( -\frac{\prod_{k \neq j} \langle \beta_j, \beta_k \rangle}{\langle \beta_j, \beta_j \rangle}  \partial_{\beta_j}(u_j) \bigg) =
    		- \bigg( \prod_{k \neq j} \partial_{\beta_k} \bigg) \partial_{\beta_j}(u_j) = - \partial_{\beta_{1}} \partial_{\beta_{2}} \partial_{\beta_{3}} (u_j) \,.
    	\end{equation} 
    	We get result \ref{eq: delta of hIII} by summing equalities \eqref{eq: delta of hjIII} over $j = 1,2,3$.
    \end{proof}

    \section{Proof of the intertwining relation} \label{section: intertwining relation}
    Let $A = A(x, \partial_1, \partial_2)$ be a differential operator of order $N$. Then $A$ can be represented as 
    \[
    	A = \sum_{k = 0}^{N} A^{(k)} \,, \quad \text{with } A^{(k)} = \sum_{i+j=k} a_{ij}(x) \partial_1^i \partial_2^j 
    \]
    for some functions $a_{ij}(x)$ 
    so $A^{(k)}$ denotes the $k$-th order part of $A$. That is $A^{(k)}$ is the sum of all terms in $A$ that contain exactly $k$ derivatives when all the derivatives are put on the right.
    
    Both operators $H  \mathcal{D}$ and $\mathcal{D}  H_0$ have order 5. It is easy to see that the respective terms of orders~5 and 4 in both operators are the same. We are going to show that this is also true for lower orders.
    \begin{proposition}
    	The third order terms in the intertwining relation \eqref{eq: intertwining relation} satisfy $$(H  \mathcal{D})^{(3)}  = (\mathcal{D} H_0)^{(3)} \,.$$
    \end{proposition}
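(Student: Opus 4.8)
The plan is to expand both compositions using the decompositions $H = H^{(2)} + H^{(0)}$ with $H^{(2)} = -\Delta$ and $H^{(0)} = \sum_{i=1}^3(v_i + \widetilde u_i)$ (and likewise $H_0^{(2)} = -\Delta$, $H_0^{(0)} = \sum_{i=1}^3(v_i + u_i)$), together with the order decomposition $\mathcal D = \mathcal D^{(3)} + \mathcal D^{(2)} + \mathcal D^{(1)} + \mathcal D^{(0)}$ read off from \eqref{eq: D}, and then to collect exactly the terms landing in order $3$. The crucial structural observation is that $H^{(2)} = -\Delta$ and $\mathcal D^{(3)} = \partial_{\beta_1}\partial_{\beta_2}\partial_{\beta_3}$ both have constant coefficients. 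Consequently, on the $H\mathcal D$ side the order-$3$ contributions can only come from $H^{(2)}\mathcal D^{(2)}$, from $H^{(2)}\mathcal D^{(1)}$, and from $H^{(0)}\mathcal D^{(3)}$, while on the $\mathcal D H_0$ side they can only come from $\mathcal D^{(1)}H_0^{(2)}$ and from $\mathcal D^{(3)}H_0^{(0)}$; every other product lands in an order different from $3$.

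First I would treat the term $-\Delta\,\mathcal D^{(2)}$. For a scalar $f$ and constant vectors $\mu,\nu$ the Leibniz rule gives $\Delta(f\,\partial_\mu\partial_\nu) = \Delta(f)\,\partial_\mu\partial_\nu + 2\,\partial_{\nabla(f)}\partial_\mu\partial_\nu + f\,\Delta\partial_\mu\partial_\nu$, whose order-$3$ part is the middle cross-term. Applying this to each summand of $\mathcal D^{(2)}$ and invoking Lemma \ref{lemma: properties of fs} part \ref{eq: nabla f}, namely $\partial_{\nabla(f_{\sigma(1)})} = \tfrac12\widehat u_{\sigma(1)}\partial_{\beta_{\sigma(1)}}$, I obtain
\[
  \big({-}\Delta\,\mathcal D^{(2)}\big)^{(3)} = -\sum_\sigma \widehat u_{\sigma(1)}\,\partial_{\beta_{\sigma(1)}}\partial_{\beta_{\sigma(2)}}\partial_{\beta_{\sigma(3)}} = -\Big(\sum_{i=1}^3\widehat u_i\Big)\partial_{\beta_1}\partial_{\beta_2}\partial_{\beta_3},
\]
since the three directional derivatives commute. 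The term $-\Delta\,\mathcal D^{(1)}$ contributes at order $3$ only $-\sum_{i=1}^3 g_i\,\Delta\partial_{\beta_i}$ (the $\Delta$ falling entirely on the derivative factor), and $H^{(0)}\mathcal D^{(3)}$ contributes $\big(\sum_i(v_i+\widetilde u_i)\big)\partial_{\beta_1}\partial_{\beta_2}\partial_{\beta_3}$. On the other side, $\mathcal D^{(1)}H_0^{(2)} = -\sum_i g_i\,\partial_{\beta_i}\Delta = -\sum_i g_i\,\Delta\partial_{\beta_i}$ (again by constancy of the coefficients of $\Delta$), while the leading term of $\mathcal D^{(3)}H_0^{(0)}$ is $\big(\sum_i(v_i+u_i)\big)\partial_{\beta_1}\partial_{\beta_2}\partial_{\beta_3}$.

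Comparing the two sides, the operators $-\sum_i g_i\,\Delta\partial_{\beta_i}$ cancel identically, and the proposition reduces to the purely scalar identity $\sum_{i=1}^3(v_i+\widetilde u_i) - \sum_{i=1}^3\widehat u_i = \sum_{i=1}^3(v_i+u_i)$, which is immediate from the definition $\widehat u_i = \widetilde u_i - u_i$ in \eqref{eq: uhat's}. The only genuinely computational step is the Leibniz extraction of the order-$3$ part of $-\Delta\,\mathcal D^{(2)}$; the main thing to be careful about is the bookkeeping asymmetry between the two compositions. On the $H\mathcal D$ side the Laplacian differentiates the variable coefficients $f_{\sigma(1)}$ of $\mathcal D^{(2)}$, whereas on the $\mathcal D H_0$ side the Laplacian sits to the right of $\mathcal D$ and, having constant coefficients, produces no such cross-terms; thus the $\mathcal D^{(2)}$-generated piece $-(\sum_i\widehat u_i)\partial_{\beta_1}\partial_{\beta_2}\partial_{\beta_3}$ has no counterpart on the right and must instead be absorbed precisely by the difference of the zeroth-order parts of $H$ and $H_0$.
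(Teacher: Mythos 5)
Your proof is correct and follows essentially the same route as the paper: the paper's one-line expression for $(H\mathcal{D}-\mathcal{D}H_0)^{(3)}$ already encodes your bookkeeping (the cancellation of the $g_i\Delta\partial_{\beta_i}$ terms and the identity $\widetilde{u}_i-u_i=\widehat{u}_i$), and then both arguments conclude via Lemma \ref{lemma: properties of fs} part \ref{eq: nabla f}, i.e.\ $\partial_{\nabla(f_j)}=\tfrac12\widehat{u}_j\partial_{\beta_j}$. Your version simply makes the order-counting and cancellations explicit where the paper leaves them implicit.
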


    \begin{proof}
    	We have
    	\[
    		(H \mathcal{D} -   \mathcal{D} H_0)^{(3)} = -2 \sum_{\sigma} \partial_{\nabla(f_{\sigma(1)})} \partial_{\beta_{\sigma(2)}} \partial_{\beta_{\sigma(3)}} + \bigg(\sum_{j=1}^3 \widehat{u}_j \bigg) \partial_{\beta_1} \partial_{\beta_2} \partial_{\beta_3} \,.
    	\]
    	By Lemma \ref{lemma: properties of fs} part \ref{eq: nabla f} we get
    	\[
    		\sum_{\sigma} \partial_{\nabla(f_{\sigma(1)})} \partial_{\beta_{\sigma(2)}} \partial_{\beta_{\sigma(3)}} = 
    		\frac12 \sum_{\sigma} \widehat{u}_{_{\sigma(1)}} \partial_{\beta_{\sigma(1)}} \partial_{\beta_{\sigma(2)}} \partial_{\beta_{\sigma(3)}} =
    		\frac12 \bigg(\sum_{j=1}^3 \widehat{u}_j \bigg) \partial_{\beta_1} \partial_{\beta_2} \partial_{\beta_3} \,,
    	\]
    	and the statement follows.
    \end{proof}

	\begin{proposition}
		The second order terms in the intertwining relation \eqref{eq: intertwining relation} satisfy $$(H  \mathcal{D})^{(2)}  = (\mathcal{D}  H_0)^{(2)} \,.$$
	\end{proposition}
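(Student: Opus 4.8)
The plan is to split $\mathcal{D}$ into its homogeneous parts in the notation of the preceding remark, $\mathcal{D} = \mathcal{D}^{(3)} + \mathcal{D}^{(2)} + \mathcal{D}^{(1)} + \mathcal{D}^{(0)}$ with $\mathcal{D}^{(3)} = \partial_{\beta_1}\partial_{\beta_2}\partial_{\beta_3}$, $\mathcal{D}^{(2)} = \sum_{\sigma} f_{\sigma(1)}\partial_{\beta_{\sigma(2)}}\partial_{\beta_{\sigma(3)}}$, $\mathcal{D}^{(1)} = \sum_{i=1}^3 g_i \partial_{\beta_i}$, $\mathcal{D}^{(0)} = h$, and to treat the potentials as multiplication operators. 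Writing $W = \sum_{i=1}^3 (v_i + \widetilde u_i)$ and $W_0 = \sum_{i=1}^3 (v_i + u_i)$, so that $W - W_0 = \sum_{i=1}^3 \widehat u_i$ by \eqref{eq: uhat's}, one has
\[
	H\mathcal{D} - \mathcal{D}H_0 = -[\Delta, \mathcal{D}] + W\mathcal{D} - \mathcal{D}W_0 \,.
\]
Since $\Delta$ has constant coefficients, for any function $a$ and any constant-coefficient monomial $\partial^\alpha$ one has $[\Delta, a\partial^\alpha] = \Delta(a)\partial^\alpha + 2\partial_{\nabla(a)}\partial^\alpha$, the first summand preserving and the second raising the differential order; I would record this elementary fact first and then isolate the order-$2$ contributions of each piece.

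Reading off the order-$2$ terms, the contribution of $-[\Delta,\mathcal{D}]$ comes only from the order-preserving commutator with $\mathcal{D}^{(2)}$ and from the order-raising commutator with $\mathcal{D}^{(1)}$ (the cubic leading term and $h$ give nothing at this order), namely $-\sum_{\sigma}\Delta(f_{\sigma(1)})\partial_{\beta_{\sigma(2)}}\partial_{\beta_{\sigma(3)}} - 2\sum_{i=1}^3 \partial_{\nabla(g_i)}\partial_{\beta_i}$. For the potentials, $(W\mathcal{D})^{(2)} = W\mathcal{D}^{(2)}$, while $(\mathcal{D}W_0)^{(2)}$ receives $W_0\mathcal{D}^{(2)}$ from $\mathcal{D}^{(2)}$ (no derivative on $W_0$) together with $\sum_{\sigma}\partial_{\beta_{\sigma(1)}}(W_0)\partial_{\beta_{\sigma(2)}}\partial_{\beta_{\sigma(3)}}$ from the leading term $\mathcal{D}^{(3)}$ with exactly one of its three derivatives falling on $W_0$; the first-order part $\mathcal{D}^{(1)}$ cannot produce two derivatives on the argument and so contributes nothing. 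The two $\mathcal{D}^{(2)}$ pieces combine into $(W - W_0)\mathcal{D}^{(2)} = \bigl(\sum_{j} \widehat u_j\bigr)\mathcal{D}^{(2)}$.

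The next step is to bring the remaining mixed term $2\sum_{i=1}^3 \partial_{\nabla(g_i)}\partial_{\beta_i}$ into the common form $\sum_{\sigma}(\cdots)\partial_{\beta_{\sigma(2)}}\partial_{\beta_{\sigma(3)}}$ using Lemma \ref{lemma: properties of gs part 1}, whose parts \ref{eq: nabla gI}--\ref{eq: nabla gIII} express precisely these gradients in that shape; this is what makes a term-by-term comparison possible, since every other order-$2$ contribution is already of this form. Collecting the coefficient of $\partial_{\beta_{\sigma(2)}}\partial_{\beta_{\sigma(3)}}$, the derivatives of $\sum_{j} v_j$ cancel against the matching part of $\partial_{\beta_{\sigma(1)}}(W_0)$, the two $\widehat u$-sums collapse to $\widehat u_{\sigma(1)} f_{\sigma(1)}$, and the $u$-derivatives combine to $-\partial_{\beta_{\sigma(1)}}(u_{\sigma(1)})$, leaving the coefficient
\[
	-\Delta(f_{\sigma(1)}) + \widehat u_{\sigma(1)} f_{\sigma(1)} - \partial_{\beta_{\sigma(1)}}(u_{\sigma(1)}) \,,
\]
which vanishes by Lemma \ref{lemma: properties of fs} part \ref{eq: delta f}. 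Since this holds for every $\sigma \in A_3$, I would conclude $(H\mathcal{D} - \mathcal{D}H_0)^{(2)} = 0$.

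The hard part is not any single identity but the bookkeeping: correctly isolating the order-$2$ output of the commutator $[\Delta,\mathcal{D}]$ and of $\mathcal{D}W_0$ — in particular keeping track of which one derivative of the cubic leading term lands on $W_0$ — and reducing the mixed monomials $\partial_{\nabla(g_i)}\partial_{\beta_i}$ to the $\partial_{\beta_{\sigma(2)}}\partial_{\beta_{\sigma(3)}}$ form via Lemma \ref{lemma: properties of gs part 1}, so that the final cancellation through Lemma \ref{lemma: properties of fs} part \ref{eq: delta f} becomes transparent. Note that the three monomials $\partial_{\beta_{\sigma(2)}}\partial_{\beta_{\sigma(3)}}$ need not be linearly independent, but this is irrelevant: once each coefficient is shown to vanish, the whole second-order part is zero regardless.
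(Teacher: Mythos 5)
Your proposal is correct and takes essentially the same route as the paper: it reduces $(H\mathcal{D}-\mathcal{D}H_0)^{(2)}$ to exactly the four-term expression $-\sum_{\sigma}\Delta(f_{\sigma(1)})\partial_{\beta_{\sigma(2)}}\partial_{\beta_{\sigma(3)}} - 2\sum_{i}\partial_{\nabla(g_i)}\partial_{\beta_i} + \big(\sum_j \widehat{u}_j\big)\mathcal{D}^{(2)} - \sum_{\sigma}\partial_{\beta_{\sigma(1)}}\big(\sum_j(v_j+u_j)\big)\partial_{\beta_{\sigma(2)}}\partial_{\beta_{\sigma(3)}}$ that the paper writes down, and then kills it by combining all three parts of Lemma \ref{lemma: properties of gs part 1} with Lemma \ref{lemma: properties of fs} part \ref{eq: delta f}. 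The only difference is that you spell out the commutator bookkeeping and the coefficient-by-coefficient cancellation that the paper leaves implicit.
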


	\begin{proof}
		We have that 
		\begin{align*}
			(H  \mathcal{D} -  \mathcal{D} H_0)^{(2)} &=
			- \sum_{\sigma} \Delta(f_{_{\sigma(1)}}) \partial_{\beta_{\sigma(2)}} \partial_{\beta_{\sigma(3)}}
			- 2 \sum_{i=1}^3 \partial_{\nabla(g_i)} \partial_{\beta_i}  \\
			&\quad+ \sum_{\sigma} \sum_{j=1}^3 \widehat{u}_j f_{_{\sigma(1)}} \partial_{\beta_{\sigma(2)}} \partial_{\beta_{\sigma(3)}}
			- \sum_{\sigma} \partial_{\beta_{\sigma(1)}} \bigg( \sum_{j=1}^3(v_j + u_j) \bigg) \partial_{\beta_{\sigma(2)}} \partial_{\beta_{\sigma(3)}} \,,
		\end{align*} which is zero by applying Lemma \ref{lemma: properties of fs} part \ref{eq: delta f} and adding equalities from all three parts of Lemma~\ref{lemma: properties of gs part 1}. 
	\end{proof}

    The next lemma will be useful for dealing with the first and zero order terms in the intertwining relation.
    \begin{lemma} \label{lemma: first and zeroth order}
    	For any $\sigma \in A_3$,
    	\begin{align}
\label{newlemnum}
    		&-\frac12 \langle \beta_{\sigma(2)}, \beta_{\sigma(3)} \rangle \widehat{u}_{\sigma(2)} \widehat{u}_{\sigma(3)}
    		+ \bigg( \sum_{j  \neq \sigma(1)} \widehat{u}_j \bigg) \big(g_{\sigma(1)}^{(\RN{2})} + g_{\sigma(1)}^{(\RN{3})} \big) 
    		- \bigg( f_{\sigma(2)} \partial_{\beta_{\sigma(3)}} + f_{\sigma(3)} \partial_{\beta_{\sigma(2)}} \bigg)(v_{\sigma(1)} + u_{\sigma(1)}) \nonumber \\
    		&\qquad = -\frac{48 m(3m+1) \omega^4 \langle \beta_{\sigma(2)}, \beta_{\sigma(3)} \rangle}{\sinh^2 \beta_{\sigma(2)} \sinh^2 \beta_{\sigma(3)}} - \frac{128 (3m+1)\omega^4 \langle \beta_{\sigma(2)}, \beta_{\sigma(3)} \rangle}{\sinh^2 2\beta_{\sigma(2)} \sinh^2 2\beta_{\sigma(3)}} \,.
    	\end{align}
    \end{lemma}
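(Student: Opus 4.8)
The plan is to fix $\sigma = \mathrm{id}$, since all other cases follow by applying the $\pm\frac{\pi}{3}$ rotations of Remark \ref{rem: variants of identities}. With this choice the left-hand side of \eqref{newlemnum} splits into three pieces: the product term $-\frac12\langle\beta_2,\beta_3\rangle\widehat{u}_2\widehat{u}_3$, the term $(\widehat{u}_2 + \widehat{u}_3)\big(g_1^{(\RN{2})} + g_1^{(\RN{3})}\big)$, and the derivative term $-\big(f_2\partial_{\beta_3} + f_3\partial_{\beta_2}\big)(v_1 + u_1)$. First I would substitute the explicit formulas \eqref{def: fs}, \eqref{eq: u's}, \eqref{eq: uhat's}, the values $g_1^{(\RN{2})} + g_1^{(\RN{3})} = -\frac{9m(m+1)\omega^4}{\sinh^2\alpha_1} + \frac{3m(3m+1)\omega^4}{\sinh^2\beta_1}$, and the inner products $\langle\beta_2,\beta_3\rangle = \omega^2$, $\langle\beta_i,\beta_i\rangle = 2\omega^2$, $\langle\alpha_1,\beta_2\rangle = \langle\alpha_1,\beta_3\rangle = 3\omega^2$, $\langle\beta_1,\beta_2\rangle = -\langle\beta_1,\beta_3\rangle = \omega^2$, so that every quantity is expressed through hyperbolic functions of $\alpha_1 = \beta_2 + \beta_3$, $\beta_1 = \beta_2 - \beta_3$, $\beta_2$ and $\beta_3$ alone.

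The organising idea is to sort the resulting terms by their $m$-dependent prefactor. The derivatives $\partial_{\beta_2}(v_1), \partial_{\beta_3}(v_1)$ carry a common factor $m(m+1)\frac{\coth\alpha_1}{\sinh^2\alpha_1}$, while $\partial_{\beta_2}(u_1), \partial_{\beta_3}(u_1)$ carry $m(3m+1)\frac{\coth\beta_1}{\sinh^2\beta_1}$; correspondingly $(\widehat{u}_2 + \widehat{u}_3)$ contributes $\frac{1}{\sinh^2\alpha_1}$ through $g_1^{(\RN{2})}$ and $\frac{1}{\sinh^2\beta_1}$ through $g_1^{(\RN{3})}$. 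I would therefore gather all terms carrying $m(m+1)$ with a factor involving $\alpha_1$, and separately all terms carrying $m(3m+1)$ with a factor involving $\beta_1$, using $f_2 \pm f_3 = -2\omega^2\big[(3m+1)(\coth\beta_2 \pm \coth\beta_3) + (\tanh\beta_2 \pm \tanh\beta_3)\big]$ to expose the combinations $(\coth\beta_2 \pm \coth\beta_3)\frac{\coth\alpha_1}{\sinh^2\alpha_1}$ and $(\tanh\beta_2 \pm \tanh\beta_3)\frac{\coth\alpha_1}{\sinh^2\alpha_1}$, and their $\beta_1$-analogues.

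The key step is that these grouped combinations are exactly the left-hand sides of Lemma \ref{lemma: first order identity}. The part of the $\alpha_1$-group weighted by $(3m+1)$ matches \eqref{eq: first order identity 2} and collapses to $\frac{1}{\sinh^2\beta_2\sinh^2\beta_3}$, while the remaining $\alpha_1$-part matches \eqref{eq: first order identity 1} and collapses to $\frac{1}{\cosh^2\beta_2\cosh^2\beta_3}$; likewise the $\beta_1$-group collapses through \eqref{eq: first order identity 4} and \eqref{eq: first order identity 3} to the same two products. After these substitutions no $\alpha_1$- or $\beta_1$-functions remain, and every surviving term is one of the four products $\frac{1}{\sinh^2\beta_2\sinh^2\beta_3}$, $\frac{1}{\cosh^2\beta_2\cosh^2\beta_3}$, $\frac{1}{\sinh^2\beta_2\cosh^2\beta_3}$, $\frac{1}{\cosh^2\beta_2\sinh^2\beta_3}$, the last two arising only from expanding $\widehat{u}_2\widehat{u}_3$.

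Finally I would match against the right-hand side by rewriting its second term via $\frac{16}{\sinh^2 2\beta_2\sinh^2 2\beta_3} = \big(\sinh^{-2}\beta_2 - \cosh^{-2}\beta_2\big)\big(\sinh^{-2}\beta_3 - \cosh^{-2}\beta_3\big)$, which expresses the entire right-hand side in the same four products. Comparison of the four coefficients then reduces to polynomial identities in $m$, e.g. $-8(3m+1)^2 - 36m(m+1)(3m+1) + 12m(3m+1)^2 = -8(3m+1)(6m+1)$ for the $\sinh^{-2}\beta_2\sinh^{-2}\beta_3$ term, with analogous checks for the others. I expect the main obstacle to be purely the bookkeeping: tracking the several dozen terms and their signs so that the $\alpha_1$- and $\beta_1$-dependence cancels exactly under Lemma \ref{lemma: first order identity} and the $m$-polynomials combine as claimed.
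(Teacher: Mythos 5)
Your proposal is correct and follows essentially the same route as the paper's proof: fixing $\sigma=\mathrm{id}$ with rotations for the other cases, pairing the $g_1^{(\RN{2})}$ terms with the $v_1$-derivatives (your $m(m+1)$/$\alpha_1$ group, collapsed via \eqref{eq: first order identity 1}--\eqref{eq: first order identity 2}) and the $g_1^{(\RN{3})}$ terms with the $u_1$-derivatives (your $m(3m+1)$/$\beta_1$ group, via \eqref{eq: first order identity 3}--\eqref{eq: first order identity 4}), then expanding $\widehat{u}_2\widehat{u}_3$ and using the factorisation $\frac{16}{\sinh^2 2\beta_2\sinh^2 2\beta_3}=\big(\sinh^{-2}\beta_2-\cosh^{-2}\beta_2\big)\big(\sinh^{-2}\beta_3-\cosh^{-2}\beta_3\big)$ to match coefficients. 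Your sample coefficient check $-8(3m+1)^2-36m(m+1)(3m+1)+12m(3m+1)^2=-8(3m+1)(6m+1)$ agrees with the paper's computation.
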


    \begin{proof}
        Firstly we let $\sigma = id$. 
    	By the identities \eqref{eq: first order identity 1} and~\eqref{eq: first order identity 2} in Lemma \ref{lemma: first order identity} we get
    	\begin{equation} \label{eq: first order Db1 I}
    	\bigg( \sum_{j  \neq 1} \widehat{u}_j \bigg) g_1^{(\RN{2})} -\big( f_2 \partial_{\beta_3} + f_3 \partial_{\beta_2}\big)(v_1) =
    	- 36 m (m+1)\omega^6  \bigg( \frac{3m+1}{\sinh^2 \beta_2 \sinh^2 \beta_3} + \frac{1}{\cosh^2 \beta_2 \cosh^2 \beta_3} \bigg) \,.
    	\end{equation}
    	Similarly, by the identities \eqref{eq: first order identity 3} and \eqref{eq: first order identity 4} in Lemma \ref{lemma: first order identity} we get
    	\begin{equation} \label{eq: first order Db1 II}
    	\bigg( \sum_{j  \neq 1} \widehat{u}_j \bigg) g_1^{(\RN{3})} -\big(f_2 \partial_{\beta_3} + f_3 \partial_{\beta_2}\big)(u_1) = 
    	12m(3 m+1) \omega^6 \bigg( \frac{3m+1}{\sinh^2 \beta_2 \sinh^2 \beta_3} + \frac{1}{\cosh^2 \beta_2 \cosh^2 \beta_3} \bigg) \,.
    	\end{equation}
    	Note that the sum of expressions \eqref{eq: first order Db1 I}, \eqref{eq: first order Db1 II} divided by $\omega^6$ together with the  term $-\frac12 \widehat{u}_{2} \widehat{u}_{3} $ divided by $\omega^4$ equals
    	\begin{equation*}
	    	\begin{aligned}
		    	& -\frac{48m(3m+1)}{\sinh^2 \beta_2 \sinh^2 \beta_3}  
		    	-8(3m+1) \bigg( \frac{1}{\sinh^2 \beta_2} - \frac{1}{\cosh^2 \beta_2} \bigg) \bigg( \frac{1}{\sinh^2 \beta_3} - \frac{1}{\cosh^2 \beta_3} \bigg) \\
		    	&= -\frac{48m(3m+1)}{\sinh^2 \beta_2 \sinh^2 \beta_3} - \frac{128 (3m+1)}{\sinh^2 2 \beta_2 \sinh^2 2\beta_3} \,,
	    	\end{aligned}
    	\end{equation*}
which is the right-hand side of the equality \eqref{newlemnum} divided by $\omega^6$    	as required.
    	The cases $\sigma \ne id$ follow from versions of \eqref{eq: first order identity 1} -- \eqref{eq: first order identity 4} obtained by rotating vectors (see Remark~\ref{rem: variants of identities}). 
    \end{proof}

	\begin{proposition} \label{prop: first order}
		The first order terms in the intertwining relation \eqref{eq: intertwining relation} satisfy $$(H  \mathcal{D})^{(1)}  = (\mathcal{D} H_0)^{(1)} \,.$$
	\end{proposition}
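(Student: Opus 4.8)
The plan is to expand the first-order part of $H\mathcal{D}-\mathcal{D}H_0$ explicitly and to check that every coefficient vanishes. Write $\Phi=\sum_{j=1}^3(v_j+u_j)$ for the potential of $H_0$, so that $H-H_0=\sum_{j=1}^3\widehat{u}_j$ acts by multiplication. Using the commutator rule $[-\Delta,aD]=-\Delta(a)\,D-2\,\partial_{\nabla(a)}\,D$, valid for a constant-coefficient operator $D$, one sees that among the four homogeneous pieces of $\mathcal{D}$ in \eqref{eq: D} only $\sum_i g_i\partial_{\beta_i}$ and $h$ produce a first-order term in $[-\Delta,\mathcal{D}]$, while the Leibniz action of the third-, second- and first-order pieces of $\mathcal{D}$ on $\Phi$ supplies the remaining first-order contributions. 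I would first record the resulting identity
\begin{align*}
(H\mathcal{D}-\mathcal{D}H_0)^{(1)} &= -\sum_{i=1}^3\Delta(g_i)\,\partial_{\beta_i}-2\,\partial_{\nabla(h)}+\Big(\sum_{j=1}^3\widehat{u}_j\Big)\sum_{i=1}^3 g_i\,\partial_{\beta_i} \\
&\quad -\sum_{\sigma}\partial_{\beta_{\sigma(2)}}\partial_{\beta_{\sigma(3)}}(\Phi)\,\partial_{\beta_{\sigma(1)}} -\sum_{\sigma} f_{\sigma(1)}\Big(\partial_{\beta_{\sigma(2)}}(\Phi)\,\partial_{\beta_{\sigma(3)}}+\partial_{\beta_{\sigma(3)}}(\Phi)\,\partial_{\beta_{\sigma(2)}}\Big).
\end{align*}

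Next I would substitute the results already prepared for these coefficients. For $\sum_i\Delta(g_i)\partial_{\beta_i}=\sum_i\big(\Delta(g_i^{(\RN{1})})+\Delta(g_i^{(\RN{2})})+\Delta(g_i^{(\RN{3})})\big)\partial_{\beta_i}$ I would use all three parts of Lemma \ref{lemma: properties of gs part 3}; for $2\partial_{\nabla(h)}$ I would split $h=h^{(\RN{1})}+h^{(\RN{2})}+h^{(\RN{3})}+h^{(\RN{4})}$ and apply Lemma \ref{lemma: properties of h part 1} to $h^{(\RN{1})}+h^{(\RN{2})}$ and to $h^{(\RN{3})}$. For $h^{(\RN{4})}$, since $\langle\beta_i,\beta_i\rangle=2\omega^2$ gives $h^{(\RN{4})}=-24m(3m+1)\omega^4 X-32(3m+1)\omega^4 Y$, Corollary \ref{cor: nabla of X} shows that $2\partial_{\nabla(h^{(\RN{4})})}=\sum_\sigma R_\sigma\,\partial_{\beta_{\sigma(1)}}$, where $R_\sigma$ is precisely the right-hand side of the identity \eqref{newlemnum} in Lemma \ref{lemma: first and zeroth order}.

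After these substitutions I would reassemble everything as a single cyclic sum $\sum_\sigma(\,\cdot\,)\partial_{\beta_{\sigma(1)}}$, and three cancellations do the bookkeeping. First, the $g_{\sigma(1)}^{(\RN{1})}$ contributions arising from $\Delta(g^{(\RN{1})})$, from $2\partial_{\nabla(h^{(\RN{1})}+h^{(\RN{2})})}$, and from the global factor $\big(\sum_j\widehat u_j\big)g_{\sigma(1)}$ combine to leave only $\big(\sum_{j\neq\sigma(1)}\widehat u_j\big)\big(g_{\sigma(1)}^{(\RN{2})}+g_{\sigma(1)}^{(\RN{3})}\big)$. Second, the pure scalars $\partial_{\beta_{\sigma(2)}}\partial_{\beta_{\sigma(3)}}(\Phi)$ produced together by $\Delta(g^{(\RN{2})})$, $\Delta(g^{(\RN{3})})$ and $2\partial_{\nabla(h^{(\RN{3})})}$ cancel exactly against the explicit term $-\sum_\sigma\partial_{\beta_{\sigma(2)}}\partial_{\beta_{\sigma(3)}}(\Phi)\partial_{\beta_{\sigma(1)}}$ coming from the Leibniz expansion. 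Third, the off-diagonal operators $f_{\sigma(1)}\partial_{\beta_{\sigma(2)}}(\cdots)\partial_{\beta_{\sigma(3)}}$ and their mirror images, once collected according to the direction of the surviving derivative and the cyclic sum reindexed accordingly, assemble into $-\big(f_{\sigma(2)}\partial_{\beta_{\sigma(3)}}+f_{\sigma(3)}\partial_{\beta_{\sigma(2)}}\big)(v_{\sigma(1)}+u_{\sigma(1)})$. The coefficient of $\partial_{\beta_{\sigma(1)}}$ is then exactly the left-hand side of \eqref{newlemnum} minus $R_\sigma$, which vanishes by Lemma \ref{lemma: first and zeroth order}; hence $(H\mathcal{D}-\mathcal{D}H_0)^{(1)}=0$.

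The main obstacle is the bookkeeping in the third step rather than any single hard identity: one must track the sign changes introduced by $-\sum_i\Delta(g_i)\partial_{\beta_i}$ and $-2\partial_{\nabla(h)}$, exploit $\langle\alpha_i,\beta_i\rangle=0$ (so that $\partial_{\beta_i}(v_i)=0$) when reducing the expressions $\partial_{\beta_{\sigma(2)}}(\Phi)$ and $\partial_{\beta_{\sigma(3)}}(\Phi)$, and correctly reindex the two families of off-diagonal terms so that an operator whose surviving derivative is $\partial_{\beta_{\sigma(3)}}$ is reassigned to the permutation having that direction in its $\sigma(1)$-slot. Once this is organized, the whole first-order part collapses, separately for each $\sigma\in A_3$, onto the single algebraic identity recorded in Lemma \ref{lemma: first and zeroth order}.
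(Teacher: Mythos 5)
Your proposal is correct and follows essentially the same route as the paper: the same expansion of $(H\mathcal{D}-\mathcal{D}H_0)^{(1)}$, the same substitutions from Lemma \ref{lemma: properties of gs part 3} and Lemma \ref{lemma: properties of h part 1}, and the same final collapse of each cyclic coefficient onto Lemma \ref{lemma: first and zeroth order} combined with Corollary \ref{cor: nabla of X}. Your explicit identification of $2\partial_{\nabla(h^{(\RN{4})})}$ with the cyclic sum of the right-hand sides of \eqref{newlemnum}, and your use of $\partial_{\beta_i}(v_i)=0$ in the reindexing of the off-diagonal terms, simply spell out the bookkeeping the paper leaves implicit.
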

	
	\begin{proof}
		We have that 
		\begin{equation} \label{eq: first order}
			\begin{aligned}
				(H  \mathcal{D} - \mathcal{D} H_0)^{(1)} &= 
				- \sum_{i=1}^3 \Delta(g_i) \partial_{\beta_i} - 2 \partial_{\nabla(h)} + \sum_{i=1}^3 \bigg( \sum_{j=1}^3 \widehat{u}_j \bigg) g_i \partial_{\beta_i} - \sum_{\sigma} \partial_{\beta_{\sigma(2)}} \partial_{\beta_{\sigma(3)}} \bigg(\sum_{j=1}^3(v_j + u_j) \bigg)  \partial_{\beta_{\sigma(1)}} \\
				&\quad - \sum_{\sigma} f_{_{\sigma(1)}} \bigg(\partial_{\beta_{\sigma(2)}} \bigg( \sum_{j=1}^3 (v_j + u_j) \bigg)\partial_{\beta_{\sigma(3)}} +\partial_{\beta_{\sigma(3)}} \bigg( \sum_{j=1}^3 (v_j + u_j) \bigg)\partial_{\beta_{\sigma(2)}} \bigg) \,.
			\end{aligned}
		\end{equation}
		We substitute the expression for $\sum_{i=1}^3 \Delta(g_i) \partial_{\beta_i}$ from Lemma \ref{lemma: properties of gs part 3} and the expression for 
		$\partial_{\nabla(h^{(\RN{1})} + h^{(\RN{2})}+ h^{(\RN{3})})}$ from Lemma~\ref{lemma: properties of h part 1} into the formula \eqref{eq: first order}.
		Then the expression \eqref{eq: first order} 
		can be rearranged as
			\begin{equation}
					\begin{aligned} \label{eq: first order rearranged}
						&- \frac12 \sum_{\sigma}  \langle \beta_{\sigma(2)}, \beta_{\sigma(3)} \rangle \widehat{u}_{\sigma(2)} \widehat{u}_{\sigma(3)} \partial_{\beta_{\sigma(1)}}
						+ \sum_{\sigma} \bigg( \sum_{j  \neq \sigma(1)} \widehat{u}_j \bigg) \big(g_{\sigma(1)}^{(\RN{2})} + g_{\sigma(1)}^{(\RN{3})} \big) \partial_{\beta_{\sigma(1)}} \\
						&\quad- \sum_{\sigma} \bigg( f_{\sigma(2)} \partial_{\beta_{\sigma(3)}} + f_{\sigma(3)} \partial_{\beta_{\sigma(2)}} \bigg)( v_{\sigma(1)} + u_{\sigma(1)} )  \partial_{\beta_{\sigma(1)}}
						- 2 \partial_{\nabla(h^{(\RN{4})})} \,,
					\end{aligned}
				\end{equation}
		which is zero by Lemma \ref{lemma: first and zeroth order} and Corollary \ref{cor: nabla of X}.
	\end{proof}

    The following lemma is needed in order to consider the zero order terms in the intertwining relation.
    \begin{lemma} \label{lemma: zero order rearranged}
    	The zero order terms satisfy $$(H  \mathcal{D} - \mathcal{D}  H_0)^{(0)} = A + B + C + D \,,$$
    	where
    	\begin{align}
    	    A &=  \sum_{i=1}^3 \bigg( \sum_{j \neq i} \widehat{u}_j \bigg) \partial_{\beta_i} \bigg( g_i^{(\RN{3})} \bigg) - \sum_{\sigma} f_{_{\sigma(1)}} \partial_{\beta_{\sigma(2)}}  \partial_{\beta_{\sigma(3)}} \bigg( \sum_{j \neq \sigma(1)} u_j  \bigg) \label{eq: A} \,, \\
    	    B &= -\frac12 \sum_{\sigma} \langle \beta_{\sigma(2)}, \beta_{\sigma(3)} \rangle \widehat{u}_{\sigma(2)} \widehat{u}_{\sigma(3)} f_{\sigma(1)}
    	    + \sum_{\sigma} \bigg( \sum_{j \neq \sigma(1)} \widehat{u}_j \bigg) \big( g_{\sigma(1)}^{(\RN{2})} + g_{\sigma(1)}^{(\RN{3})} \big) f_\sigma(1) \,, \label{eq: B} \\
    	    C &= \bigg( \sum_{j=1}^3 \widehat{u}_j \bigg) h^{(\RN{4})} - \Delta(h^{(\RN{4})}) \,, \label{eq: C} \\
    	    D &= -\sum_{i=1}^3 g_i \partial_{\beta_i} \bigg( \sum_{j  \neq i} (v_j + u_j) \bigg) \,.
    	\end{align}
    	Moreover, the term $D$ can be rearranged as $D = D_1 + D_2 \,,$ where
    	\begin{align} 
    	    &D_1 = -\sum_{i=1}^3 (g_i^{(\RN{2})} + g_i^{(\RN{3})}) \partial_{\beta_i} \bigg( \sum_{j  \neq i} (v_j + u_j) \bigg) \label{eq: D1} \,, \\
    	    &D_2 = - \sum_{\sigma} f_{_{\sigma(1)}} \bigg( f_{_{\sigma(2)}} \partial_{\beta_{\sigma(3)}} +  f_{_{\sigma(3)}} \partial_{\beta_{\sigma(2)}} \bigg) \big(v_{\sigma(1)} + u_{\sigma(1)} \big) \label{eq: D2} \,.
    	\end{align}
    \end{lemma}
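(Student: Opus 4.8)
The plan is to extract the zeroth-order part of the composite operator $H\mathcal{D}-\mathcal{D}H_0$ by normal-ordering (moving all derivatives to the right) and then to reorganise the resulting scalar function into the four blocks $A$, $B$, $C$, $D$ using the Laplacian identities already established for the coefficients of $\mathcal{D}$. Writing $H = -\Delta + W$ and $H_0 = -\Delta + W_0$ with $W = \sum_i(v_i + \widetilde{u}_i)$ and $W_0 = \sum_j(v_j + u_j)$, I first record the general normal-ordering rule: the only zeroth-order contribution of $H\mathcal{D}$ comes from $-\Delta$ acting on the constant term $h$ of $\mathcal{D}$ together with the multiplication $W\cdot h$, while $\mathcal{D}(-\Delta)$ contributes nothing in order zero and $\mathcal{D}W_0$ contributes $\mathcal{D}(W_0)$, the operator $\mathcal{D}$ applied to the function $W_0$. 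Since $W - W_0 = \sum_i\widehat{u}_i$, this gives
\[
(H\mathcal{D}-\mathcal{D}H_0)^{(0)} = \Big(\sum_i\widehat{u}_i\Big)h - \Delta(h) - \partial_{\beta_1}\partial_{\beta_2}\partial_{\beta_3}(W_0) - \sum_{\sigma} f_{\sigma(1)}\partial_{\beta_{\sigma(2)}}\partial_{\beta_{\sigma(3)}}(W_0) - \sum_i g_i\partial_{\beta_i}(W_0).
\]

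Next I would substitute the decomposition $h = h^{(\RN{1})} + h^{(\RN{2})} + h^{(\RN{3})} + h^{(\RN{4})}$ together with the Laplacian formulas of Lemma \ref{lemma: properties of h part 2}. The pair $\big(\sum_i\widehat{u}_i\big)h^{(\RN{4})} - \Delta(h^{(\RN{4})})$ is simply set aside as $C$, leaving its $X$- and $Y$-content unevaluated at this stage. The relations $\langle\alpha_i,\beta_i\rangle = 0$ are crucial here: they give $\partial_{\beta_i}(v_i) = 0$, hence $\partial_{\beta_1}\partial_{\beta_2}\partial_{\beta_3}(\sum_j v_j) = 0$ and $\partial_{\beta_{\sigma(2)}}\partial_{\beta_{\sigma(3)}}(\sum_{j\neq\sigma(1)}v_j) = 0$, so the $v$-contributions of the triple and double derivatives of $W_0$ drop out. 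Using Lemma \ref{lemma: properties of h part 2} part \ref{eq: delta of hIII}, the surviving term $\partial_{\beta_1}\partial_{\beta_2}\partial_{\beta_3}(\sum_j u_j)$ cancels against $-\Delta(h^{(\RN{3})})$, and the $g_i\partial_{\beta_i}(u_i)$ remnant (coming from the $j=i$ part of $\sum_i g_i\partial_{\beta_i}(W_0)$) cancels against the $\partial_{\beta_i}(u_i)g_i$ term produced by Lemma \ref{lemma: properties of h part 2} part \ref{eq: delta of hI && II}. The second-order derivative terms then assemble into $-\sum_\sigma f_{\sigma(1)}\partial_{\beta_{\sigma(2)}}\partial_{\beta_{\sigma(3)}}\big(\sum_{j\neq\sigma(1)}u_j\big)$, the second summand of $A$; the terms carrying $\partial_{\beta_i}(g_i^{(\RN{3})})$ combine, via $h^{(\RN{3})} = \sum_i\partial_{\beta_i}(g_i^{(\RN{3})})$, into $\sum_i\big(\sum_{j\neq i}\widehat{u}_j\big)\partial_{\beta_i}(g_i^{(\RN{3})})$, the first summand of $A$; the leftover $\big(\sum_i\widehat{u}_i\big)(h^{(\RN{1})}+h^{(\RN{2})}) - \sum_i\widehat{u}_i f_i g_i$ simplifies, using $f_i g_i^{(\RN{1})} = f_1f_2f_3$, to $\sum_\sigma\big(\sum_{j\neq\sigma(1)}\widehat{u}_j\big)\big(g_{\sigma(1)}^{(\RN{2})}+g_{\sigma(1)}^{(\RN{3})}\big)f_{\sigma(1)}$, which together with the leftover $-\frac12\sum_\sigma\langle\beta_{\sigma(2)},\beta_{\sigma(3)}\rangle\widehat{u}_{\sigma(2)}\widehat{u}_{\sigma(3)}f_{\sigma(1)}$ is exactly $B$; and the residue of $-\sum_i g_i\partial_{\beta_i}(W_0)$ is $D$.

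For the rearrangement $D = D_1 + D_2$, I would split $g_i = g_i^{(\RN{1})} + g_i^{(\RN{2})} + g_i^{(\RN{3})}$. The parts $g_i^{(\RN{2})}$ and $g_i^{(\RN{3})}$ reproduce $D_1$ verbatim. For the remaining $-\sum_i g_i^{(\RN{1})}\partial_{\beta_i}\big(\sum_{j\neq i}(v_j+u_j)\big)$, substituting $g_i^{(\RN{1})} = \prod_{k\neq i}f_k$ and expanding the cyclic sum shows that both this expression and $D_2$ reduce to the same six terms of the form $f_b f_c\,\partial_{\beta_a}(v_b + u_b)$ with $\{a,b,c\} = \{1,2,3\}$; the two groupings differ only in whether the cyclic index is read as the derivative direction or as the differentiated potential, so they coincide.

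I expect the main obstacle to be purely organisational: keeping the numerous summands aligned through the normal-ordering and the three cancellations, and correctly matching the cyclic sums $\sum_\sigma$ against the plain sums $\sum_i$, using that $\sigma\mapsto\sigma(1)$ is a bijection $A_3\to\{1,2,3\}$ and that each summand is symmetric under interchange of the remaining two indices. No individual step is deep; the essential content lies in this reindexing and in systematically discarding the $\beta_i$-derivatives of $v_i$ via $\langle\alpha_i,\beta_i\rangle = 0$.
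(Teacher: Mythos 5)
Your proposal is correct and takes essentially the same route as the paper: the paper's proof likewise normal-orders $H\mathcal{D}-\mathcal{D}H_0$ to get exactly your displayed expression (with the $v$-terms already pruned via $\langle\alpha_i,\beta_i\rangle=0$, i.e.\ $\partial_{\beta_i}(v_i)=0$), substitutes the Laplacian formulas of Lemma \ref{lemma: properties of h part 2} to produce $A+B+C+D$, and obtains $D=D_1+D_2$ by expanding $g_i=g_i^{(\RN{1})}+g_i^{(\RN{2})}+g_i^{(\RN{3})}$ and reindexing the cyclic sum. Your write-up merely makes explicit the cancellations (of the triple derivative against $-\Delta(h^{(\RN{3})})$, of $g_i\partial_{\beta_i}(u_i)$, and of the $f_{\sigma(1)}\partial_{\beta_{\sigma(2)}}\partial_{\beta_{\sigma(3)}}(v_{\sigma(1)}+u_{\sigma(1)})$ terms) that the paper leaves implicit in the phrase ``by putting in the results of Lemma \ref{lemma: properties of h part 2}''.
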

    \begin{proof}
    	We have 
    	\begin{equation} \label{eq: zeroth order}
    	    \begin{aligned}
    	        (H \circ \mathcal{D})^{(0)} - (\mathcal{D} \circ H_0)^{(0)} &= 
    	        - \Delta(h) + \bigg( \sum_{j=1}^3 \widehat{u}_j \bigg) h
             	- \partial_{\beta_1}  \partial_{\beta_2} \partial_{\beta_3} \bigg( \sum_{j=1}^3 u_j \bigg) \\
    	        &\quad - \sum_{\sigma} f_{_{\sigma(1)}} \partial_{\beta_{\sigma(2)}} \partial_{\beta_{\sigma(3)}} \bigg( v_{_{\sigma(1)}} + \sum_{j=1}^3 u_j  \bigg) \\
    	        &\quad - \sum_{i=1}^3 g_i \partial_{\beta_i} \bigg( \sum_{j  \neq i} v_j +  \sum_{j=1}^3 u_j \bigg) \,.
    	    \end{aligned}
    	\end{equation}
    	By putting in the results of Lemma \ref{lemma: properties of h part 2}, the expression \eqref{eq: zeroth order} takes the required form $A + B + C + D \,.$
        By expanding $g_i = g_i^{(\RN{2})} + g_i^{(\RN{3})} + g_i^{(\RN{1})}$, we also have that
        \begin{align*}
            D = D_1 - \sum_{\sigma} f_{\sigma(2)} f_{\sigma(3)} \partial_{\beta_{\sigma(1)}} \bigg( \sum_{j  \neq \sigma(1)} (v_j + u_j) \bigg) = D_1 + D_2 
        \end{align*}
        as required.
\end{proof}
    
    In the next Lemmas \ref{lemma: A} -- \ref{lemma: D1} we rearrange the expressions for the zero order terms $A$, $B$, $C$, $D$. Namely, we rewrite these terms explicitly as functions of $\beta_j$.
    \begin{lemma} \label{lemma: A}
    	The function $A$ given by expression \eqref{eq: A} can be rearranged as follows:
    	\begin{equation} \label{eq: zeroth order sinh^4 terms}
    	\begin{aligned}
    	&\frac{A}{48 \omega^6} = 2 m(3m+1)^2 X
    	-  m (3m+1) \sum_{\sigma} \frac{\langle \beta_{\sigma(2)}, \beta_{\sigma(3)} \rangle \coth \beta_{\sigma(1)}}{\cosh^2 \beta_{\sigma(2)} \cosh^2 \beta_{\sigma(3)}}  \\
    	&\qquad + 3 m^2 (3m+1) \bigg( \sum_{j=1}^3 \frac{1}{\sinh^2 \beta_j} \bigg) X + 8 m (3m+1) \bigg( \sum_{j=1}^3 \frac{1}{\sinh^2 \beta_j} \bigg) Y \\
    	&\qquad + 24m(3m+1) Y. 
    	\end{aligned}
    	\end{equation}
    \end{lemma}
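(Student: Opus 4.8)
The statement is a direct, if lengthy, computation, and the plan is to expand $A$ into elementary hyperbolic functions and then recognise the result as a combination of $X$, $Y$ and the residual $\cosh^{-2}\cosh^{-2}$ sum by invoking the identities of Section \ref{prelim}. First I would substitute the explicit forms of $f_{\sigma(1)}$, $\widehat{u}_j$ and $u_j$ into the two sums defining $A$ in \eqref{eq: A}. The first sum is almost immediate: by \eqref{def: hIII} the quantity $\partial_{\beta_i}(g_i^{(\RN{3})})$ is exactly the $i$-th summand of $h^{(\RN{3})}$, hence equal to $\mp\frac{12m(3m+1)\omega^6\coth\beta_i}{\sinh^2\beta_i}$, so this sum contributes products of $\frac{\coth\beta_i}{\sinh^2\beta_i}$ with $\widehat{u}_j=\frac{4(3m+1)\omega^2}{\sinh^2\beta_j}-\frac{4\omega^2}{\cosh^2\beta_j}$ for $j\neq i$. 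For the second sum I would use Lemma \ref{lemma: delta} with $\gamma=\beta_{\sigma(2)}$ to write $\partial_{\beta_{\sigma(2)}}\partial_{\beta_{\sigma(3)}}(u_{\sigma(2)})=\langle\beta_{\sigma(2)},\beta_{\sigma(3)}\rangle\,\Delta(u_{\sigma(2)})$, reducing each mixed second derivative to an explicit combination of $\sinh^{-2}\beta_{\sigma(2)}$ and $\sinh^{-4}\beta_{\sigma(2)}$ (and likewise for $\sigma(3)$), each multiplied by the $\coth\beta_{\sigma(1)}$ and $\tanh\beta_{\sigma(1)}$ pieces of $f_{\sigma(1)}$.

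The crucial bookkeeping step is to exploit the sign pattern $\langle\beta_i,\beta_j\rangle=\pm\omega^2$: the product $\prod_{k\neq i}\langle\beta_i,\beta_k\rangle$ entering $g_i^{(\RN{3})}$ carries exactly the sign needed so that, in each summand indexed by $\sigma$, every scalar coefficient can be written uniformly as a multiple of $\langle\beta_{\sigma(2)},\beta_{\sigma(3)}\rangle$. Once this is done, $A$ becomes a cyclic sum of the standard shapes the Section \ref{prelim} identities are designed to collapse, and I would group the terms by type. The pieces of the form $\frac{\coth\beta_{\sigma(1)}}{\sinh^2\beta_{\sigma(1)}}\bigl(\sinh^{-2}\beta_{\sigma(2)}+\sinh^{-2}\beta_{\sigma(3)}\bigr)$ collapse via Lemma \ref{lemma: zeroth order identity 2} to $(2+\sum_j\sinh^{-2}\beta_j)X$, producing the $2m(3m+1)^2X$ and part of the $\bigl(\sum_j\sinh^{-2}\beta_j\bigr)X$ contributions; the remaining $\coth$- and $\tanh$-over-$\sinh^2\sinh^2$ sums collapse via Lemma \ref{lemma: alt expression for X} and Lemma \ref{lemma: alt expression for X and Y}, the latter being precisely what surfaces the $Y$ and $\bigl(\sum_j\sinh^{-2}\beta_j\bigr)Y$ terms; and the genuinely mixed $\sinh$--$\cosh$ pieces are retained, assembling into the residual $\sum_\sigma\frac{\langle\beta_{\sigma(2)},\beta_{\sigma(3)}\rangle\coth\beta_{\sigma(1)}}{\cosh^2\beta_{\sigma(2)}\cosh^2\beta_{\sigma(3)}}$.

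I expect the main obstacle to be the single-variable $\sinh^{-4}\beta_{\sigma(2)}$ terms coming from $\Delta(u_{\sigma(2)})$ multiplied by the plain $\coth\beta_{\sigma(1)}$ and $\tanh\beta_{\sigma(1)}$ factors of $f_{\sigma(1)}$: these are two-body and do not fit the three-body product shapes $\frac{[\coth\text{ or }\tanh]\beta_{\sigma(1)}}{\sinh^2\beta_{\sigma(2)}\sinh^2\beta_{\sigma(3)}}$ of Lemmas \ref{lemma: alt expression for X} and \ref{lemma: alt expression for X and Y} as they stand. To feed them into those identities I would first reduce the single-variable terms to product form using the linear relation $\beta_{\sigma(1)}=\pm(\beta_{\sigma(2)}-\beta_{\sigma(3)})$ together with the factorisation $\sinh^2\beta_{\sigma(2)}-\sinh^2\beta_{\sigma(3)}=\sinh(\beta_{\sigma(2)}+\beta_{\sigma(3)})\sinh(\beta_{\sigma(2)}-\beta_{\sigma(3)})$, in the spirit of the proofs of Lemma \ref{lemma: derivative of X} and Lemma \ref{lemma: first order identity}, and I would use $\tanh z=\coth z-(\sinh z\cosh z)^{-1}$ with $\sinh 2\beta_j=2\sinh\beta_j\cosh\beta_j$ to route the $\tanh$ contributions into the $Y$-dependent shapes. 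After this reorganisation the three $X,Y$-identities apply termwise, and matching the resulting coefficients against the right-hand side of \eqref{eq: zeroth order sinh^4 terms} is routine; the only real work is the careful, sign-sensitive collection of terms.
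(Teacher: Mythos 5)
Your overall strategy --- expand $A$ explicitly into hyperbolic functions and collapse the result with the Section~\ref{prelim} identities --- is the same as the paper's, and your opening moves are all correct: reading off $\partial_{\beta_i}(g_i^{(\RN{3})})$ from \eqref{def: hIII}, computing the mixed second derivatives via Lemma~\ref{lemma: delta}, and the observation that the signs of $\prod_{k\neq i}\langle\beta_i,\beta_k\rangle$ make every coefficient a uniform multiple of $\langle\beta_{\sigma(2)},\beta_{\sigma(3)}\rangle$. Where you differ is in the organisation of the collapse. The paper does not use Lemma~\ref{lemma: zeroth order identity 2} or Lemma~\ref{lemma: alt expression for X and Y} in this proof: it regroups the second sum of $A$ by which $u_j$ is differentiated (so that, e.g., the $u_1$-terms acquire the difference weight $f_2-f_3$), pairs each such group with the matching first-sum term, and collapses each pair in one stroke using \eqref{eq: first order identity 3} and \eqref{eq: first order identity 4} multiplied by $\coth\beta_{\sigma(1)}$; Lemma~\ref{lemma: alt expression for X} then yields the first line of \eqref{eq: zeroth order sinh^4 terms}, and the leftover $\sinh^{-4}$ terms are converted by hand into the $X$- and $Y$-lines. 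Your route --- dispatch the first sum immediately via the ready-made cyclic Lemma~\ref{lemma: zeroth order identity 2} (legitimate: the sign pattern does match its left-hand side), then treat the second sum on its own --- is viable, but it is costlier: it creates intermediate $m(3m+1)^2\big(\sum_j\sinh^{-2}\beta_j\big)X$ terms that do not appear in the target and must be cancelled by the second sum, and each conversion you then perform on the second sum is essentially a re-derivation of the paper's pairing identities.

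Two attributions in your outline would fail as literally stated, though your own toolkit repairs them. First, in your $\sigma$-grouping \emph{every} term of the second sum is two-body, not only the $\sinh^{-4}\beta_{\sigma(2)}$ pieces: the $\coth^2\beta_{\sigma(2)}/\sinh^2\beta_{\sigma(2)}$ pieces are equally products of a function of $\beta_{\sigma(1)}$ with a function of $\beta_{\sigma(2)}$ alone, so there are no ``remaining $\coth$- and $\tanh$-over-$\sinh^2\sinh^2$ sums'' for Lemmas~\ref{lemma: alt expression for X} and~\ref{lemma: alt expression for X and Y} to act on until the addition-formula conversion of your last paragraph has been applied to \emph{all} of them. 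Second, Lemma~\ref{lemma: alt expression for X and Y} cannot ``surface'' the $\big(\sum_j\sinh^{-2}\beta_j\big)Y$ terms: its right-hand side is $X+4Y$ with constant coefficients. Those terms arise only inside the $\sinh^{-4}$ conversion itself, via the addition formulae (e.g.\ $\coth\beta_3-\coth\beta_2=\sinh\beta_1/(\sinh\beta_2\sinh\beta_3)$, $\tanh\beta_3-\tanh\beta_2=-\sinh\beta_1/(\cosh\beta_2\cosh\beta_3)$) followed by $\cosh^2\beta_1=1+\sinh^2\beta_1$ and $\sinh 2z=2\sinh z\cosh z$ --- precisely the paper's step \eqref{eq: zeroth order sinh^4 terms IV}. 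Likewise the mixed $\cosh^{-2}$ terms of the first sum do not merely get ``retained'': they cancel against the $\tanh$-type terms of the second sum exactly through \eqref{eq: first order identity 3} multiplied by $\coth\beta_{\sigma(1)}$, which is the paper's pairing in disguise. With these repairs your plan closes.
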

    \begin{proof}
    	Consider the term involving $g_1^{(\RN{3})}$ in $\sum_{i=1}^3 \bigg( \sum_{j \neq i} \widehat{u}_j \bigg) \partial_{\beta_i} \bigg( g_i^{(\RN{3})} \bigg)$. It gives
    	\begin{equation}\label{eq: zeroth order sinh^4 terms I}
    	\begin{aligned}
\frac{1}{48 \omega^8}    	\bigg( \sum_{j \neq 1} \widehat{u}_j \bigg) \partial_{\beta_1} \bigg( g_1^{(\RN{3})} \bigg) &= -m(3m+1)^2 \bigg( \frac{1}{\sinh^2 \beta_2} + \frac{1}{\sinh^2 \beta_3} \bigg) \frac{\coth \beta_1}{\sinh^2 \beta_1} \\
    	&\quad + m(3m+1)\bigg( \frac{1}{\cosh^2 \beta_2} + \frac{1}{\cosh^2 \beta_3} \bigg) \frac{\coth \beta_1}{\sinh^2 \beta_1} \,.
    	\end{aligned}
    	\end{equation}
    	Now consider the terms involving $u_1$ in $-\sum_{\sigma} f_{_{\sigma(1)}} \partial_{\beta_{\sigma(2)}}  \partial_{\beta_{\sigma(3)}} \bigg( \sum_{j \neq \sigma(1)} u_j  \bigg) $. They produce 
    	\begin{equation} \label{eq: zeroth order sinh^4 terms II}
    	\begin{aligned}
    	&- \frac{1}{48 \omega^8}\big(f_2 \partial_{\beta_3} \partial_{\beta_1} (u_1) + f_3 \partial_{\beta_1} \partial_{\beta_2}(u_1) \big)= \\
    	&\quad = -2m(3m+1)^2 \big(\coth \beta_2 - \coth \beta_3 \big) \frac{\coth^2 \beta_1}{\sinh^2 \beta_1} -2m(3m+1) \big(\tanh \beta_2 - \tanh \beta_3 \big) \frac{\coth^2 \beta_1}{\sinh^2 \beta_1} \\
    	&\qquad + m(3m+1) \bigg( (3m+1) \big( \coth \beta_3 - \coth \beta_2 \big) + \big( \tanh \beta_3 - \tanh \beta_2 \big) \bigg) \frac{1}{\sinh^4 \beta_1} \,.
    	\end{aligned}
    	\end{equation}
    	It follows from Lemma \ref{lemma: first order identity} (namely, equalities \eqref{eq: first order identity 3} and \eqref{eq: first order identity 4} multiplied by $\coth \beta_1$) that the sum of the right-hand side of equality \eqref{eq: zeroth order sinh^4 terms I} with the first two terms in the right-hand side of equality~\eqref{eq: zeroth order sinh^4 terms II}~equals
    	\begin{equation}  \label{eq: zeroth order sinh^4 terms III}
    	-m(3m+1)^2 \frac{\coth \beta_1}{\sinh^2 \beta_2 \sinh^2 \beta_3} - m(3m+1) \frac{\coth \beta_1}{\cosh^2 \beta_2 \cosh^2 \beta_3} \,,
    	\end{equation}
    	Adding expression \eqref{eq: zeroth order sinh^4 terms III} with analogous ones coming from the terms $g_2^{(\RN{3})}$, $u_2$, and $g_3^{(\RN{3})}$, $u_3$ in the left-hand side of equality~\eqref{eq: zeroth order sinh^4 terms} we get the 1st line of the right-hand side of equality~\eqref{eq: zeroth order sinh^4 terms} by Lemma~\ref{lemma: alt expression for X}.
    	
    	Now we rearrange the other terms in the right-hand side of equality \eqref{eq: zeroth order sinh^4 terms II}. We have that
    	\begin{equation} \label{eq: zeroth order sinh^4 terms IV}
    	\begin{aligned}
    	& m(3m+1) \bigg( (3m+1) \big( \coth \beta_3 - \coth \beta_2 \big) + \big( \tanh \beta_3 - \tanh \beta_2 \big) \bigg) \frac{1}{\sinh^4 \beta_1} \\
    	&= m(3m+1) \bigg( \frac{3m \sinh \beta_1}{\sinh \beta_2 \sinh \beta_3} + \frac{\sinh \beta_1}{\sinh \beta_2 \sinh \beta_3} - \frac{\sinh \beta_1}{\cosh \beta_2 \cosh \beta_3} \bigg) \frac{1}{\sinh^4 \beta_1} \\
    	&= \frac{3m^2(3m+1)\omega^{-2}}{\sinh^2 \beta_1}X + \frac{m(3m+1)\cosh \beta_1}{\sinh^3 \beta_1 \sinh \beta_2 \sinh \beta_3 \cosh \beta_2 \cosh \beta_3} \\
    	&=  \frac{3m^2(3m+1)\omega^{-2}}{\sinh^2 \beta_1}X + \frac{m(3m+1) (1 + \sinh^2 \beta_1)}{\sinh^3 \beta_1 \sinh \beta_2 \sinh \beta_3 \cosh \beta_1 \cosh \beta_2 \cosh \beta_3} \\
    	&= \frac{3m^2(3m+1)\omega^{-2}}{\sinh^2 \beta_1}X + \frac{8m(3m+1)\omega^{-2}}{\sinh^2 \beta_1}Y + 8m(3m+1)\omega^{-2}Y \,.
    	\end{aligned}
    	\end{equation}
    	Similarly, terms from the right-hand side of a version of equality \eqref{eq: zeroth order sinh^4 terms II} for $g_2^{(\RN{3})}$, $u_2$, and $g_3^{(\RN{3})}$, $u_3$, add up to 
    	\begin{equation} \label{eq: zeroth order sinh^4 terms V}
    	3m^2(3m+1)\omega^{-2}\bigg( \frac{1}{\sinh^2 \beta_2} + \frac{1}{\sinh^2 \beta_3}  \bigg) X + 8m(3m+1)\omega^{-2}\bigg( \frac{1}{\sinh^2 \beta_2} + \frac{1}{\sinh^2 \beta_3} \bigg)Y + 16m(3m+1)\omega^{-2}Y \,.
    	\end{equation}
    	The sum of terms in the last line of \eqref{eq: zeroth order sinh^4 terms IV} together with \eqref{eq: zeroth order sinh^4 terms V} equals the terms in the 2nd and 3rd line in the right-hand side of equality~\eqref{eq: zeroth order sinh^4 terms}.
    \end{proof} 
    
    \begin{lemma}\label{lemma: B + D2}
    	Consider the functions $B$ and $D_2$ given by \eqref{eq: B} and \eqref{eq: D2}. We have
    	\begin{align} \label{eq: B + D2}
    	   \frac{B + D_2}{64 \omega^6} &= -3m(3m+1)(3m+2) X - 12m(3m+1)Y -16(3m+1)Y \nonumber \\
    	    &\quad  + \sum_{\sigma} \frac{12m(3m+1) \langle \beta_{\sigma(2)}, \beta_{\sigma(3)} \rangle \coth \beta_{\sigma(1)}}{\sinh^2 2\beta_{\sigma(2)} \sinh^2 2 \beta_{\sigma(3)}} \,.  
    	\end{align}
    \end{lemma}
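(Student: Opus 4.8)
The plan is to exploit the fact that every term of both $B$ and $D_2$ carries a common factor $f_{\sigma(1)}$, so that $B + D_2$ collapses into a single cyclic sum to which Lemma \ref{lemma: first and zeroth order} applies verbatim. Comparing \eqref{eq: B} and \eqref{eq: D2}, I would first write
\begin{equation*}
B + D_2 = \sum_\sigma f_{\sigma(1)} \Bigg[ -\tfrac12 \langle\beta_{\sigma(2)},\beta_{\sigma(3)}\rangle\widehat u_{\sigma(2)}\widehat u_{\sigma(3)} + \Big(\sum_{j\neq\sigma(1)}\widehat u_j\Big)\big(g_{\sigma(1)}^{(\RN{2})} + g_{\sigma(1)}^{(\RN{3})}\big) - \big(f_{\sigma(2)}\partial_{\beta_{\sigma(3)}} + f_{\sigma(3)}\partial_{\beta_{\sigma(2)}}\big)(v_{\sigma(1)} + u_{\sigma(1)}) \Bigg],
\end{equation*}
and observe that the bracketed expression is precisely the left-hand side of \eqref{newlemnum}. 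Substituting the right-hand side of Lemma \ref{lemma: first and zeroth order} then reduces the whole problem to evaluating
\begin{equation*}
B + D_2 = -\sum_\sigma f_{\sigma(1)}\,\omega^4\langle\beta_{\sigma(2)},\beta_{\sigma(3)}\rangle\Bigg( \frac{48m(3m+1)}{\sinh^2\beta_{\sigma(2)}\sinh^2\beta_{\sigma(3)}} + \frac{128(3m+1)}{\sinh^2 2\beta_{\sigma(2)}\sinh^2 2\beta_{\sigma(3)}} \Bigg).
\end{equation*}

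Next I would insert the explicit form of $f_{\sigma(1)}$. Since $\langle\beta_i,\beta_i\rangle = 2\omega^2$ for every $i$, the definition \eqref{def: fs} gives $f_{\sigma(1)} = -2\omega^2\big((3m+1)\coth\beta_{\sigma(1)} + \tanh\beta_{\sigma(1)}\big)$, so that $B+D_2$ equals $2\omega^6$ times a cyclic sum of $\big((3m+1)\coth\beta_{\sigma(1)} + \tanh\beta_{\sigma(1)}\big)\langle\beta_{\sigma(2)},\beta_{\sigma(3)}\rangle$ paired against the two denominators. I would then split the calculation according to the denominator type. For the $\sinh^2\beta$ part the two trigonometric lemmas already at hand do all the work: Lemma \ref{lemma: alt expression for X} evaluates the $\coth\beta_{\sigma(1)}$ sum as $-2X$, and Lemma \ref{lemma: alt expression for X and Y} evaluates the $\tanh\beta_{\sigma(1)}$ sum as $-2X - 8Y$; combining these with the factor $(3m+1)$ in front of the $\coth$ piece produces the $-3m(3m+1)(3m+2)X$ term together with the contribution $-12m(3m+1)Y$.

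The main obstacle is the $\sinh^2 2\beta$ part, where neither $\coth\beta_{\sigma(1)}$ nor $\tanh\beta_{\sigma(1)}$ alone matches a known identity. Here I would use the elementary hyperbolic identity $\coth z + \tanh z = 2\coth 2z$ to rewrite the numerator as $(3m+1)\coth\beta_{\sigma(1)} + \tanh\beta_{\sigma(1)} = 3m\coth\beta_{\sigma(1)} + 2\coth 2\beta_{\sigma(1)}$. The $\coth 2\beta_{\sigma(1)}$ piece is then handled by the $\beta_i \mapsto 2\beta_i$ rescaling of Lemma \ref{lemma: alt expression for X}, legitimate by Remark \ref{rem: variants of identities}: accounting for $\langle 2\beta_{\sigma(2)}, 2\beta_{\sigma(3)}\rangle = 4\langle\beta_{\sigma(2)},\beta_{\sigma(3)}\rangle$ and the corresponding replacement $X\mapsto 4Y$, one finds $\sum_\sigma \langle\beta_{\sigma(2)},\beta_{\sigma(3)}\rangle\coth 2\beta_{\sigma(1)}\big(\sinh^2 2\beta_{\sigma(2)}\sinh^2 2\beta_{\sigma(3)}\big)^{-1} = -2Y$, which supplies the remaining $-16(3m+1)Y$. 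By contrast, the $3m\coth\beta_{\sigma(1)}$ piece over the $\sinh^2 2\beta$ denominators does not reduce to $X$ or $Y$ and survives as the residual cyclic sum on the right-hand side of \eqref{eq: B + D2}, weighted by $12m(3m+1)$. Finally I would collect all four contributions and divide by $64\omega^6$, verifying that the numerical coefficients $-3$, $-12$, $-16$ and $12$ emerge exactly as stated.
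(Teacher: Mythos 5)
Your proposal is correct and takes essentially the same route as the paper's own proof: the paper likewise writes $B+D_2$ as $\sum_{\sigma} f_{\sigma(1)}$ times the left-hand side of Lemma \ref{lemma: first and zeroth order}, substitutes $f_j$ in the form \eqref{def: fs} against the $\sinh^2\beta$ denominators and in the form $f_j=-2\omega^2(3m\coth\beta_j+2\coth 2\beta_j)$ against the $\sinh^2 2\beta$ denominators, and then invokes Lemma \ref{lemma: alt expression for X} (together with its $2\beta$-rescaled version) and Lemma \ref{lemma: alt expression for X and Y}. The only difference is that you make explicit the coefficient bookkeeping and the evaluation $\sum_{\sigma}\langle\beta_{\sigma(2)},\beta_{\sigma(3)}\rangle\coth 2\beta_{\sigma(1)}\big(\sinh^2 2\beta_{\sigma(2)}\sinh^2 2\beta_{\sigma(3)}\big)^{-1}=-2Y$, which the paper leaves implicit.
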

    \begin{proof}
    	As a consequence of Lemma \ref{lemma: first and zeroth order} we have
    	\begin{equation} \label{eq: B+D2}
    	\frac{B + D_2}{64 \omega^6} = - \sum_{\sigma} \frac{3m(3m+1)\omega^{-2} \langle \beta_{\sigma(2)}, \beta_{\sigma(3)} \rangle}{4\sinh^2 \beta_{\sigma(2)} \sinh^2 \beta_{\sigma(3)}}  f_{\sigma(1)} - \sum_{\sigma} \frac{2 (3m+1)\omega^{-2} \langle \beta_{\sigma(2)}, \beta_{\sigma(3)} \rangle}{\sinh^2 2\beta_{\sigma(2)} \sinh^2 2\beta_{\sigma(3)}} f_{\sigma(1)} \,.
    	\end{equation}
    	We substitute $f_j$ given by \eqref{def: fs} into the first sum in the relation \eqref{eq: B+D2}, and we substitute $$f_j = -2\omega^2(3m \coth \beta_j + 2 \coth 2\beta_j)\,,$$ $j=1,2,3,$ in the second sum in the relation \eqref{eq: B+D2}.
    	By Lemma \ref{lemma: alt expression for X}, as well as its version with $\beta_j$ replaced with $2 \beta_j$, and by Lemma~\ref{lemma: alt expression for X and Y}, we can rearrange the right-hand side of \eqref{eq: B+D2} into the required form.
    \end{proof}
    
    \begin{lemma} \label{lemma: C}
    	The function $C$ given by \eqref{eq: C} can be rearranged as follows:
    	\begin{equation} \label{eq: delta of hIV}
    	\begin{aligned}
    	    &\frac{C}{32 \omega^6} =  - 9 m^2 (3m+1) \bigg( \sum_{j=1}^3 \frac{1}{\sinh^2 \beta_j} \bigg) X + 3 m (3m+1) \bigg(2 + \sum_{j=1}^3 \frac{1}{\cosh^2 \beta_j} \bigg) X\\ 
          	&\qquad - 12 m (3m+1) \bigg( \sum_{j=1}^3 \frac{1}{\sinh^2 \beta_j} \bigg) Y  + 32 (3m+1)Y \,.
    	\end{aligned}
    	\end{equation}
    \end{lemma}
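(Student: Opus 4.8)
The plan is a direct computation. Since $\beta_1 = \omega(\sqrt{2},0)$ and so on, one has $\langle\beta_i,\beta_i\rangle = 2\omega^2$ for each $i$, hence $\prod_{i=1}^3\langle\beta_i,\beta_i\rangle = 8\omega^6$. Substituting into \eqref{def: hIV} gives
$$h^{(\RN{4})} = -24m(3m+1)\omega^4 X - 32(3m+1)\omega^4 Y,$$
while \eqref{eq: uhat's} together with $\langle\beta_j,\beta_j\rangle=2\omega^2$ yields
$$\sum_{j=1}^3\widehat{u}_j = 4(3m+1)\omega^2\sum_{j=1}^3\frac{1}{\sinh^2\beta_j} - 4\omega^2\sum_{j=1}^3\frac{1}{\cosh^2\beta_j}.$$
With these two displays, $\big(\sum_j\widehat{u}_j\big)h^{(\RN{4})}$ is read off at once, and $\Delta(h^{(\RN{4})})$ follows from Lemma \ref{lemma: delta of X} applied to $\Delta(X)$ and $\Delta(Y)$ via \eqref{eq: delta of inv prod of X} and \eqref{eq: delta of inv prod of Y}.

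I would treat the $X$- and $Y$-parts of $h^{(\RN{4})}$ separately. For the $X$-part, collecting the coefficient of $\big(\sum_j\sinh^{-2}\beta_j\big)X$ combines the $-96m(3m+1)^2\omega^6$ coming from $\sum_j\widehat{u}_j$ with the $+96m(3m+1)\omega^6$ coming from $-\Delta(X)$ to produce $-288m^2(3m+1)\omega^6$; the $\cosh^{-2}\beta_j$ term and the constant $X$-term (from the $2$ in \eqref{eq: delta of inv prod of X}) survive untouched. After dividing by $32\omega^6$ this reproduces the first line of \eqref{eq: delta of hIV}.

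The only genuine obstacle is reconciling the $Y$-part, where \eqref{eq: delta of inv prod of Y} introduces $\sum_j\sinh^{-2}2\beta_j$, whereas $\big(\sum_j\widehat{u}_j\big)h^{(\RN{4})}$ carries only $\sinh^{-2}\beta_j$ and $\cosh^{-2}\beta_j$. To bring everything to a common form I would use the elementary identity
$$\frac{1}{\sinh^2 2\beta_j} = \frac14\bigg(\frac{1}{\sinh^2\beta_j} - \frac{1}{\cosh^2\beta_j}\bigg),$$
which follows from $\sinh 2\beta_j = 2\sinh\beta_j\cosh\beta_j$ and $\cosh^2 - \sinh^2 = 1$. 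The pleasant outcome — and the check that the coefficients in \eqref{def: hIV} were chosen correctly — is that the $\cosh^{-2}\beta_j$ contributions cancel exactly in the $Y$-part, leaving only $-384m(3m+1)\omega^6\big(\sum_j\sinh^{-2}\beta_j\big)Y$ and the constant $1024(3m+1)\omega^6 Y$. Dividing by $32\omega^6$ and adding the $X$- and $Y$-contributions then gives \eqref{eq: delta of hIV}. No structural input beyond Lemma \ref{lemma: delta of X} and the single hyperbolic identity above is required; the remainder is purely bookkeeping of coefficients.
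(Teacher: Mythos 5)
Your proposal is correct — all the intermediate coefficients ($-288m^2(3m+1)\omega^6$, $-384m(3m+1)\omega^6$, $1024(3m+1)\omega^6$, and the exact cancellation of the $\cosh^{-2}\beta_j$ terms in the $Y$-part) check out — and it follows essentially the same route as the paper: a direct computation combining Lemma \ref{lemma: delta of X} with the double-angle identity $\sinh 2\beta_j = 2\sinh\beta_j\cosh\beta_j$. The only cosmetic difference is the direction of conversion (you rewrite $\sinh^{-2}2\beta_j$ in terms of $\sinh^{-2}\beta_j$ and $\cosh^{-2}\beta_j$, while the paper rewrites the $\widehat{u}_j$-contributions in terms of $\sinh^{-2}2\beta_j$), which changes nothing of substance.
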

    \begin{proof}
    	By Lemma \ref{lemma: delta of X}, we have that
    	\begin{equation} \label{eq: delta of hIV rearranged}
    	\begin{aligned} 
    	    - \frac{\Delta(h^{(\RN{4})})}{32 \omega^6}  = 3m (3m+1)\bigg(2 + \sum_{j=1}^3 \frac{1}{\sinh^2 \beta_j} \bigg)X + 16 (3m+1)\bigg(2 + \sum_{j=1}^3 \frac{1}{\sinh^2 2 \beta_j} \bigg)Y  \,.
    	\end{aligned} 
    	\end{equation}
    	The product $\frac{1}{32 \omega^6} ( \sum_{j=1}^3 \widehat{u}_j ) h^{(\RN{4})}$ can be rearranged as
    	\begin{equation} \label{eq: hIV times sum of us}
    	\begin{aligned}
    	    &- \bigg( 3m \sum_{j=1}^3 \frac{1}{\sinh^2 \beta_j} + \sum_{j=1}^3 \frac{1}{\sinh^2 \beta_j} - \sum_{j=1}^3 \frac{1}{\cosh^2 \beta_j} \bigg) 3m(3m+1)X \\
    	    &- \bigg( 3m \sum_{j=1}^3 \frac{1}{\sinh^2 \beta_j} + 4 \sum_{j=1}^3 \frac{1}{\sinh^2 2\beta_j} \bigg) 4(3m+1)Y  \,.
    	\end{aligned}
    	\end{equation}
    	The equality \eqref{eq: delta of hIV} follows by multiplying \eqref{eq: hIV times sum of us} out and combining it with \eqref{eq: delta of hIV rearranged}.
    \end{proof}

    \begin{lemma} \label{lemma: D1}
    	The function $D_1$ given by \eqref{eq: D1} can be rearranged as
    	\begin{equation} \label{eq: zeroth order gs term}
	    	D_1 = 144m^2(3m+1)\omega^6\bigg( 2 + \sum_{j=1}^3 \frac{1}{\sinh^2 \beta_j} \bigg)X \,.
    	\end{equation}
    \end{lemma}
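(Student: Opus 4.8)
The plan is to substitute the explicit expressions for $g_i^{(\RN{2})}$ and $g_i^{(\RN{3})}$ into \eqref{eq: D1} and to expand each directional derivative by the chain rule as $\partial_{\beta_i}(v_j) = \langle \alpha_j, \beta_i \rangle v_j'$ and $\partial_{\beta_i}(u_j) = \langle \beta_j, \beta_i \rangle u_j'$, where $v_j'$ is proportional to $\coth \alpha_j / \sinh^2 \alpha_j$ and $u_j'$ to $\coth \beta_j / \sinh^2 \beta_j$. Since $\langle \alpha_i, \beta_i \rangle = 0$, the diagonal term $\partial_{\beta_i}(v_i)$ vanishes, so $\partial_{\beta_i}\big( \sum_{j \neq i} v_j \big) = \sum_{j \neq i} \langle \alpha_j, \beta_i \rangle v_j'$. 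Inserting the numerical inner products $\langle \alpha_j, \beta_i \rangle$ and $\langle \beta_j, \beta_i \rangle$ in the coordinates fixed in Section \ref{prelim} then presents $D_1$ as an explicit sum of products of hyperbolic functions of the $\alpha_i$ and $\beta_i$.

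First I would sort these terms by the product of coupling constants they carry. As $g_i^{(\RN{2})}$ and $v_j'$ each contribute a factor $m(m+1)$ while $g_i^{(\RN{3})}$ and $u_j'$ each contribute $m(3m+1)$, the terms fall into three families: a purely $\alpha$-dependent family with prefactor $m^2(m+1)^2$ (the products $g^{(\RN{2})} v'$), a mixed family with prefactor $m^2(m+1)(3m+1)$ (the products $g^{(\RN{2})} u'$ and $g^{(\RN{3})} v'$), and a purely $\beta$-dependent family with prefactor $m^2(3m+1)^2$ (the products $g^{(\RN{3})} u'$).

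The purely $\alpha$-dependent family I expect to vanish: the alternating signs of the $g_i^{(\RN{2})}$ combined with the values of $\langle \alpha_j, \beta_i \rangle$ assemble exactly the antisymmetric cyclic sum appearing in Lemma \ref{lemma: zeroth order identity 3}, which is zero. In the purely $\beta$-dependent family, the signs of the $g_i^{(\RN{3})}$ and the values of $\langle \beta_j, \beta_i \rangle$ reproduce, up to an overall factor, the left-hand side of Lemma \ref{lemma: zeroth order identity 2}, so that this family evaluates to $-72 m^2 (3m+1)^2 \omega^6 \big( 2 + \sum_{j=1}^3 \frac{1}{\sinh^2 \beta_j} \big) X$. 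The mixed family is the crux of the argument: each of its terms pairs one $\beta$-function with one $\alpha$-function, and must first be rewritten in purely $\beta$-dependent form using the relations $\alpha_i = \beta_j + \beta_k$ through Lemma \ref{lemma: first order identity}, exactly as in the proofs of Lemmas \ref{lemma: first and zeroth order} and \ref{lemma: B + D2}. One must check that the $\cosh^{-2}\beta$ contributions produced by that lemma cancel; the remaining $\beta$-functions are then collected by Lemmas \ref{lemma: alt expression for X} and \ref{lemma: zeroth order identity 2} into $216 m^2 (m+1)(3m+1) \omega^6 \big( 2 + \sum_{j=1}^3 \frac{1}{\sinh^2 \beta_j} \big) X$.

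Finally I would add the three families. All are proportional to $\big( 2 + \sum_{j=1}^3 \frac{1}{\sinh^2 \beta_j} \big) X$, and the coupling prefactors combine as $216 m^2 (m+1)(3m+1) - 72 m^2 (3m+1)^2 = 72 m^2 (3m+1)\big( 3(m+1) - (3m+1) \big) = 144 m^2 (3m+1)$, giving \eqref{eq: zeroth order gs term}. The principal difficulty is the mixed family: tracking the sign pattern of the $g_i$, carrying out the $\alpha \to \beta$ conversion, and confirming the cancellation of the spurious $\cosh^{-2}\beta$ terms. The collapse of the a priori degree-four polynomial in $m$ down to $m^2(3m+1)$ serves as a convenient check that the bookkeeping is correct.
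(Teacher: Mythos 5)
Your proposal is correct and follows essentially the same route as the paper's proof: the same decomposition into three coupling-constant families, with the $g^{(\RN{2})}v$-family annihilated by Lemma \ref{lemma: zeroth order identity 3}, both remaining families reduced to the cyclic sum of Lemma \ref{lemma: zeroth order identity 2}, and the same intermediate values $216m^2(m+1)(3m+1)\omega^6$ and $-72m^2(3m+1)^2\omega^6$ combining to $144m^2(3m+1)\omega^6$. One small correction: the paper converts the mixed family to purely $\beta$-dependent form via Lemma \ref{lemma: zeroth order identity 1} (and its rotated variants) rather than via Lemma \ref{lemma: first order identity} directly, and your anticipated cancellation of $\cosh^{-2}\beta$ contributions is vacuous, since $D_1$ contains no $\tanh$ or $\cosh^{-2}$ factors at all --- the functions $f_i$ and $\widehat{u}_i$, which are the source of such terms elsewhere in the argument, do not enter $D_1$.
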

    
    \begin{proof}
    	Note that Lemma \ref{lemma: zeroth order identity 3} can be restated in the following form:
    	\[
    	   \sum_{i=1}^3 g_i^{(\RN{2})} \partial_{\beta_i} \bigg( \sum_{j  \neq i}v_j \bigg) = 0  \,.         
    	\]
    	Consider the terms in $ - \sum_{i=1}^3 g_i^{(\RN{3})} \partial_{\beta_i} \bigg( \sum_{j  \neq i} v_j \bigg)$ that involve $v_1$, that is $j = 1$. They equal
    	\begin{equation}\label{eq: zeroth order gs term I}
    	    108 m^2 (m+1)(3m+1)\omega^8 \bigg( \frac{1}{\sinh^2 \beta_3} - \frac{1}{\sinh^2 \beta_2} \bigg) \frac{\coth \alpha_1}{\sinh^2 \alpha_1} \,.
    	\end{equation}
    	Now let us look at terms in $ - \sum_{i=1}^3 g_i^{(\RN{2})} \partial_{\beta_i} \bigg( \sum_{j  \neq i} u_j \bigg)$ that involve $g_1^{(\RN{2})}$. These terms are equal to
    	\begin{equation}\label{eq: zeroth order gs term II}
    	    108 m^2 (m+1)(3m+1) \omega^8 \bigg( \frac{\coth \beta_3}{\sinh^2 \beta_3} - \frac{\coth \beta_2}{\sinh^2 \beta_2} \bigg)\frac{1}{\sinh^2 \alpha_1} \,.
    	\end{equation}
    	By Lemma \ref{lemma: zeroth order identity 1} the sum of expressions \eqref{eq: zeroth order gs term I} and \eqref{eq: zeroth order gs term II} equals 
    	\begin{equation} \label{eq: zeroth order gs term I+II}
    	    108 m^2 (m+1)(3m+1)\omega^8 \frac{\coth \beta_3 - \coth \beta_2}{\sinh^2 \beta_2 \sinh^2 \beta_3} \,.
    	\end{equation}
    	We also have that $- \sum_{i=1}^3 g_i^{(\RN{3})} \partial_{\beta_i} \bigg( \sum_{j  \neq i} u_j \bigg)$ is equal to
    	\begin{equation} \label{eq: zeroth order gs term III}
    	     36 m^2 (3m+1)^2 \omega^8 \bigg( \frac{\coth \beta_2 -\coth \beta_3}{\sinh^2 \beta_2 \sinh^2 \beta_3} - \frac{\coth \beta_1 +\coth \beta_3}{\sinh^2 \beta_1 \sinh^2 \beta_3} + \frac{\coth \beta_2 -\coth \beta_1}{\sinh^2 \beta_1 \sinh^2 \beta_2} \bigg) \,.
    	\end{equation}
    	By adding expression \eqref{eq: zeroth order gs term I+II} and the first term of expression \eqref{eq: zeroth order gs term III} we get 
    	\begin{equation}
    	    72 m^2 (3m+1)\omega^8 \frac{\coth \beta_3 - \coth \beta_2}{\sinh^2 \beta_2 \sinh^2 \beta_3} \,.
    	\end{equation}
    	By grouping similarly the remaining terms in the left-hand side of identity \eqref{eq: zeroth order gs term} and by using variants of Lemma \ref{lemma: zeroth order identity 1} obtained by rotating $\beta$'s by $\pm \frac{\pi}{3}$ (see Remark \ref{rem: variants of identities}), we get that the left-hand side of \eqref{eq: zeroth order gs term} can be rearranged as
    	\begin{equation*} 
    	\begin{aligned}
    	    & 72 m^2 (3m+1)\omega^8 \bigg( \frac{\coth \beta_3 - \coth \beta_2}{\sinh^2 \beta_2 \sinh^2 \beta_3} + \frac{\coth \beta_1 +\coth \beta_3}{\sinh^2 \beta_1 \sinh^2 \beta_3} + \frac{\coth \beta_1 -\coth \beta_2}{\sinh^2 \beta_1 \sinh^2 \beta_2} \bigg) \\
            &=72 m^2 (3m+1) \omega^6	\sum_{\sigma} \langle \beta_{\sigma(2)}, \beta_{\sigma(3)} \rangle \bigg( \frac{1}{\sinh^2 \beta_{\sigma(2)}} + \frac{1}{\sinh^2 \beta_{\sigma(3)}} \bigg) \frac{\coth\beta_{\sigma(1)}}{\sinh^2 \beta_{\sigma(1)}}  \,.
    	\end{aligned}
    	\end{equation*}
    	The result follows by Lemma \ref{lemma: zeroth order identity 2}.
    \end{proof}

	\begin{proposition}
		The zero order terms satisfy $$(H  \mathcal{D})^{(0)}  = (\mathcal{D}  H_0)^{(0)} \,.$$
	\end{proposition}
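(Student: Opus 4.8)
The plan is to assemble the four lemmas that already evaluate the pieces of the zero-order discrepancy. By Lemma~\ref{lemma: zero order rearranged} we have $(H\mathcal{D} - \mathcal{D}H_0)^{(0)} = A + B + C + D = A + (B + D_2) + C + D_1$, so it suffices to add the explicit expressions furnished by Lemmas~\ref{lemma: A}, \ref{lemma: B + D2}, \ref{lemma: C} and~\ref{lemma: D1} and verify that they sum to zero. After this substitution, and up to the common factor $\omega^6$, every term is a linear combination of the quantities $X$, $Y$, $\big(\sum_{j}\sinh^{-2}\beta_j\big)X$, $\big(\sum_j \sinh^{-2}\beta_j\big)Y$ and $\big(\sum_j \cosh^{-2}\beta_j\big)X$, together with two residual trigonometric sums: the $\cosh$-sum $\sum_\sigma \langle \beta_{\sigma(2)}, \beta_{\sigma(3)} \rangle \coth\beta_{\sigma(1)}(\cosh^2\beta_{\sigma(2)}\cosh^2\beta_{\sigma(3)})^{-1}$ from Lemma~\ref{lemma: A}, and the $2\beta$-sum $\sum_\sigma \langle \beta_{\sigma(2)}, \beta_{\sigma(3)} \rangle \coth\beta_{\sigma(1)}(\sinh^2 2\beta_{\sigma(2)}\sinh^2 2\beta_{\sigma(3)})^{-1}$ from Lemma~\ref{lemma: B + D2}.

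First I would collect the coefficients of the five ``basis'' quantities by direct bookkeeping. The coefficients of $\big(\sum_j \sinh^{-2}\beta_j\big)X$ and of $\big(\sum_j \sinh^{-2}\beta_j\big)Y$ each vanish identically: the contributions of Lemmas~\ref{lemma: A}, \ref{lemma: C}, \ref{lemma: D1} to the former cancel, and those of Lemmas~\ref{lemma: A}, \ref{lemma: C} to the latter cancel. What survives from the explicit terms is a nonzero multiple of each of $X$, $Y$ and $\big(\sum_j \cosh^{-2}\beta_j\big)X$, so these must be cancelled by the two residual sums.

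The crux is thus to re-express the two residual sums in the same basis. For the $2\beta$-sum I would use $\sinh^2 2\beta = 4\sinh^2\beta\cosh^2\beta$ to write it with denominator $\sinh^2\beta_{\sigma(2)}\cosh^2\beta_{\sigma(2)}\sinh^2\beta_{\sigma(3)}\cosh^2\beta_{\sigma(3)}$, and then apply the elementary splitting $(\sinh^2 z\cosh^2 z)^{-1} = \sinh^{-2}z - \cosh^{-2}z$ to both factors. Expanding the product, the purely-$\sinh$ part is $-2X$ by Lemma~\ref{lemma: alt expression for X}, the purely-$\cosh$ part reproduces exactly the $\cosh$-sum (and so cancels against it), and the two mixed terms combine, via $\sinh^2(\beta_{\sigma(2)}+\beta_{\sigma(3)}) + \sinh^2(\beta_{\sigma(2)}-\beta_{\sigma(3)}) = 2(\sinh^2\beta_{\sigma(2)}\cosh^2\beta_{\sigma(3)} + \cosh^2\beta_{\sigma(2)}\sinh^2\beta_{\sigma(3)})$ with $\beta_{\sigma(2)}+\beta_{\sigma(3)} = \alpha_{\sigma(1)}$ and $\beta_{\sigma(2)}-\beta_{\sigma(3)} = \pm\beta_{\sigma(1)}$. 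This reduces the whole proposition to the single identity $\sum_\sigma \langle \beta_{\sigma(2)}, \beta_{\sigma(3)} \rangle \coth\beta_{\sigma(1)}\big((\sinh^2\beta_{\sigma(2)}\cosh^2\beta_{\sigma(3)})^{-1} + (\cosh^2\beta_{\sigma(2)}\sinh^2\beta_{\sigma(3)})^{-1}\big) = -4X + 2\big(\sum_j \cosh^{-2}\beta_j\big)X + 8Y$, which I would establish by the difference-of-cotangents technique of Lemmas~\ref{lemma: sum of products of coths} and~\ref{lemma: alt expression for X}, using the relations above together with the $\beta_j \to 2\beta_j$ version of Lemma~\ref{lemma: alt expression for X}.

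I expect this last identity to be the main obstacle: the two residual sums live on genuinely different hyperbolic denominators ($\cosh^2\cosh^2$ versus $\sinh^2 2\beta$), and forcing them into the $X,Y$ basis requires the mixed $\sinh$--$\cosh$ cross-terms to conspire into precisely the leftover $\cosh^{-2}$-sum produced by Lemma~\ref{lemma: C}. Once the identity is in hand, substituting it back makes the coefficients of $X$, $Y$ and $\big(\sum_j \cosh^{-2}\beta_j\big)X$ vanish simultaneously, giving $(H\mathcal{D})^{(0)} = (\mathcal{D}H_0)^{(0)}$ and thereby completing the proof of the intertwining relation.
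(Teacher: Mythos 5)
Your proposal is correct and matches the paper's own proof essentially step for step: the same assembly of Lemmas~\ref{lemma: zero order rearranged}--\ref{lemma: D1}, the same bookkeeping showing that the $\big(\sum_j\sinh^{-2}\beta_j\big)X$ and $\big(\sum_j\sinh^{-2}\beta_j\big)Y$ contributions cancel while multiples of $X$, $Y$, $\big(\sum_j\cosh^{-2}\beta_j\big)X$ and the two residual sums survive, and the same partial-fraction splitting $\sinh^{-2}2\beta_j=\tfrac14\big(\sinh^{-2}\beta_j-\cosh^{-2}\beta_j\big)$ of the $2\beta$-sum. The residual identity you isolate is precisely the paper's concluding step $E+F=0$ (equivalently $F=4X-8Y$), which the paper establishes with the same toolkit you indicate, namely identity \eqref{eq: first order identity 4} together with Lemmas~\ref{lemma: alt expression for X} and~\ref{lemma: alt expression for X and Y}.
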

    \begin{proof}
    	By Lemmas \ref{lemma: zero order rearranged} -- \ref{lemma: D1} we have 
    	\begin{equation} \label{eq: A+B+C+D}
    		\begin{aligned}
    		    &\frac{(H \mathcal{D} - \mathcal{D} H_0)^{(0)} }{48m(3m+1)\omega^6} =
	         	2 \bigg(\sum_{j=1}^3 \frac{1}{\cosh^2 \beta_j} \bigg) X - 2X + 8 Y - \sum_{\sigma} \frac{\langle \beta_{\sigma(2)}, \beta_{\sigma(3)} \rangle \coth \beta_{\sigma(1)}}{\cosh^2 \beta_{\sigma(2)} \cosh^2 \beta_{\sigma(3)}}  \\
	    		&\qquad\qquad\qquad\qquad 
+ \sum_{\sigma} \frac{16 \langle \beta_{\sigma(2)}, \beta_{\sigma(3)} \rangle \coth \beta_{\sigma(1)}}{\sinh^2 2\beta_{\sigma(2)} \sinh^2 2 \beta_{\sigma(3)} } \,.
    		\end{aligned}
    	\end{equation}
    	Let us replace $\sinh^{-2} 2\beta_{\sigma(2)} \sinh^{-2} 2\beta_{\sigma(3)}$ in the last sum in \eqref{eq: A+B+C+D} with
    	\[
    	    \frac{1}{16} \bigg( \frac{1}{\sinh^2 \beta_{\sigma(2)}} - \frac{1}{\cosh^2 \beta_{\sigma(2)}} \bigg) \bigg( \frac{1}{\sinh^2 \beta_{\sigma(3)}} - \frac{1}{\cosh^2 \beta_{\sigma(3)}} \bigg) \,.
    	\]
    	By using Lemma \ref{lemma: alt expression for X} the right-hand side of \eqref{eq: A+B+C+D} can be rewritten as
    	$E + F$, where
    	\begin{equation}
    		E = -4X + 8Y \,
    	\end{equation}
    	and
    	\begin{equation}
	    	\begin{aligned}
		    	F = 2\bigg(\sum_{j=1}^3 \frac{1}{\cosh^2 \beta_j} \bigg)X 
-\sum_{\sigma\in S_3}\frac{\langle \beta_{\sigma(2)}, \beta_{\sigma(3)}\rangle \coth \beta_{\sigma(1)}}{\sinh^2 \beta_{\sigma(2)} \cosh^2 \beta_{\sigma(3)}},
	    	\end{aligned}
    	\end{equation}
where the summation is over the symmetric group.
    	Let us collect terms with $\cosh^{-2} \beta_1$ in  $F$. We have
    	\begin{align}
    	    &\bigg(2X + \frac{\omega^2\coth \beta_2}{\sinh^2 \beta_3} - \frac{\omega^2\coth \beta_3 }{\sinh^2 \beta_2} \bigg)\frac{1}{\cosh^2 \beta_1}  \nonumber \\
    	    & = \bigg(\frac{\sinh(\beta_2 - \beta_1) + \sinh \beta_1 \cosh \beta_2}{\sinh \beta_3} + \frac{\sinh(\beta_1 + \beta_3) - \sinh \beta_1 \cosh \beta_3}{\sinh\beta_2 } \bigg)\frac{X}{\cosh^2 \beta_1}  \nonumber \\
    	    & =  \bigg(\frac{\sinh\beta_2}{\sinh \beta_3} + \frac{\sinh\beta_3}{\sinh \beta_2}\bigg)\frac{X}{\cosh \beta_1}  = \omega^2\bigg( \frac{1}{\sinh^2 \beta_2} + \frac{1}{\sinh^2 \beta_3} \bigg) \frac{\tanh \beta_1}{\sinh^2 \beta_1} \,. \label{eq: cosh^-2 term}
    	    % \\ &= \bigg( \frac{1}{\sinh^2 \beta_2} + \frac{1}{\sinh^2 \beta_3} \bigg) \frac{\tanh \beta_1}{\sinh^2 \beta_1}  \,.
    	\end{align} 
    	Note that by multiplying the relation \eqref{eq: first order identity 4} in Lemma \ref{lemma: first order identity} through by $\tanh \beta_1$ we rearrange \eqref{eq: cosh^-2 term} as
    	\begin{equation} \label{eq: cosh^-2 term rearranged}
    	    \bigg( \frac{1}{\sinh^2 \beta_2} + \frac{1}{\sinh^2 \beta_3} \bigg) \frac{\tanh \beta_1}{\sinh^2 \beta_1} = \frac{\tanh \beta_1}{\sinh^2 \beta_2 \sinh^2 \beta_3} - \frac{2 (\coth \beta_2 - \coth \beta_3) }{\sinh^2 \beta_1} = \frac{\tanh \beta_1}{\sinh^2 \beta_2 \sinh^2 \beta_3} + 2 \omega^{-2}X \,.
    	\end{equation}
    	Similarly, we collect and rearrange terms in $F$ with $\cosh^{-2} \beta_2$ and $\cosh^{-2} \beta_3$. Then by using variants of the identity \eqref{eq: cosh^-2 term rearranged} obtained by rotating $\beta$'s (see Remark \ref{rem: variants of identities}) we get 
    	\begin{align*}
    	    F 
    	   = \sum_{\sigma} \frac{\langle \beta_{\sigma(2)}, \beta_{\sigma(3)} \rangle \tanh \beta_{\sigma(1)}}{\sinh^2 \beta_{\sigma(2)} \sinh^2 \beta_{3}} + 6X = 4X - 8Y 
    	\end{align*}
    	by Lemma \ref{lemma: alt expression for X and Y}. Hence $E + F = 0$ as required.
    \end{proof}

\section{Rational limit}
\label{rational_limit}

    In the rational limit $\omega \to 0$ the operator $H_0$ takes the form
    \begin{equation}\label{eq: H_0 rational}
	    H_0^r = -\Delta + \sum_{i=1}^{3} \bigg( \frac{m (m + 1) \langle \tilde \alpha_i, \tilde  \alpha_i \rangle }{\langle\tilde \alpha_i, x \rangle^2}  
	    + \frac{3m (3 m + 1) \langle \tilde \beta_i, \tilde \beta_i \rangle}{\langle \tilde  \beta_i, x \rangle^2} \bigg), 
    \end{equation}
where vectors $\tilde \alpha_i$, $\tilde \beta_i$ can be taken as the original vectors $\alpha_i, \beta_i$ with any fixed non-zero value of $\omega$. 
    The Hamiltonian $H$ in the rational limit becomes
    \begin{equation}\label{eq: H rational}
    	H^r = -\Delta + \sum_{i=1}^{3} \bigg( \frac{m (m + 1) \langle \tilde \alpha_i, \tilde \alpha_i \rangle }{\langle \tilde \alpha_i, x \rangle^2}  
    	+ \frac{(3 m + 1) (3m + 2) \langle \tilde \beta_i, \tilde \beta_i \rangle}{\langle \tilde \beta_i, x \rangle^2} \bigg).
    \end{equation}
    And the explicit form of the intertwining operator $\mathcal D$ in the rational limit is the operator ${\mathcal D}^r = \lim_{\omega\to 0} \omega^{-3} \mathcal D$ which takes the form 
	\begin{equation} \label{eq: D rational}
		\begin{aligned}
			&\mathcal{D}^r = \partial_{\tilde \beta_{1}} \partial_{\tilde \beta_{2}} \partial_{\tilde \beta_{3}} - \sum_{\sigma} \frac{(3m+1)\langle \tilde \beta_{\sigma(1)}, \tilde \beta_{\sigma(1)} \rangle }{\langle \tilde \beta_{\sigma(1)}, x \rangle} \partial_{\tilde \beta_{\sigma(2)}} \partial_{\tilde \beta_{\sigma(3)}} + \sum_{\sigma}  \frac{ (3m+1)^2\langle \tilde  \beta_{\sigma(1)}, \tilde \beta_{\sigma(1)} \rangle^2 }{{\langle \tilde \beta_{\sigma(2)}, x \rangle} {\langle \tilde \beta_{\sigma(3)}, x \rangle}} \partial_{\tilde \beta_{\sigma(1)}} \\ 
			&\qquad- \sum_{i=1}^3 \bigg(  \frac{m(m+1)\prod_{k \neq i} \langle \tilde \alpha_{i}, \tilde \beta_k \rangle}{\langle \tilde \alpha_{i}, x \rangle^2} +  \frac{3m(3m+1) \prod_{k \neq i} \langle \tilde \beta_{i}, \tilde \beta_k \rangle}{\langle \tilde \beta_{i}, x \rangle^2} \bigg)  \partial_{\tilde \beta_{i}} \\ 
			&\qquad + \sum_{i=1}^3  \frac{9m(m+1)(3m+1) \prod_{k = 1}^3 \langle \tilde \beta_i, \tilde \beta_k \rangle}{\langle \tilde  \beta_i, x \rangle^3} 
			+ \sum_{i=1}^3   \frac{m(m+1)(3m+1) \langle \tilde \beta_i, \tilde \beta_i \rangle \prod_{k \neq i} \langle \tilde  \alpha_i, \tilde \beta_k \rangle}{\langle \tilde \beta_i, x \rangle \langle \tilde \alpha_i, x \rangle^2}  \\
			&\qquad -  \frac{3 (3 m + 1) (6 m^2 + 6 m + 1)\prod_{i = 1}^3 \langle \tilde \beta_i, \tilde \beta_i \rangle}{2 \langle \tilde \beta_{1}, x \rangle \langle \tilde  \beta_{2}, x \rangle \langle \tilde \beta_{3}, x \rangle}.
		\end{aligned}
	\end{equation}

	\begin{theorem}\label{rat_lim_th}
		The operators defined by formulas \eqref{eq: H_0 rational}, \eqref{eq: H rational} and \eqref{eq: D rational} satisfy the intertwining relation
		\begin{equation}
			H^r \mathcal{D}^r = \mathcal{D}^r H_0^r.
		\end{equation}
	\end{theorem}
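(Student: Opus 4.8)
The plan is to obtain the rational relation as the limit $\omega \to 0$ of the trigonometric relation \eqref{eq: intertwining relation} of Theorem~\ref{thm: intertwining relation}, after the rescaling $\mathcal{D}^r = \lim_{\omega \to 0}\omega^{-3}\mathcal{D}$. I would write the vectors as $\alpha_i = \omega\,\tilde\alpha_i$ and $\beta_i = \omega\,\tilde\beta_i$ with $\tilde\alpha_i,\tilde\beta_i$ fixed, so that $\partial_{\beta_i} = \omega\,\partial_{\tilde\beta_i}$ and $\langle \beta_i,\beta_j\rangle = \omega^2\langle\tilde\beta_i,\tilde\beta_j\rangle$, and record the elementary asymptotics $\omega\coth(\omega z) = z^{-1}+O(\omega^2)$, $\omega^{-1}\tanh(\omega z) = z + O(\omega^2)$, $\sinh^{-2}(\omega z) = \omega^{-2}z^{-2}(1+O(\omega^2))$, $\cosh^{-2}(\omega z)\to 1$, together with $\omega X \to (\prod_i\langle\tilde\beta_i,x\rangle)^{-1}$ and $\omega Y \to \tfrac18(\prod_i\langle\tilde\beta_i,x\rangle)^{-1}$ read off from \eqref{eq: X}, \eqref{eq: Y}. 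Dividing \eqref{eq: intertwining relation} by $\omega^3$ gives $H\,(\omega^{-3}\mathcal{D}) = (\omega^{-3}\mathcal{D})\,H_0$ for every $\omega\neq 0$, so the whole task reduces to checking that the coefficients (and their $x$-derivatives) of the three operators converge as $\omega\to 0$ to those of $H^r$, $H_0^r$ and $\mathcal{D}^r$.

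For the Hamiltonians this is immediate: $v_i$ and $u_i$ tend to the rational potentials in \eqref{eq: H_0 rational}, while in $\widetilde u_i$ the $\cosh^{-2}$ term carries a factor $\langle\beta_i,\beta_i\rangle = \omega^2\langle\tilde\beta_i,\tilde\beta_i\rangle$ and hence drops out, leaving exactly the $(3m+1)(3m+2)$ potential of \eqref{eq: H rational}. For $\omega^{-3}\mathcal{D}$ I would proceed degree by degree in \eqref{eq: D}. The crucial bookkeeping is that $f_i = O(\omega)$, each $g_i = O(\omega^2)$ and $h = O(\omega^3)$, so that the powers of $\omega$ from the $\partial_{\beta}$'s in each monomial combine with $\omega^{-3}$ to give a finite limit. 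Concretely $f_j \to -(3m+1)\omega\langle\tilde\beta_j,\tilde\beta_j\rangle\langle\tilde\beta_j,x\rangle^{-1}$ (the $\tanh$ part being $O(\omega^3)$ and dropping out) produces the second-order coefficients; the three pieces $g_i^{(\RN{1})}, g_i^{(\RN{2})}, g_i^{(\RN{3})}$ produce the three distinct first-order rational coefficients in \eqref{eq: D rational} with no interference, and since all short roots share a common length one has $\langle\tilde\beta_{\sigma(2)},\tilde\beta_{\sigma(2)}\rangle\langle\tilde\beta_{\sigma(3)},\tilde\beta_{\sigma(3)}\rangle = \langle\tilde\beta_{\sigma(1)},\tilde\beta_{\sigma(1)}\rangle^2$, matching the $g_i^{(\RN{1})}$ term.

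I expect the zero-order coefficient to be the only delicate point, since the clean rational constants in the last three lines of \eqref{eq: D rational} emerge only after combining several pieces of $h$. Specifically, $\omega^{-3}h^{(\RN{2})}$ splits into a part built from $g_i^{(\RN{2})}$, giving directly the $m(m+1)(3m+1)$ sum, and a part built from $g_i^{(\RN{3})}$ with limiting coefficient $3m(3m+1)^2$; the latter must be added to $\omega^{-3}h^{(\RN{3})}$, whose limiting coefficient is $6m(3m+1)$, to produce $3m(3m+1)(3m+3) = 9m(m+1)(3m+1)$ for the $\langle\tilde\beta_i,x\rangle^{-3}$ terms. Likewise the final $\langle\tilde\beta_1,x\rangle^{-1}\langle\tilde\beta_2,x\rangle^{-1}\langle\tilde\beta_3,x\rangle^{-1}$ term is assembled from $\omega^{-3}h^{(\RN{1})}\to -(3m+1)^3$ and, using the $\omega X$ and $\omega Y$ limits above, $\omega^{-3}h^{(\RN{4})}\to -\tfrac12(3m+1)(6m+1)$, whose sum is $-\tfrac32(3m+1)(6m^2+6m+1)$. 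As no leading-order cancellations occur, only these leading asymptotics are needed.

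Finally, all coefficient functions are explicit products of the elementary functions above with powers of $\omega$, so on any compact subset of the complement of the hyperplanes $\langle\tilde\gamma,x\rangle = 0$ they converge locally uniformly together with all their $x$-derivatives as $\omega\to 0$. Hence the operator compositions $H\,(\omega^{-3}\mathcal{D})$ and $(\omega^{-3}\mathcal{D})\,H_0$ converge coefficientwise to $H^r\mathcal{D}^r$ and $\mathcal{D}^r H_0^r$, and passing to the limit in $H\,(\omega^{-3}\mathcal{D}) = (\omega^{-3}\mathcal{D})\,H_0$ yields $H^r\mathcal{D}^r = \mathcal{D}^r H_0^r$, as required.
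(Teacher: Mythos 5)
Your proposal is correct and is essentially the argument the paper intends: the paper defines $\mathcal{D}^r=\lim_{\omega\to 0}\omega^{-3}\mathcal{D}$ and leaves the passage to the limit in the trigonometric relation \eqref{eq: intertwining relation} implicit, which is exactly what you carry out. Your coefficient bookkeeping (in particular $\omega^{-3}h^{(\RN{1})}+\omega^{-3}h^{(\RN{4})}\to -\tfrac32(3m+1)(6m^2+6m+1)$ and $3m(3m+1)^2+6m(3m+1)=9m(m+1)(3m+1)$, up to the common factors) reproduces the constants in \eqref{eq: D rational}, so the verification is complete.
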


Similarly to the trigonometric case we derive quantum integrals in the factorised form, following \cite{Ch}:
$$
[H^r, {\mathcal D^r {D^r}^*}]=0, \quad [H_0^r, {{\mathcal D}^r}^* \mathcal D^r]=0.
$$

\begin{remark}
The operator $H^r$ is the ordinary $G_2$ Calogero--Moser operator with multiplicities $m$, $3m+1$. Therefore the operator $D^r$ is the rational version of the corresponding Opdam's shift operator in a suitable gauge for the $G_2$-orbit containing  $\tilde\beta_i$ \cite{Opd}.  Hence the operator $D^r$ can also be constructed via the product of the corresponding (rational) Dunkl operators $\nabla_{\tilde \beta_1} \nabla_{\tilde \beta_2}\nabla_{\tilde \beta_3}$ as it was demonstrated by Heckman for any root system in \cite{HeckRat}. 
\end{remark}

\section{Concluding remarks}
\label{concluderem}

We established integrability of the CMS system associated with the collection of vectors $AG_2$ and an arbitrary value of the parameter $m$. This configuration of vectors is interesting as it is an example of a slightly weakened notion of a root system. Indeed, the configuration is invariant under the Weyl group $G_2$ and the root vectors belong to the invariant lattice but the crystallgraphic condition between the root vectors is no longer satisfied. This makes it harder to study the corresponding CMS  system as, for instance, we could not define (trigonometric) Dunkl operators with good properties for the model $AG_2$. Nonetheless integrability property appears to be present.

There are a number of further questions about this system. Firstly, it is natural to consider elliptic version and investigate its integrability. 
Secondly, it would be interesting to clarify whether the classical analogue of the sytem is integrable. In the case of root system $G_2$ Lax pairs for the corresponding CMS model were constructed in \cite{HP}, \cite{BS} (see also \cite{FM}), which may be a starting point for approaching classical $AG_2$ CMS system. Another approach could be to investigate classical version of the quantum integral ${\mathcal D \mathcal D}^*$. On the other hand let us consider the operator $\hbar^2 H$ and take the limit $\hbar\to 0, m\to \infty$ such that $\hbar m \to const$. It is easy to see that the resulting classical Hamiltonian is the ordinary $G_2$ Hamiltonian. This suggests that the classical analogue of $H$ where potential is the same as in the quantum case may be non-integrable. 
%, where the Hamiltonians depend on one rather than two coupling parameters as is also the case for the considerations in the current work.  
%We also note that a general approach to Lax pairs from Dunkl operators was proposed recently in \cite{CLax}. %However, we could not define (trigonometric) Dunkl operators with good properties for the model $AG_2$.

Thirdly, it would be interesting to investigate bispectrality of the considered Hamiltonian $H$. More specifically, existence of the intertwining operator  $\mathcal D$ implies that for integer $m$ the Hamiltonian $H$ has Baker--Akhiezer eigenfunction 
$$
\psi(k,x) =  \mathcal D \phi(k,x), \quad  H \psi (k,x) = (k_1^2+k_2^2) \psi(k,x),
$$
where $\phi(k,x)$ is the Baker--Akhiezer function for $G_2$ CMS system \cite{CV, CSV},  and $k=(k_1, k_2)$ is the spectral parameter. Bispectral dual Hamiltonian, if exists, would be an operator of Ruijsenaars--Macdonald type acting in $k$-variables of $\psi(k,x)$ so that $\psi(k,x)$ is its eigenfunction. In the root system case and for type $A$ deformed CMS system such type of bispectrality is established in \cite{Chbisp} (see also \cite{Fbisp} for other examples).

We hope to return to some of these questions soon.

\end{document}